\DeclareSymbolFont{AMSb}{U}{msb}{m}{n}
\documentclass[reqno,centertags,11pt,a4paper,noamsfonts]{amsart}
\pdfoutput=1
\usepackage[svgnames]{xcolor}
\usepackage[T1]{fontenc}
\usepackage{lmodern}
\usepackage{amsmath,amssymb,mathtools,mathrsfs}
\usepackage{enumitem}
\usepackage{xcolor}
\usepackage{hyperref}

\usepackage[margin={3cm},marginparwidth={2.5cm}]{geometry}

\usepackage[
  backend=biber,
  style=numeric,
  sorting=none,
  giveninits=true,
  doi=true,
  maxbibnames=6,
  isbn=false,
  url=true,
  eprint=false,
  sorting=nyt
]{biblatex}

\DeclareFieldFormat[article]{volume}{\mkbibbold{#1}}
\DeclareFieldFormat[article]{number}{\mkbibparens{#1}}

\renewbibmacro*{volume+number+eid}{%
  \printfield{volume}%
  \setunit*{\addspace}%
  \printfield{number}%
  \setunit{\addcomma\space}%
  \printfield{eid}%
}
\renewbibmacro{in:}{%
  \ifentrytype{article}
    {}
    {\printtext{\bibstring{in}\intitlepunct}}}

\AtEveryBibitem{%
  \clearfield{note}%
}

\renewbibmacro*{doi+eprint+url}{%
  \printfield{doi}%
  \newunit\newblock
  \iffieldundef{doi}
    {\usebibmacro{url+urldate}}
    {}%
}

\newcommand{\C}{\mathcal{C}}
\newcommand{\1}{\mathbf{1}}
\newcommand{\ii}{\imath}
\newcommand{\dd}{\mathrm{d}}

\renewcommand{\Re}{\mathrm{Re}\,}

\definecolor{light_gray}{gray}{0.75}
\definecolor{lighter_gray}{gray}{0.5}
\colorlet{light_blue}{blue!20}
\definecolor{dark_green}{rgb}{0.0, 0.6, 0.0}
\definecolor{royal_blue}{rgb}{0.0, 0.22, 0.66}
\definecolor{salmon}{rgb}{1.0, 0.55, 0.41}
\definecolor{gold}{rgb}{0.8, 0.63, 0.21}
\definecolor{navy_blue}{rgb}{0.0, 0.0, 0.5}
\definecolor{crimson}{rgb}{0.79, 0.0, 0.09}
\definecolor{amethyst}{rgb}{0.6, 0.4, 0.8}
\definecolor{alizarin}{rgb}{0.82, 0.1, 0.26}
\definecolor{amaranth}{rgb}{0.9, 0.17, 0.31}
\definecolor{azure}{rgb}{0.0, 0.5, 1.0}
\definecolor{canaryyellow}{rgb}{0.82, 0.41, 0.12}
\definecolor{carrotorange}{rgb}{0.8, 0.33, 0.0}
\definecolor{cadmiumgreen}{rgb}{0.0, 0.42, 0.24}
\definecolor{copper}{rgb}{0.72, 0.45, 0.2}
\definecolor{aqua}{rgb}{0.5, 1.0, 0.83}
\definecolor{awesome}{rgb}{1.0, 0.13, 0.32}
\definecolor{candyapplered}{rgb}{1.0, 0.03, 0.0}
\definecolor{caribbeangreen}{rgb}{0.0, 0.8, 0.6}
\definecolor{indigo}{rgb}{0.0, 0.25, 0.42}

\DeclareMathOperator{\weaklystar}{\rightharpoonup\kern-2.2ex ^* \, \,}

\def\XXint#1#2#3{{\setbox0=\hbox{$#1{#2#3}{\int}$ }
\vcenter{\hbox{$#2#3$ }}\kern-.6\wd0}}

\newcommand{\R}{\mathbb R}
\newcommand{\N}{\mathbb N}
\newcommand{\Z}{\mathbb Z}
\renewcommand{\C}{\mathbb C}

\newcommand\norm[1]{\lVert #1 \rVert}

\DeclareMathOperator{\rank}{rank}

\DeclareMathOperator{\Tr}{Tr}

\newcommand{\mL}{\mathrm{L}}
\renewcommand{\phi}{\varphi}

\newcommand{\mH}{\mathrm{H}}

\newcommand{\T}{\mathbb{T}}
\newcommand{\ee}{\mathrm{e}}

\renewcommand{\geq}{\geqslant}\renewcommand{\leq}{\leqslant}

\newcommand\cD{\mathcal{D}}

\newcommand\cS{\mathcal{S}}

\newcommand\cY{\mathcal{Y}}

\DeclareMathOperator{\tr}{Tr}
\DeclareMathOperator{\ran}{Ran}

\usepackage{dsfont} 
\def\1{{\mathds{1}}}

\newcommand\inner[1]{\langle #1 \rangle}

\newcommand{\acs}[1]{\left\{ #1 \right\}}                       
\newcommand{\ab}[1]{\left|#1\right|}                                
\newcommand{\nor}[2]{ \left| \! \left| #1 \right| \! \right|_{#2} } 

\newcommand{\mysup}[1]{\underset{\substack{#1}}{\text{\normalfont{sup}}}\;} 

\newcommand{\ketbra}[2]{\left| #1 \right\rangle \left\langle #2 \right|}

\newcommand{\rhor}[1]{\rho^\R_{#1}}
\newcommand{\rhot}[1]{\rho^\T_{#1}}

\theoremstyle{plain}
\newtheorem{theorem}{Theorem}[section]
\newtheorem{proposition}[theorem]{Proposition}

\newtheorem{corollary}[theorem]{Corollary}
\newtheorem{lemma}[theorem]{Lemma}

\newtheorem*{theorem*}{Theorem}

\theoremstyle{definition}

\newtheorem{remark}[theorem]{Remark}
\newtheorem*{remark*}{Remark}

\hypersetup{
	linktoc=all,
	bookmarksdepth=3,
	colorlinks=true,
	linkcolor=Green,citecolor=Orange,urlcolor=DarkBlue,
	pdftitle={Sharp one-dimensional gradient-correct Lieb-Thirring inequalities and inverse Kohn-Sham theory},
	pdfauthor={Thiago Carvalho Corso, Louis Garrigue}, 
	pdfsubject={},
	pdfkeywords={}} 
\begin{document}
\numberwithin{table}{section}

\title[A sharp gradient-corrected LT inequality in one dimension and applications]{A sharp gradient-corrected Lieb--Thirring inequality in one dimension with applications to density functional theory}

\author[T.~Carvalho~Corso]{Thiago Carvalho Corso*}
\address[T.~Carvalho Corso]{Institute of Applied Analysis and Numerical Simulation, University of Stuttgart, Pfaffenwaldring 57, 70569 Stuttgart, Germany}
\email{thiago.carvalho-corso@mathematik.uni-stuttgart.de}

\author[L.~Garrigue]{Louis Garrigue*}
\address[L.~Garrigue]{Laboratoire ``analyse géométrie modélisation'', CY Cergy Paris Université, 95302 Cergy-Pontoise, France}
\email{louis.garrigue@cyu.fr}

\keywords{Lieb--Thirring inequality, density functional theory, inverse Kohn--Sham, Thomas--Fermi theory, high-density limit}
\subjclass[2020]{Primary: 35P15
 Secondary: 35R30, 81Q10, 81V74}

\date{\today}
\thanks{\emph{Funding information}:  DFG -- Project-ID 572811220.\\[1ex]
\textcopyright 2026 by the authors. Faithful reproduction of this article, in its entirety, by any means is permitted for noncommercial purposes.}
\begin{abstract}
	We first prove a sharp Lieb-Thirring inequality with semi-classical constant and gradient correction in one dimension. In the periodic setting, the corresponding inequality also includes a parity dependent shell-correction term, which appears to be new. Moreover, we also characterize the optimizers of these inequalities. In particular, the optimizing density profiles form a two-dimensional manifold, which we call here the Poincar\'e density manifold. 

We then apply these results to study the Thomas--Fermi--von Weizs\"acker (TFW) approximation of the kinetic energy in Density Functional Theory. More precisely, we first obtain an explicit and non-asymptotic bound on the global and local error made by the TFW approximation. These bounds are controlled by the distance of the normalized density profile from the Poincar\'e density manifold times the number of particles. In particular, these bounds show that the TFW approximation is exact on the Poincar\'e manifold. We then show that, in the periodic setting, the distance between the Kohn-Sham potential and the inverse TFW potential in the natural dual Sobolev space can also be controlled by explicit density dependent constants times the distance to the Poincar\'e manifold. As a corollary, we prove that for a fixed density profile, the Kohn-Sham potential converges towards the inverse Thomas--Fermi potential in the high-density limit $N \rightarrow \infty$ with asymptotic rate of order $N^{-2}$. This rate is shown to be optimal.
\end{abstract}
\setcounter{tocdepth}{1}
\maketitle
\setcounter{secnumdepth}{3}

\section{Introduction}
\label{sec:introduction}

\subsection{Motivation}

Density functional theory (DFT) is one of the most widely used methods for the calculation of electronic structure properties of molecules and materials. In its ground-state formulation, DFT seeks to replace the electronic many-body problem by a variational problem for the single-particle density.  While this reformulation is formally exact, its practical value depends on finding accurate approximations which are explicit, or at least computationally tractable, in terms of the density \cite{HohKoh64,Levy79,ParrYang89,Becke14}. This issue is particularly pronounced in orbital-free DFT, where density-explicit approximations are needed not only for the exchange-correlation energy but also for the kinetic energy \cite{WangTeter92,WangGovindCarter98,MiLuoTrickeyPavanello23,OkunCancioBurke24}. 

The simplest such approximation is provided by the Thomas--Fermi (TF) theory, which approximates the exact kinetic energy by the local kinetic energy of a homogeneous Fermi gas evaluated at the local density.  Despite its simple local form, Thomas--Fermi theory has a firm mathematical foundation in several asymptotic regimes \cite{LiebSimon73,LiebSimon77,Lieb81,NamPortmannSolovej2012,FouLewSol18,GottschlingNam2018}. A closely related but non-asymptotic expression of the same kinetic energy principle is provided by the Lieb--Thirring (LT) inequality \cite{LiebThirring75,LiebThirring76}. Indeed, in its kinetic formulation, the LT inequality gives a universal lower bound on the fermionic kinetic energy in terms of the local power of the density appearing in the TF theory. In particular, the LT inequality provides a universal density-explicit constraint and consequently a useful benchmark for density-explicit approximations of the kinetic energy.

However, determining the optimal constant in the LT inequality, and therefore the strongest possible lower bound of this form, is a longstanding problem in mathematical physics. The Lieb--Thirring conjecture predicts the optimal constant to come from either the one-particle regime or the semiclassical regime, with the latter corresponding to the Thomas--Fermi constant. Although substantial progress has been made \cite{Weidl96,IlyinLaptevZelik20,FrankHundertmarkJexNam21,FrankGontierLewin21,HundertmarkKunstmannRiedVugalter23,CorsoRied25,Cor26e,Read26}, this problem remains open in the physically most relevant setting of three-dimensional systems \cite{Frank21,Schimmer22,Nam23,FrankLaptevWeidl22}. Furthermore, even with the optimal constant, the classical Lieb--Thirring inequality provides a purely local lower bound and is therefore insensitive to the spatial scale on which the density varies. Indeed, equimeasurable density profiles give the same LT lower bound even if one varies much more rapidly than the other. In contrast, derivative-based quantities can distinguish such profiles and thus provide a natural way to incorporate information about spatial inhomogeneity. This motivates refinements of local kinetic-energy bounds with gradient-correction terms.

Such gradient corrections play an important role in practical density-functional approximations. The simplest example is the von Weizs\"acker term \cite{vonWeizsacker35,BenguriaBrezisLieb81,Lieb81}, while more general density-gradient corrections form the basis of semilocal approximations such as the generalized gradient approximations \cite{Becke88,PerdewWang91,PBE96}. Beyond these smooth corrections, however, finite systems also exhibit contributions arising from the discreteness of the spectrum, commonly referred to as shell or finite-size effects \cite{YannouleasLandman93,YannouleasLandman13}. For instance, recent work by Burke and collaborators \cite{ElliottLeeCangiBurke08,RibeiroBurke17,RibeiroBurke18,Burke20,Burke20Faraday} has clarified how semiclassical expansions, gradient expansions, and finite-system shell corrections combine to determine the next-order structure beyond the leading local density approximation of the kinetic energy. Nevertheless, from a mathematical perspective, our understanding is much less complete at this level: while the leading local-density approximation has by now received a substantial rigorous foundation even for interacting systems \cite{LewinLiebSeiringer20,LLSUniversal,Mietzsch20}, rigorous results that identify and control the next-order corrections remain scarce. In particular, to the best of our knowledge, there are currently no rigorous density-explicit estimates that simultaneously capture the gradient and finite-size shell corrections to local density approximations.

It is therefore the purpose of this work to derive such estimates for the kinetic energy in the simplified setting of one-dimensional fermions. 

\subsection{Main contributions} The main contributions of this work can be summarized as follows:
\begin{enumerate}
\item We prove sharp gradient-corrected Lieb--Thirring inequalities on the real line and on the one-dimensional torus. On the torus, the inequality also includes a shell term. We also explicitly characterize all density matrices and the corresponding single-particle densities that saturate these inequalities.

\item Using this inequality, we derive non-asymptotic error bounds for the Thomas--Fermi--von Weizs\"acker (TFW) approximation to the exact kinetic-energy functional. Moreover, we obtain corresponding local estimates comparing the exact kinetic-energy density with its TFW approximation. These estimates are controlled by the distance of the single-particle density from the manifold of optimizers of the gradient-corrected LT inequality. In particular, they show that the TFW approximation is \emph{exact} on the manifolds of optimizers for any number of particles. 

\item We further use these results to obtain error bounds between the exact inverse Kohn--Sham potential and its TFW counterpart. As a consequence, we prove that, for a fixed normalized density profile, the appropriately rescaled inverse Kohn--Sham potential converges to its Thomas--Fermi counterpart at rate $N^{-2}$ as the particle number $N\rightarrow\infty$. This rate is also shown to be optimal.
\end{enumerate}

Taken together, these results provide a rigorous quantitative justification of an explicit gradient-corrected approximation in one-dimensional density-functional theory.

\section{Main results}

In this section we first state the main results of the paper in detail. We then present some applications and outline the main arguments in the proofs. Finally, we compare these results with related results in the literature.

Throughout the paper, for any $N \in \R_+$ and $\Omega = \R$ or $\Omega = \T$, we define the set of one-particle density-matrices
\begin{align}\label{eq:def_dm} 
	\begin{aligned}
	\mathfrak{S}(\Omega) &:= \acs{\gamma \in \mathcal{B}(L^2(\Omega)) \;|\; 0 \leq \gamma \leq 1, \quad \tr \gamma < \infty, \quad \tr (-\partial_x^2) \gamma < +\infty }, \\
	\mathfrak{S}_N(\Omega) &:= \acs{\gamma \in \mathfrak{S}(\Omega) \;|\; \tr \gamma = N}
	\end{aligned}
\end{align}
and moreover,
\[
    T[\gamma]:=\Tr(-\partial_x^2\gamma),
    \qquad
    G(\rho):=\int_\Omega \ab{(\sqrt\rho)'}^2 = \frac14 \int_\Omega \frac{|\rho'|^2}{\rho},
    \qquad
    \text{and}\quad c_{\rm TF}:=\frac{\pi^2}{3}.
\]

\subsection{Gradient-corrected Lieb--Thirring inequalities}

Our first main result is a sharp Lieb-Thirring inequality with gradient correction in $\R$.

\begin{theorem}[Sharp gradient-corrected Lieb--Thirring inequality on the line]
\label{thm:LT_line}
Let $N \in \N$ and $\gamma \in \mathfrak{S}_N(\R)$ be a one-particle density matrix on $L^2(\R)$, and define $\rho := \rho_\gamma$. Then
\begin{align}
    T[\gamma]+3G(\rho)
    \geq
    \frac{\pi^2}{3}\left(1+\frac2{N^2}\right)
    \int_\R\rho^3.
    \label{eq:LT_line_main}
\end{align}
Moreover, equality is attained if and only if there exists $\mu > 0$ such that up to translation,
\begin{align}
	\frac{\rho(x)}{N} = \rhor{\mu}(x) := \frac{1}{\pi} \frac{\mu}{\cosh(\mu x)} \label{eq:rho line}
\end{align}
and
\begin{align} 
\gamma = \sum_{j=-\frac{N-1}{2}}^{\frac{N-1}{2}} \ketbra{\psi_j}{\psi_j}, \quad \psi_j(x) := \frac{1}{\pi^{\frac12}} \frac{\mu ^{\frac12}}{\sqrt{\cosh(\mu x)}} \left(\frac{\ee^{-\frac{\mu}{2} x} + \ii \ee^{\frac{\mu}{2}x}}{\sqrt{2 \cosh(\mu x)}}\right)^{4j}, \label{eq:orbitals line}
\end{align}
with the convention that the sum runs over half-integers for $N$ even and integers for $N$ odd. 
Equivalently,
\begin{align}
    \gamma(x,y) = \frac{\mu}{\pi} \frac{1}{\sqrt{\cosh(\mu x)}} \frac{\sin(2N(\arctan(\ee^{\mu x}) - \arctan(\ee^{\mu y}))}{\sin(2 (\arctan(\ee^{\mu x}) - \arctan(\ee^{\mu y})))} \frac{1}{\sqrt{\cosh(\mu y)}}. \label{eq:gamma line}
\end{align}
\end{theorem}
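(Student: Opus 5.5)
The plan is to \emph{straighten the density} by a change of variables, which turns $\gamma$ into a density matrix with \emph{constant} density on an interval. Given $\rho=\rho_\gamma$ with $\int\rho=N$, set
\[
s(x):=\frac{\pi}{N}\int_{-\infty}^x\rho,
\]
a nondecreasing bijection onto $(0,\pi)$, assuming for the moment $\rho>0$ and smooth, to be removed by approximation at the end. Define the unitary $U\colon L^2(\R)\to L^2(0,\pi)$ by $(U\psi)(s(x))\sqrt{s'(x)}=\psi(x)$, and put $\tilde\gamma:=U\gamma U^*$. Then $0\leq\tilde\gamma\leq1$, $\tr\tilde\gamma=N$, and $\rho_{\tilde\gamma}=\rho/s'\circ s^{-1}\equiv N/\pi$.

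Writing $\gamma=\sum_j\lambda_j|\psi_j\rangle\langle\psi_j|$ with $\psi_j=\sqrt{s'}\,\varphi_j\circ s$, I will expand $|\psi_j'|^2$. The cross term is $s'^{3/2}(\sqrt{s'})'\,\partial_s\sum_j\lambda_j|\varphi_j|^2$, which vanishes because the transformed density is constant. The remaining terms give
\[
T[\gamma]=\int_0^\pi w(s)\,\tau_{\tilde\gamma}(s)\,\dd s+G(\rho),\qquad w:=(s')^2\circ s^{-1}=\Big(\frac{\pi\rho}{N}\Big)^2\circ s^{-1},
\]
where $\tau_{\tilde\gamma}=\sum_j\lambda_j|\varphi_j'|^2$. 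The last term arises since $\frac N\pi\int|(\sqrt{s'})'|^2=G(\rho)$. In the same variables, $\int\rho^3=\frac{N^3}{\pi^3}\int_0^\pi w\,\dd s$ and $G(\rho)=\frac{N}{\pi}\int_0^\pi|(\sqrt w)'|^2\dd s$, where the prime is now $\partial_s$. The inequality \eqref{eq:LT_line_main} therefore becomes the weighted inequality
\[
\int_0^\pi w\,\tau_{\tilde\gamma}\,\dd s+\frac{4N}{\pi}\int_0^\pi|(\sqrt w)'|^2\,\dd s\;\geq\;\frac{N(N^2+2)}{3\pi}\int_0^\pi w\,\dd s \tag{$\ast$}
\]
for all $\tilde\gamma$ with constant density $N/\pi$ and all weights $w\geq0$.

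As a sanity check, the claimed optimizers correspond to $s=2\arctan(\ee^{\mu x})$. Then $s'=\mu\sin s$, so $w=\mu^2\sin^2 s$, and the orbitals become $\varphi_j=\pi^{-1/2}\ee^{2\ii j s}$ with $j$ running over the symmetric set of $N$ (half-)integers. In that case $\tau_{\tilde\gamma}=\frac1\pi\sum_j4j^2=\frac{N(N^2-1)}{3\pi}$ is constant. Moreover $\int|(\sqrt w)'|^2=\int\mu^2\cos^2=\frac{\pi\mu^2}{2}=\int w$, so the left side of $(\ast)$ equals $\frac{N(N^2-1)}{3\pi}\int w+\frac{4N}{\pi}\int w=\frac{N(N^2+11)}{3\pi}\int w$. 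This does not match the right side $\frac{N(N^2+2)}{3\pi}\int w$, so the bookkeeping of constants (the coefficient of $G$ in $T[\gamma]$, and the $G$ term coming from the change of variables) must be redone carefully before proceeding. The structure of the reduction should nevertheless be right: the target is a sharp \emph{Poincar\'e/Wirtinger-type} inequality for $\int w\,\tau$ with a $\sqrt w$-gradient penalty, whose extremal weight is $\sin^2$.

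To prove $(\ast)$ I would proceed in two steps.

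\textbf{Step (i): a pointwise-in-Fourier bound.} Extend the $\varphi_j$ to $\pi$-antiperiodic or periodic functions, according to the parity of $N$, so that the natural basis is $\ee^{2\ii ks}$ with $k\in\Z$ or $k\in\Z+\frac12$. Write $\tau_{\tilde\gamma}=\tau_0+r$, where $\tau_0=\frac{N(N^2-1)}{3\pi}$ is the bathtub minimum: since $\tilde\gamma$ has constant density, $\int r\geq0$ by the usual bathtub principle for $\tr(-\partial_s^2)\tilde\gamma$ at fixed trace.

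\textbf{Step (ii): the weighted remainder.} It then remains to show $\int w\,r\geq -C N\int\big(|(\sqrt w)'|^2-\ldots\big)$. For this I would use the commutator identity $\tr\big(-\partial_s w\partial_s\big)\tilde\gamma=\tr\big(\sqrt w(-\partial_s^2)\sqrt w\,\tilde\gamma\big)-\int|(\sqrt w)'|^2\rho_{\tilde\gamma}$, and then bound $\tr(\sqrt w(-\partial_s^2)\sqrt w\,\tilde\gamma)$ from below using the constant-density constraint. Concretely, this is a Lagrange-multiplier (dual) argument: for any $V$,
\[
\tr A\tilde\gamma\;\geq\;-\tr(A-V)_-+\frac N\pi\int V,
\]
with $V$ a trigonometric polynomial of degree $\leq 2$ in $\ee^{2\ii s}$. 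This is exactly what the $\sin^2$ extremal suggests, since $\sin^2 s$ is spanned by $1,\ee^{\pm2\ii s}$ and couples only neighbouring Fourier modes, so the $N$-mode block can be computed exactly.

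The main obstacle is step (ii): establishing the sharp weighted inequality $(\ast)$ for \emph{general} $w$, not only trigonometric ones. My first attempt would be to expand $\sqrt w$ in the Fourier basis and reduce to a quadratic-form inequality in the Fourier coefficients, using the Toeplitz structure of $\tilde\gamma$ forced by constant density. I would then identify when equality can hold, namely $\sqrt w\in\mathrm{span}\{\sin(s-s_0)\}$ and $\tilde\gamma$ the projection onto the central $N$ modes. Undoing the change of variables yields \eqref{eq:rho line}, \eqref{eq:orbitals line} and \eqref{eq:gamma line}, the last via the Dirichlet-kernel sum $\sum_j\ee^{2\ii j(s-t)}=\sin(N(s-t))/\sin(s-t)$. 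Finally, density and approximation arguments remove the positivity and smoothness assumptions on $\rho$.
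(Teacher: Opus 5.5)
The reduction itself is right: it is the paper's cumulative-mass coordinate, rescaled to $(0,\pi)$. The mismatch in your sanity check comes from one slip. Since $\partial_s\sqrt w=(\rho'/\rho)\circ s^{-1}$, one has $\frac N\pi\int_0^\pi|\partial_s\sqrt w|^2\,\dd s=\int_\R|\rho'|^2/\rho=4G(\rho)$, not $G(\rho)$. Hence $T[\gamma]+3G(\rho)=\int_0^\pi w\,\tau_{\tilde\gamma}+\frac N\pi\int_0^\pi|\partial_s\sqrt w|^2$, which by your commutator identity equals $\tr\big(\sqrt w(-\partial_s^2)\sqrt w\,\tilde\gamma\big)$. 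So the coefficient in $(\ast)$ is $\frac N\pi$, not $\frac{4N}{\pi}$, and your check then closes: $\frac{N(N^2-1)}{3\pi}+\frac N\pi=\frac{N(N^2+2)}{3\pi}$.

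The genuine gap is that $(\ast)$ is never proved, and Step (i) is false as stated. The $\varphi_j$ satisfy no boundary condition at $0,\pi$; only $\sqrt w\varphi_j$ vanishes there. So they cannot be extended (anti)periodically, and $\int_0^\pi\tau_{\tilde\gamma}\geq N(N^2-1)/3$ fails in general. For $N=2$, take $\varphi_1=\sqrt{2/\pi}\cos(A\cos s)$ and $\varphi_2=\sqrt{2/\pi}\sin(A\cos s)$, with $2A$ the first zero of the Bessel function $J_0$. These are orthonormal with density $2/\pi$, yet $\int_0^\pi\tau_{\tilde\gamma}=A^2\approx1.45<2$. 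Your Step (ii) routes do not close the gap for general $w$ either. Trigonometric multipliers $V$ only handle trigonometric weights. A Fourier expansion of $\sqrt w$ does not help, and constant density does not make $\tilde\gamma$ Toeplitz.

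The missing idea is to put both the boundary condition and the bathtub on the \emph{weighted} orbitals, i.e.\ to take $V=\lambda_F w$ in your dual bound. Set $\Gamma:=\sqrt w\,\tilde\gamma\sqrt w$. Then $0\leq\Gamma\leq w$ and $\Gamma(s,s)=\frac N\pi w(s)$. Moreover, the $\sqrt w\varphi_j$ vanish at $0$ and $\pi$, so they lie in the form domain of the Laplacian $H_\beta$ with \emph{any} twist $f(\pi)=\ee^{\ii\beta}f(0)$. The eigenfunctions of $H_\beta$ have constant modulus $\pi^{-1/2}$, so $\tr(H_\beta-\lambda_F)_-w=\frac1\pi\sum_{\mu_k<\lambda_F}(\lambda_F-\mu_k)\int_0^\pi w$. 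Let $\lambda_F$ be the $N$-th eigenvalue and $\Sigma_N(\beta)$ the sum of the first $N$. Then, for every bounded $w\geq0$,
\[
\tr H_\beta\Gamma-\frac{\Sigma_N(\beta)}{\pi}\int_0^\pi w=\tr(H_\beta-\lambda_F)_+\Gamma+\tr(H_\beta-\lambda_F)_-(w-\Gamma)\geq0 .
\]
The sharp constant comes from choosing the twist \emph{opposite} to your closed-shell one: antiperiodic on $(0,\pi)$ for odd $N$, periodic for even $N$. Then the $N$-th level is a half-filled degenerate pair and $\Sigma_N=N(N^2+2)/3$; this is the paper's Lemma~\ref{lem:twisted_filling}.

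Your equality claim also needs this structure and is currently only asserted. Equality forces both nonnegative terms to vanish. So $\ran\Gamma$ lies in the spectral subspace $E^{\leq}$ of $H_\beta$ up to $\lambda_F$, and $\Gamma=w$ on the subspace $E^{<}$ strictly below $\lambda_F$. Hence $w$ maps $E^{<}$ into $E^{\leq}$. Together with the diagonal constraint, this kills every Fourier mode $\ee^{2\ii ms}$, $|m|\geq2$, of $w$, and $w(0)=w(\pi)=0$ then gives $w=c\sin^2 s$. A determinant argument shows the admissible $\Gamma$ form a segment $t\Gamma_++(1-t)\Gamma_-$. Its endpoints coincide for this $w$, so $\tilde\gamma$ must be the projection onto the $N$ central modes. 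Because this works for arbitrary bounded weights, it also covers densities with zeros, which approximation cannot do for the equality statement.
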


The next result gives an analogous gradient-corrected LT inequality on the one-dimensional torus $\T = \R/\Z$. In this case, we also need the following shell correction with a pre-factor depending on the parity of the number of particles
\begin{align}
    \cS(\rho) = \left(\int_\T \frac{1}{\rho(x)} \dd x\right)^{-1}. \label{eq:shell correction}
\end{align}

\begin{theorem}[Sharp gradient-corrected Lieb--Thirring inequality on the torus]
\label{thm:LT_torus}
Let $N \in \N$, and $\gamma \in \mathfrak{S}_N(\T)$ be a one-particle density matrix on $L^2(\T)$, and define $\rho := \rho_\gamma$. Then
\begin{align}
    T[\gamma]+3G(\rho)
    \geq&
    \frac{\pi^2}{3}\left(1+\frac2{N^2}\right)
    \int_\T\rho^3
    -\mathbf{1}_{N\,\mathrm{odd}} \pi^2 \cS(\rho).
    \label{eq:LT_torus}
\end{align}
Moreover, equality is attained if and only if there exists some $m \in [0,1)$ such that up to translation,
\begin{align}
	\frac{\rho(x)}{N} = \rhot{m}(x) \coloneqq  \,\frac{2K(m)}{\pi}\,
    \sqrt{1-m\sin^2\left(\mathrm{am}\!\left(2K(m) x\,\middle|\,m\right)\right)},
    \label{eq:LT_torus_equality_density}
\end{align}
and
\begin{align} \gamma = \begin{dcases} \gamma_m, \quad &\mbox{if $N$ is odd}, \\ (1-t) \ee^{\ii\nu_m} \gamma_m \ee^{-\ii \nu_m}  + t \ee^{-\ii \nu_m} \gamma_m \ee^{\ii \nu_m} \quad t\in [0,1], \quad &\mbox{if $N$ is even}, \end{dcases} \label{eq:LT_torus_equality_gamma_cases}
\end{align}
where the phase factor is given by
\begin{align}
    \ee^{\ii\nu_m(x)}
    :=
    \frac{
    \cos\left(\mathrm{am}(2K(m)x\,|\,m)\right)
    +\ii\sqrt{1-m}\,
    \sin\left(\mathrm{am}(2K(m)x\,|\,m)\right)
    }{
    \sqrt{1-m\sin^2\left(
    \mathrm{am}(2K(m)x\,|\,m)\right)}
    } \label{eq:phase function}
\end{align}
and $\gamma_m$ is defined as
\begin{align} \gamma_m = \sum_{|k| \leq \frac{N-1}{2}} \ketbra{\psi_k}{\psi_k}, \quad \text{with}\quad \psi_k(x) := \sqrt{\rhot{m}(x)} \ee^{\ii 2 k \mathrm{am}(2 K(m) x |  m)}, \label{eq:LT_torus_equality_gamma}\end{align}
with the sum running over half-integers for $N$ even and integers for $N$ odd. The function $\mathrm{am}(\,\cdot\,|m)$ is the Jacobi amplitude, defined as the inverse of the incomplete elliptic integral
\begin{align}
    F(\varphi|m)
    :=
    \int_0^\varphi
    \frac{\dd\theta}{\sqrt{1-m\sin^2\theta}}, \label{eq:elliptic function def}
\end{align}
and $K(m) = F(\pi/2 \, | \, m)$ is the complete elliptic integral.
\end{theorem}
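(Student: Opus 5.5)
The plan is to straighten the density by a change of variables, so that the problem becomes one about density matrices with \emph{constant} density, and then to bound the remaining weighted kinetic energy by a sum of eigenvalues of an explicitly solvable Sturm--Liouville operator.

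\textbf{Reduction to constant density.} First reduce, by density (the class $\mathfrak{S}_N(\T)$ is approximable and both sides of \eqref{eq:LT_torus} behave continuously), to $\rho$ smooth and strictly positive. Define the diffeomorphism $\phi(x)=\frac1N\int_0^x\rho$ of $\T$, so that $\phi'=\rho/N$. Consider the unitary $U u(x):=\sqrt{\phi'(x)}\,u(\phi(x))$ and set $\tilde\gamma:=U^*\gamma U$. Then $0\le\tilde\gamma\le1$, $\Tr\tilde\gamma=N$, and $\rho_{\tilde\gamma}\equiv N$.

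Write $\gamma=\sum_j\lambda_j|Uu_j\rangle\langle Uu_j|$ and expand $\partial_x(\sqrt{\phi'}\,u_j\circ\phi)$. The cross term is proportional to $(\sqrt{\phi'})'\phi'^{3/2}\,(\rho_{\tilde\gamma})'\circ\phi$, which vanishes because $\rho_{\tilde\gamma}$ is constant. The diagonal term in $(\sqrt{\phi'})'$ gives exactly $N\int|(\sqrt{\phi'})'|^2=G(\rho)$. Changing variables $y=\phi(x)$, this yields the identity
\[
T[\gamma]=\Tr\big(-\partial_y w\,\partial_y\,\tilde\gamma\big)+G(\rho),\qquad w:=(\phi')^2\circ\phi^{-1}=\Big(\tfrac{\rho}{N}\Big)^2\circ\phi^{-1}.
\]
In these variables $\int_\T\rho^3=N^3\int_\T w\,\dd y$ and $G(\rho)=\frac N{16}\int_\T\frac{(w')^2}{w}\,\dd y$. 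Moreover $\cS(\rho)$ becomes $N\big(\int_\T w^{-1}\,\dd y\big)^{-1}$, since $\int \rho^{-1}\dd x=N^{-1}\int w^{-1}\dd y$. It therefore suffices to prove, for every $\tilde\gamma$ with $0\le\tilde\gamma\le1$ and constant density $N$,
\[
\Tr(-\partial w\partial\,\tilde\gamma)\ \ge\ \tfrac{\pi^2}{3}(N^2+2)N\!\int w-\tfrac{N}{4}\!\int\tfrac{(w')^2}{w}-\1_{N\,\rm odd}\,\pi^2N\Big(\int w^{-1}\Big)^{-1}.
\]

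\textbf{Eigenvalue bound.} Since the density of $\tilde\gamma$ is fixed, I add an arbitrary multiplication operator for free: for any real $V$,
\[
\Tr(-\partial w\partial\tilde\gamma)=\Tr\big((-\partial w\partial+V)\tilde\gamma\big)-N\int V\ \ge\ \sum_{k=1}^N\lambda_k(-\partial w\partial+V)-N\int V.
\]
I will pick $V$ so that the operator is explicitly diagonalisable. Take $V=-\frac{(w')^2}{16w}+\frac{w''}{4}$, the potential produced by the Liouville transform. With $s=\int_0^y w^{-1/2}$ and $L=\int_\T w^{-1/2}$, the operator $-\partial w\partial+V$ is unitarily equivalent to $-\partial_s^2$ on a circle of length $L$. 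Its eigenvalues are $(2\pi k/L)^2$, $k\in\Z$, which explains the half-integer versus integer indexing and the even/odd splitting, with a degenerate top level when $N$ is even.

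The resulting sum $\sum_{|k|\le (N-1)/2}(2\pi k/L)^2$ involves $L^{-2}$, which is not of the required form. I therefore expect to use instead a two-parameter family $V=\alpha w+\beta+(\text{Liouville term})$, leading to Hill/Lamé-type operators. Alternatively I will optimise the choice of $w$-dependent $V$ by a Cauchy--Schwarz/AM--GM step, such as $L^{-2}\ge(\int w)^{-1}$-type bounds, so that the final expression reduces to $\int w$, $\int (w')^2/w$ and, for odd $N$, $(\int w^{-1})^{-1}$. The constant $\frac{\pi^2}3(N^2+2)$ should come from $\sum_{|k|\le (N-1)/2}4\pi^2k^2=\frac{\pi^2}{3}N(N^2-1)$ together with the von Weizs\"acker-type remainder, which accounts for the $3G$ rather than $G$.

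\textbf{Main obstacle and equality.} The main obstacle is this choice of $V$. It must simultaneously make the lowest $N$ eigenvalues computable and make the leftover terms combine into exactly $3G$, the semiclassical term and the shell term $\cS$. My first attempt will be to reverse-engineer $V$ from the claimed optimisers. Equality forces $\tilde\gamma$ to be the spectral projection onto the first $N$ modes, and the orbitals $\sqrt{\rhot m}\,\ee^{2\ii k\,\mathrm{am}}$ suggest that the optimal $w$ satisfies the pendulum equation, with $\mathrm{am}$ as the natural angle. The Euler--Lagrange equation of the reduced functional in $w$ should then be that ODE, whose periodic solutions are the $\rhot m$.

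The equality characterisation follows by tracking the equality cases in each inequality used:
\begin{itemize}
\item equality in the eigenvalue bound means $\tilde\gamma$ is a spectral projection, unique for odd $N$;
\item for even $N$ the degenerate top level gives exactly the convex combination over $t\in[0,1]$ of the two choices, which I expect to correspond to conjugation by $\ee^{\pm\ii\nu_m}$;
\item equality in the auxiliary scalar inequalities pins down $w$, hence $\rho=N\rhot m$ up to translation.
\end{itemize}
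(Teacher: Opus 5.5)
Your reduction is correct and is exactly the paper's cumulative-mass step. Your $U$ is $U_{\rho/N}$, your $w$ is $a^2$, and $T[\gamma]=\Tr(-\partial w\partial\tilde\gamma)+G(\rho)$ is equivalent to the paper's identity $\sum_j\int_0^1|(a\phi_j)'|^2=T[\gamma]+3G(\rho)$, because $\sum_j|\phi_j|^2\equiv N$.

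The genuine gap is the central lower bound. You explicitly leave it open, and the mechanism you propose cannot supply it. For a general $w$ you will not find a $V$ that makes $-\partial w\partial+V$ explicitly diagonalisable with an eigenvalue sum expressed through $\int w$, and both concrete options you mention fail:
\begin{enumerate}
\item The Liouville choice gives $\frac{\pi^2}{3}N(N^2-1)L^{-2}$ with $L=\int_0^1 w^{-1/2}$. But Jensen gives $L^{-2}\le\int_0^1 w$, which is the wrong direction, and there is no reverse bound, so no Cauchy--Schwarz/AM--GM step can repair it. (Incidentally, the potential cancelling the Liouville term is $\frac{(w')^2}{16w}-\frac{w''}{4}$, the opposite sign of yours.)
\item Adding $\alpha w$ only produces a Hill operator $-\partial_t^2+\alpha w$. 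Its eigenvalue sums for general $w$ are again governed by a sharp Lieb--Thirring-type bound, i.e.\ essentially the dual of what you want to prove.
\end{enumerate}
Reverse-engineering $V$ from the optimizers can at best treat the Poincar\'e family itself. Moreover, your circle has only integer modes. For odd $N$ it therefore yields at best the coefficient $\frac{\pi^2}{3}N(N^2-1)$ instead of $\frac{\pi^2}{3}N(N^2+2)$. The stated constant requires an anti-periodic twist, which a periodic problem on $\T$ does not provide directly.

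The missing idea is to keep the operator independent of $w$. Let $H_\beta$ be the Laplacian on $(0,1)$ with $f(1)=e^{i\beta}f(0)$, let $\lambda_F$ be its $N$-th eigenvalue and $\Sigma_N(\beta)$ the sum of its $N$ lowest eigenvalues. Set $a=\sqrt w$ and $\Gamma:=a\tilde\gamma a$. If the weighted orbitals $a\phi_j$ satisfy the twisted condition, then $\Tr(-\partial w\partial\tilde\gamma)+N\int_0^1|a'|^2=\Tr H_\beta\Gamma$. The only facts needed about $\Gamma$ are $0\le\Gamma\le a^2$ and $\Gamma(s,s)=Na^2(s)$:
\[ \Tr H_\beta\Gamma=\lambda_F N\int_0^1a^2+\Tr(H_\beta-\lambda_F)\Gamma\ge\lambda_F N\int_0^1a^2-\Tr(H_\beta-\lambda_F)_-a^2=\Sigma_N(\beta)\int_0^1a^2. \]
The last trace is explicit because the eigenfunctions $e^{i(2\pi j+\beta)s}$ have modulus one. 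In your language this is $V=-\lambda_F w-aa''$, followed by a bathtub bound instead of an exact eigenvalue computation. The constant $\frac{\pi^2}{3}N(N^2+2)$ equals $\Sigma_N(0)$ for even $N$ and $\Sigma_N(\pi)$ for odd $N$.

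For odd $N$ the $a\phi_j$ are only periodic, so you need one more step. Multiply them by $e^{\pm i\pi\chi}$ with $\chi'=a^{-2}/\int_0^1a^{-2}$, apply the anti-periodic bound to both families, and average. The cross terms cancel, and the cost is $N\pi^2(\int_0^1a^{-2})^{-1}=\pi^2\cS(\rho)$, which is exactly the shell term. A jump phase handles zeros of $a$.

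Your equality discussion presupposes this bound. In the paper, equality forces $\mathrm{Ran}\,\Gamma\subset E^{\le}$ and $E^{<}\subset\ker(a^2-\Gamma)$, where $E^{<}$ and $E^{\le}$ are the spectral subspaces of $H_\beta$ below and up to $\lambda_F$. Hence $a^2$ maps $E^{<}$ into $E^{\le}$, and a Fourier-support argument gives $a^2=A+B\cos(2\pi(s-s_0))$. Uniqueness of $\gamma$ for odd $N$ comes from a ground-state non-degeneracy result, not from ``$\tilde\gamma$ is a spectral projection''. Equality cases also cannot be transferred through your preliminary approximation by smooth positive $\rho$; the paper works directly with $\sqrt\rho\in H^1$.
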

We now make several remarks on the results of Theorems~\ref{thm:LT_line} and~\ref{thm:LT_torus}.

\begin{remark}[General particle number]\label{rem:general N line}
For states $\gamma$ with a fractional number of particles $\Tr \gamma = N \in \R_+$ we have the following general versions of~\eqref{eq:LT_line_main} and~\eqref{eq:LT_torus}:
\begin{align}
    &T[\gamma] + 3G(\rho) \geq \frac{\pi^2}{3} \left(1+\frac{\max\{\kappa(N,0),\kappa(N,\pi)\}}{N^2}\right) \int_\R \rho^3, \label{eq:LT_general_line}
    \intertext{and}
    &T[\gamma] + 3G(\rho) \geq \frac{\pi^2}{3} \int_\T \rho^3 + \pi^2 \max\left\{ \frac{\kappa(N,\pi)}{3N^2} \int_\T \rho^3 - \cS(\rho), \frac{\kappa(N,0)}{3N^2} \int_\T \rho^3 \right\}, \label{eq:LT_general_torus}
\end{align}
where 
\begin{align*}
\kappa(N,\beta)
    \coloneqq
    \begin{cases}
    \displaystyle
    3 (\beta/\pi)^2 
    -6\frac{n(1-\theta)}{N}(\beta/\pi)
    +\frac{2n-3n\theta^2-\theta^3}{N},
    & n \text{ even},
    \\[3mm]
    \displaystyle
    3(\beta/\pi)^2 
    -6\frac{\theta(n+1)}{N}(\beta/\pi)
    +\frac{-n+6n\theta-3n\theta^2+3\theta-\theta^3}{N},
    & n \text{ odd}.
    \end{cases}
\end{align*}
with $n = \lfloor N \rfloor$ and $\theta = N-n$. Note that 
\begin{align} \max\{ \kappa(n,0), \kappa(n,\pi)\} = 2, \quad \text{and}\quad \min\{\kappa(n,0),\kappa(n,\pi)\} = -1 \quad\text{ for $n\in \N$,} \label{eq:simplified kappa} \end{align}
so the above inequalities reduce to~\eqref{eq:LT_line_main} and~\eqref{eq:LT_torus} for integer particle numbers. Moreover, in the non-integer cases $N\geq 2$, one can show that equality in~\eqref{eq:LT_general_line} is never attained while equality in~\eqref{eq:LT_general_torus} is achieved only for the homogeneous density $\rho = N$, see Lemma~\ref{lem:inequality saturation}.
\end{remark}

\begin{remark}[LDA rescaling]
For $\lambda>0$, set $\rho_\lambda(x):=\rho(\lambda x)$ and $N_\lambda:=N/\lambda$. Since $G(\rho_\lambda)=\lambda G(\rho)$ and $\int_\R\rho_\lambda^3=\lambda^{-1}\int_\R\rho^3$, for $\gamma_\lambda\in\mathfrak S_{N_\lambda}(\R)$ with density $\rho_\lambda$, inequality~\eqref{eq:LT_general_line} becomes
\begin{align}
    \lambda T[\gamma_\lambda]+3\lambda^2G(\rho)
    \geq \frac{\pi^2}{3} \left(1+\lambda^2 \frac{\max\{\kappa(N_\lambda,0), \kappa(N_\lambda,\pi)\}}{N^2}\right)\int_\R\rho^3.
    \label{eq:LT_line_LDA}
\end{align}
Thus both corrections are of order $\lambda^2$ in the LDA limit $\lambda\downarrow0$, and when $\lambda \to +\infty$,
\begin{align} \liminf_{\lambda \to + \infty} T[\gamma_\lambda]/\lambda +3 G(\rho)
    \geq \frac{\pi^2}{N^2}\int_\R\rho^3.
    \label{eq:LT_line_dilute_LDA}
\end{align}
\end{remark}

\begin{remark}[Optimal finite-rank constants] \label{rem:finite-rank} It follows from Theorem~\ref{thm:LT_line} that
\begin{align} c_N \coloneqq \frac{\pi^2}{3}\left(1+\frac{2}{N^2}\right)\label{eq:finite-rank cte} \end{align}
is the optimal finite-rank constant in front of the local term, i.e., 
\begin{align*}
    c_N = \min \left\{ \frac{T[\gamma] + 3G(\rho_\gamma)}{\int_\R \rho_\gamma^3} : 0\leq \gamma \leq 1, \quad 1\leq \mathrm{rank}(\gamma) \leq N \right\}, \quad N \in \N.
\end{align*}
A proof is provided in the appendix. Note that $c_N \downarrow c_{\rm TF}$ as $N\rightarrow \infty$.
\end{remark}

\begin{remark}[One-particle case] Inequality~\eqref{eq:LT_line_main} combined with the Hoffmann-Ostenhof bound \cite{HoffmannOstenhof77} $T[\gamma] \geq G(\rho)$ yields
\begin{align*}
    T[\gamma] \geq \frac{\pi^2}{4} \int_\R \rho^3, \quad \mbox{whenever $\int_\R \rho = 1$,}
\end{align*}
which is the sharp one-particle inequality on the real line \cite{SzNagy41,BenguriaLoss04}. On the torus we find instead
\[  T[\gamma] \geq \frac{\pi^2}{4} \int_\T \rho^3 - \frac{\pi^2}{4} \cS(\rho).\]
Note that the shell-term exactly compensates for the zero (or constant) mode $\rho = 1$. 
\end{remark}

\begin{remark}[From torus to real line optimizers] \label{rem:concentration profile} As $m\uparrow 1$, one has $K(m)\to+\infty$ and
\[
\rhot{m}(\cdot-x_0)
=
\frac{2K(m)}{\pi}
\sqrt{1-m\sin^2\left(\mathrm{am}\!\left(2K(m) (\cdot-x_0)\,\middle|\,m\right)\right)}
\rightharpoonup \delta_{x_0}
\]
weakly as probability measures on $\T$, see Figure~\ref{fig:torus optimizers}.  However, after introducing the blow-up scale $\varepsilon_m=(2K(m))^{-1}$, for any $y \in \R$,
\[
\varepsilon_m
\rhot{m}(\varepsilon_m y)
=
\frac1\pi\sqrt{1-m\sin^2\left(\mathrm{am}\!\left(y\,\middle|\,m\right)\right)}
\underset{\substack{m\uparrow 1}}{\longrightarrow} \; 
\frac{1}{\pi\cosh y}.
\]
Thus the concentration profile of the torus optimizers is
precisely the real-line optimizer.
\end{remark}

\begin{figure}[!ht]
\includegraphics[scale=0.3]{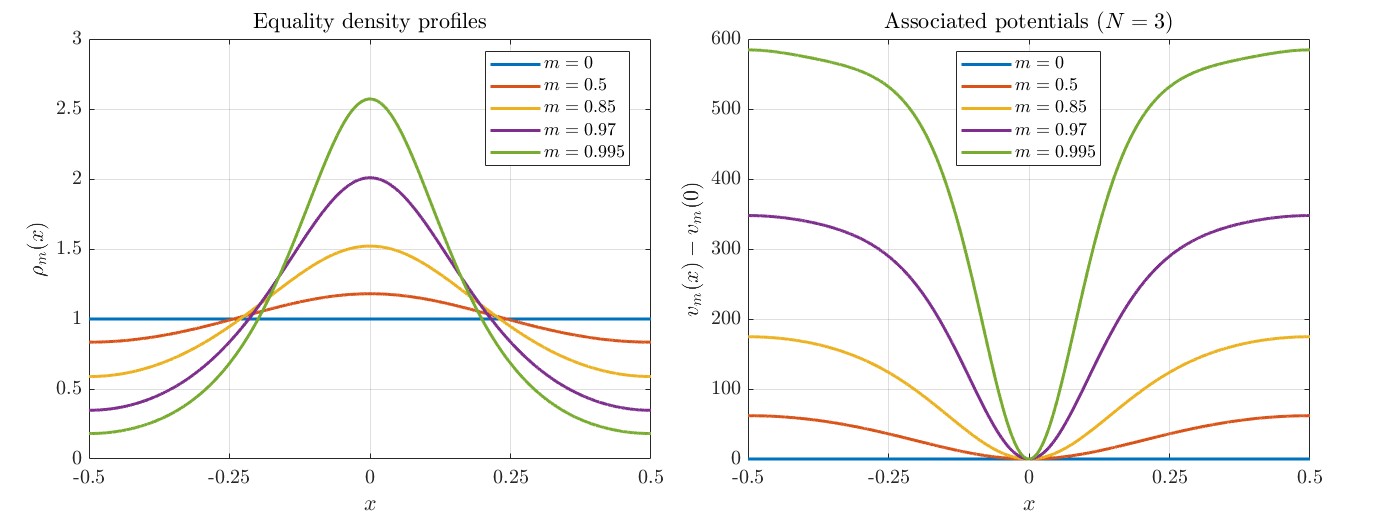}
\caption{
Plot of density profiles and corresponding Kohn-Sham potentials (for $N=3$) in the equality manifold in the torus case.
}
\label{fig:torus optimizers}
\end{figure}

\subsection{Applications to density functional theory}

We now present several applications to density functional theory. 

\subsubsection{Energy approximation}
As a first application, we provide a rigorous justification of the Thomas--Fermi-von Weizs\"acker approximation for the kinetic energy of one-dimensional systems. To state this result precisely, let us recall that the kinetic energy Lieb functional on $\Omega$ is defined as
\begin{align}
    T_{\rm L}^\Omega(\rho)
    :=
    \inf_{\substack{\gamma \in \mathfrak{S}(\Omega) \\ \rho_\gamma = \rho}}
    \Tr(-\partial_x^2\gamma).
    \label{eq:nonint_Lieb_functional} 
\end{align}
Moreover, let us define the Thomas--Fermi--von-Weizs\"acker kinetic-energy functional as
\begin{align}
    T_{\rm TFW}^\R(\rho) \coloneqq c_{\rm TF} \int_\R \rho^3 -\frac13 G(\rho),  \label{eq:TFW energy}
\end{align}
where $c_{\rm TF} = \pi^2/3$. Then, our first result is a bound on the error between $T_{\rm TFW}^\R$ and $T_{\rm L}^\R$ in terms of a Poincar\'e deficit for the normalized density profile.
\begin{theorem}[Thomas--Fermi approximation on $\R$ with gradient corrections] \label{thm:TFW_line} For any mass one density profile with $\sqrt{\rho} \in \mH^1(\R)$ and $N\in \N$, we have
\begin{align}
    - \frac{2}{3} N \delta_P^\R(\rho) \leq T_{\rm L}^\R(N\rho) - T_{\rm TFW}^\R(N \rho) \leq \frac13 N \delta_P^\R(\rho),
\end{align}
where $\delta_P$ is the cumulative Poincar\'e deficit defined as
\begin{align}
    \delta_P^\R(\rho) \coloneqq 4G(\rho) -\pi^2 \int_\R \rho^3. \label{eq:Poincare deficit real line}
\end{align}
In particular, $\delta_P^\R(\rho) \geq 0$ for any normalized density profile.
\end{theorem}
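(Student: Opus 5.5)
The lower bound will follow directly from Theorem~\ref{thm:LT_line}, and the upper bound from an explicit trial density matrix built from a phase adapted to the cumulative distribution of $\rho$. Both bounds will then be compared with $T_{\rm TFW}^\R(N\rho)$ using the scalings $G(N\rho)=NG(\rho)$ and $\int_\R (N\rho)^3 = N^3\int_\R\rho^3$.

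\emph{Lower bound.} Let $\gamma\in\mathfrak S_N(\R)$ with $\rho_\gamma = N\rho$. Theorem~\ref{thm:LT_line} applied to $\gamma$ gives
\[
T[\gamma]\geq \frac{\pi^2}{3}N^3\int_\R\rho^3+\frac{2\pi^2}{3}N\int_\R\rho^3-3NG(\rho).
\]
Taking the infimum over such $\gamma$ and subtracting $T^\R_{\rm TFW}(N\rho)=\frac{\pi^2}{3}N^3\int_\R\rho^3-\frac N3 G(\rho)$ leaves
\[
T_{\rm L}^\R(N\rho)-T_{\rm TFW}^\R(N\rho)\geq \frac{2\pi^2}{3}N\int_\R\rho^3-\frac83 NG(\rho)=-\frac23N\delta_P^\R(\rho).
\]
For $\delta_P^\R(\rho)\ge 0$, apply Theorem~\ref{thm:LT_line} with $N=1$ to the rank-one state $\gamma=\ketbra{\sqrt\rho}{\sqrt\rho}$. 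This state satisfies $T[\gamma]=G(\rho)$, so the theorem reads $4G(\rho)\geq \pi^2\int_\R\rho^3$, which is exactly $\delta_P^\R(\rho)\geq 0$.

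\emph{Upper bound.} Let $F(x):=\int_{-\infty}^x\rho\in[0,1]$, and define
\[
\psi_k(x):=\sqrt{\rho(x)}\,\ee^{2\pi\ii kF(x)},\qquad |k|\leq \tfrac{N-1}{2},
\]
with $k$ running over integers or half-integers according to the parity of $N$, exactly as in Theorem~\ref{thm:LT_line}. Substituting $s=F(x)$ gives
\[
\inner{\psi_l,\psi_k}=\int_\R \rho\, \ee^{2\pi\ii(k-l)F}=\int_0^1\ee^{2\pi\ii (k-l)s}\,\dd s=\delta_{kl},
\]
since $k-l\in\Z$. Hence $\gamma=\sum_k\ketbra{\psi_k}{\psi_k}$ is an orthogonal projection with $\rho_\gamma=N\rho$.

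Because the phase is real-valued, $|\psi_k'|^2=|(\sqrt\rho)'|^2+4\pi^2k^2\rho^3$. Using $\sum_k k^2=N(N^2-1)/12$ we obtain
\[
T[\gamma]=NG(\rho)+\frac{\pi^2}{3}N(N^2-1)\int_\R\rho^3 .
\]
Subtracting $T_{\rm TFW}^\R(N\rho)$ then gives
\[
T_{\rm L}^\R(N\rho)-T_{\rm TFW}^\R(N\rho)\leq \frac43NG(\rho)-\frac{\pi^2}{3}N\int_\R\rho^3=\frac13N\delta_P^\R(\rho).
\]
This construction is consistent with the optimizers in Theorem~\ref{thm:LT_line}: for $\rho=\rhor{\mu}$ one has $2\pi F=4\arctan(\ee^{\mu x})$, which reproduces the orbitals~\eqref{eq:orbitals line} up to a constant phase.

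\emph{Admissibility of the trial state.} The only genuine technical point is that $\gamma\in\mathfrak S_N(\R)$, i.e.\ that $T[\gamma]$ is finite. Since $\sqrt\rho\in\mH^1(\R)$, the function $\sqrt\rho$ is bounded in one dimension, so $\int_\R\rho^3\leq\norm{\rho}_\infty^2\int_\R\rho<\infty$. Moreover $F$ is absolutely continuous with $F'=\rho$. It follows that $\psi_k\in\mH^1(\R)$, with derivative given by the product and chain rules almost everywhere, including on the zero set of $\rho$.
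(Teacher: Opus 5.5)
Your proposal is correct and follows the paper's proof. The lower bound is Theorem~\ref{thm:LT_line} rescaled to density $N\rho$, and the upper bound uses the same centered periodic/antiperiodic trial state in cumulative coordinates as Lemma~\ref{lem:kinetic_energy_upper_bound}; your derivation of $\delta_P^\R(\rho)\geq 0$ from the $N=1$ case applied to $\ketbra{\sqrt\rho}{\sqrt\rho}$ is a valid alternative to reading it off the two-sided bound.
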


\begin{remark}[Dirichlet Poincar\'e deficit] \label{rem:Dirichlet Poincare deficit} As shown later in Section~\ref{sec:cumulative coordinates}, we can write
\begin{align*}
    \delta_P^\R(\rho) = \int_0^1 |a'(s)|^2 \dd s - \pi^2 \int_0^1 |a(s)|^2 \dd s,
\end{align*}
where $a(s)$ is the normalized density profile in cumulative coordinates. This is precisely the Poincar\'e deficit of $a$ for the Dirichlet Laplacian in the unit interval, which justifies the terminology used here.
\end{remark}

Our next result provides an analogous estimate in the periodic setting. In this case, besides a gradient correction, we need again a shell correction depending on the parity of the number of particles. Precisely, we define
\begin{align}
T_{\rm TFW}^{\T}(\rho) := c_{\rm TF} \int_\T \rho^3 -\frac13 G(\rho) + \theta_N \pi^2 \cS(\rho), \label{eq:TFWS def}
\end{align}
where $\cS(\rho)$ is defined in~\eqref{eq:shell correction} and $\theta_N$ is the parity factor
\begin{align}
    \theta_N := \begin{cases} -\frac13 , \quad &\mbox{whenever $N = \int_\T \rho$ is  odd}, \\
    \frac23, \quad &\mbox{whenever $N = \int_\T \rho$ is even.} \end{cases}
\end{align}
Then, with these definitions in place, we have the following result. 
\begin{theorem}[Thomas--Fermi approximation on $\T$ with gradient and shell corrections] \label{thm:TFW_torus} For any mass one density profile with $\sqrt{\rho}\in \mH^1(\T)$ and $N\in \N$, we have
\begin{align}
    -\frac23 N \delta_P^{\T}(\rho) \leq T_{\rm L}^{\T}(N\rho) - T_{\rm TFW}^{\T}(N \rho) \leq \frac13 N \delta_P^{\T}(\rho),
\end{align}
where $\delta_P^\T$ is the cumulative anti-periodic Poincar\'e deficit
\begin{align}
    \delta_P^\T(\rho) \coloneqq 4 G(\rho) + \pi^2 \cS(\rho) - \pi^2 \int_\T \rho^3. \label{eq:Poincare deficit torus}
\end{align}
In particular, $\delta_P^\T(\rho) \geq 0$ for any normalized density profile. 
\end{theorem}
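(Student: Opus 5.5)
Both bounds will be proved separately. The lower bound follows from Theorem~\ref{thm:LT_torus}, and the upper bound from an explicit trial density matrix built in cumulative (anti-periodic) coordinates.

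\emph{Lower bound.} Write the density as $N\rho$ with $\int_\T\rho=1$. Since $G(N\rho)=NG(\rho)$, $\int(N\rho)^3=N^3\int\rho^3$ and $\cS(N\rho)=N\cS(\rho)$, Theorem~\ref{thm:LT_torus} gives, for any $\gamma$ with $\rho_\gamma=N\rho$,
\[
T[\gamma]\geq c_{\rm TF}N^3\int_\T\rho^3+\frac{2\pi^2}{3}N\int_\T\rho^3-3NG(\rho)-\mathbf 1_{N\,\rm odd}\pi^2N\cS(\rho).
\]
Subtracting $T^\T_{\rm TFW}(N\rho)=c_{\rm TF}N^3\int\rho^3-\frac N3G(\rho)+\theta_N\pi^2N\cS(\rho)$ leaves
\[
N\Big(\tfrac{2\pi^2}{3}\int\rho^3-\tfrac83G(\rho)-(\theta_N+\mathbf 1_{N\,\rm odd})\pi^2\cS(\rho)\Big).
\]
In both parity cases $\theta_N+\mathbf 1_{N\,\rm odd}=\frac23$, so this equals $-\frac23N\delta_P^\T(\rho)$. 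This is the left inequality.

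\emph{Upper bound.} We change to the cumulative coordinate $s(x)=\int_0^x\rho\in[0,1)$ and use the trial state $\gamma=\sum_k|\psi_k\rangle\langle\psi_k|$ with $\psi_k=\sqrt{\rho}\,\ee^{2\pi\ii(k+\beta)s(x)}$. Here $k$ runs over $N$ consecutive integers, and $\beta\in\{0,\frac12\}$ together with the centring of the $k$ are chosen so that the phases are periodic, i.e.\ $k+\beta\in\Z$. The $\psi_k$ are orthonormal because $\int\rho\,\ee^{2\pi\ii(k-k')s}=\int_0^1\ee^{2\pi\ii(k-k')s}\dd s$. Moreover $\rho_\gamma=N\rho$, and the cross terms cancel by symmetry of the $k$'s. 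We then get
\[
T[\gamma]=NG(\rho)+4\pi^2\Big(\sum_k(k+\beta)^2\Big)\int_\T\rho^3 .
\]
This is a $\rho$-independent quadratic sum; for odd $N$ it equals $\frac{N(N^2-1)}{12}$, which yields $T[\gamma]=NG+\frac{\pi^2}{3}(N^3-N)\int\rho^3$.

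This ansatz alone does not produce the $\cS$ terms. To obtain them, I will twist the phase by a non-uniform function, using the same mechanism as in the optimizers~\eqref{eq:LT_torus_equality_gamma}. The phase is $2\pi(k+\beta)\,\phi(s)$, where $\phi$ is a reparametrization of $[0,1]$ with $\phi(1)-\phi(0)=1$. The relevant quadratic functional is then $\int\rho^3|\phi'(s)|^2$-type. Alternatively, I will average over a shifted family $\ee^{\pm\ii\nu}$ as in the even case of Theorem~\ref{thm:LT_torus}, so that an extra term $\int\rho\,|\nu'|^2$ appears. Optimizing $\nu'\propto1/\rho$ under a fixed winding gives exactly $\pi^2\cS(\rho)$ times a parity-dependent multiple.

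With this in hand, the computation should reduce to
\[
T[\gamma]-T^\T_{\rm TFW}(N\rho)=\frac N3\Big(4G-\pi^2\int\rho^3+\pi^2\cS\Big)+(\text{non-positive remainder}),
\]
where the remainder is needed to absorb the $\theta_N$ discrepancy. That yields the right inequality.

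\emph{Main obstacle.} The hard step is getting the parity-dependent shell term with exactly the coefficient $\theta_N$ and the $\frac13$ prefactor in the upper bound. If the twisted ansatz is awkward, a fallback is to express everything in cumulative coordinates $a(s)$, as in Remark~\ref{rem:Dirichlet Poincare deficit}. There $G$, $\int\rho^3$ and $\cS$ become $\int|a'|^2$, $\int a^2$ and $(\int a^{-2}\cdot a^{2}\ldots)$-type quantities. I would then optimize the trial phases explicitly within that one-dimensional variational problem.

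Finally, $\delta_P^\T\geq0$ follows at once from the two-sided bound, since it forces $-\frac23N\delta_P^\T\leq\frac13N\delta_P^\T$. It can also be checked directly as the anti-periodic Poincaré inequality on $[0,1]$ for $a$, whose lowest eigenvalue is $\pi^2$.
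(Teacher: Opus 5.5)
Your lower bound is correct and is the paper's argument: rescale Theorem~\ref{thm:LT_torus} and use $\theta_N+\mathbf{1}_{N\,\mathrm{odd}}=\frac23$. Deducing $\delta_P^\T\geq0$ from the two-sided bound is also fine.

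The upper bound, however, is not established, and the concrete ingredients you propose do not give it.
\begin{itemize}
\item Your constraint $k+\beta\in\Z$ forces integer frequencies, which cannot be centered when $N$ is even. A best choice $\{-\frac N2+1,\dots,\frac N2\}$ has $\sum_k k^2=\frac{N(N^2-1)}{12}+\frac N4$. This produces an extra $\pi^2N\int_\T\rho^3$ where $\pi^2N\cS(\rho)$ is needed. By Cauchy--Schwarz $\cS(\rho)\leq\int_\T\rho^3$, with equality only for constant $\rho$, so this is too weak.
\item Your first repair, a phase $2\pi(k+\beta)\phi(s)$ with a nontrivial reparametrization $\phi$, is not admissible. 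One has $\int_0^1\ee^{2\pi\ii(k-k')\phi(s)}\,\dd s\neq\delta_{kk'}$ unless $\phi(s)=s+c$, so orthonormality, and hence $\gamma\leq1$, is lost. The optimizers you cite use $2k\,\mathrm{am}(2K(m)x\,|\,m)=2\pi kS(x)$, which is just the uniform cumulative phase.
\item Your second repair, $\pm$-averaging a common phase $\ee^{\pm\ii\nu}$ over integer modes, removes the only term linear in $\nu'$. It can therefore only add $N\int_\T\rho|\nu'|^2\geq0$.
\item There is no ``non-positive remainder'' to find. The right trial states give $T[\gamma]-T^\T_{\rm TFW}(N\rho)=\frac N3\delta_P^\T(\rho)$ exactly, in both parities; $\theta_N$ is calibrated precisely so that this happens.
\end{itemize}

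The missing idea is a parity-dependent trial state, which is exactly the paper's Lemma~\ref{lem:kinetic_energy_upper_bound}.

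For odd $N$, your own computation already finishes the proof. Centered integer modes give $T[\gamma]=NG(\rho)+\frac{\pi^2}{3}(N^3-N)\int_\T\rho^3$. Since $\theta_N=-\frac13$, the shell term is supplied by $T^\T_{\rm TFW}$ itself. The difference is then $\frac43NG(\rho)-\frac{\pi^2}{3}N\int_\T\rho^3+\frac{\pi^2}{3}N\cS(\rho)=\frac N3\delta_P^\T(\rho)$, with no twist at all.

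For even $N$, proceed as follows.
\begin{itemize}
\item Use the centered half-integer (anti-periodic) modes $k\in\frac12+\Z$, $|k|\leq\frac{N-1}{2}$, for which $\sum_k k^2=\frac{N(N^2-1)}{12}$.
\item Restore periodicity with one common phase $\ee^{\ii\chi}$, where $\chi'=\pi\cS(\rho)/\rho$, so that $\chi(1)-\chi(0)=\pi$. This is the minimizer of Lemma~\ref{lem:optimal phase}.
\item A common phase preserves orthonormality, and the cross term $4\pi^2\cS(\rho)\sum_k k$ vanishes by centering.
\item The phase costs exactly $N\int_\T\rho|\chi'|^2=\pi^2N\cS(\rho)$. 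Together with $\theta_N=\frac23$, this again gives exactly $\frac N3\delta_P^\T(\rho)$.
\end{itemize}
Equivalently, you can keep your integer modes $\{-\frac N2+1,\dots,\frac N2\}$ and multiply them by the \emph{non-averaged} periodic phase with $\nu'=\pi\cS(\rho)/\rho-\pi\rho$.

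Finally, you must treat densities with a zero. There $\int_\T\rho^{-1}=\infty$, so $\cS(\rho)=0$ and $\chi$ is undefined. Instead, place the cut of the cumulative coordinate at a zero of $\rho$. The half-integer modes are then continuous across the cut, because $\sqrt\rho$ vanishes there.
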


\begin{remark}[Anti-periodic Poincar\'e deficit] \label{rem:anti-periodic Poincare deficit} As shown later in Section~\ref{sec:cumulative coordinates} we can write
\begin{align*}
    \delta_P^\T(\rho) = \int_0^1 |f_a'(s)|^2 \dd s - \pi^2 \int_0^1 |f_a(s)|^2 \dd s
\end{align*}
where $f_a(s) = a(s)\ee^{\ii \chi_a(s)}$ is the energetically optimal anti-periodization of the periodic density profile in cumulative coordinates. The right-hand side is precisely the Poincar\'e deficit of $f_a$ for the anti-periodic Laplacian in the unit interval.
\end{remark}

At first glance, it may seem strange to measure the energy error in terms of the Poincar\'e deficit. However, as shown later in Proposition~\ref{prop:zero deficit}, this deficit is zero if and only if the density comes from an optimizer of the gradient-corrected LT inequality. Consequently, Theorems~\ref{thm:TFW_line} and~\ref{thm:TFW_torus} give a rigorous justification of the Thomas--Fermi--von Weizs\"acker (plus shell correction) approximation whenever the normalized density profile is close to the two-dimensional Poincar\'e manifolds\footnote{Strictly speaking, $\mathcal{D}_P^\T$ is not a manifold around the homogeneous density $\rho_0$ as $\rho_0(x-x_0) = \rho_0$ for any $x_0\in\T$.}
\begin{align}
    &\mathcal{D}_P^{\R} \coloneqq \{ \rhor{\mu}(\cdot-x_0) : x_0 \in \R, \quad \mu >0, \quad \text{where $\rhor{\mu}$ is defined by~\eqref{eq:rho line}} \}, \label{eq:density manifold line}
    \intertext{and}
    &\mathcal{D}_P^\T \coloneqq \{ \rhot{m}(\cdot-x_0): x_0 \in \T, \quad m \in [0,1)  \quad\text{where $\rhot{m}$ is defined by~\eqref{eq:LT_torus_equality_density}} \}. \label{eq:density manifold torus}
\end{align}
In particular, these approximations are \emph{exact} on these manifolds.
\begin{corollary}[Exact evaluation on the optimizing density manifold]
\label{cor:exact_manifold} Let $\Omega = \R$ or $\Omega = \T$. Then, for any $N\in \N$,
\begin{align*}
    &T_{\rm L}^\Omega(N \rho) = T_{\rm TFW}^\Omega(N \rho), \quad \text{for any $\rho \in \mathcal{D}_P^\Omega$,}
\end{align*}
where $\mathcal D_P^\R$ and $\mathcal D_P^\T$ are the manifolds defined in~\eqref{eq:density manifold line} and~\eqref{eq:density manifold torus}. Note that the case $m=0$ in $\mathcal{D}_P^\T$ corresponds to the homogeneous density. 
\end{corollary}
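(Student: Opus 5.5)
The plan is to deduce the corollary directly from Theorems~\ref{thm:TFW_line} and~\ref{thm:TFW_torus}. Once we know that the Poincar\'e deficits vanish on the manifolds, both the upper and the lower bounds there collapse to zero. Concretely, for $\Omega=\R$ we must check that $\delta_P^\R(\rhor{\mu}(\cdot-x_0))=0$ for every $\mu>0$ and $x_0\in\R$. For $\Omega=\T$ we must check that $\delta_P^\T(\rhot{m}(\cdot-x_0))=0$ for every $m\in[0,1)$ and $x_0\in\T$. Given this, Theorem~\ref{thm:TFW_line} (resp.\ Theorem~\ref{thm:TFW_torus}) yields $0\le T_{\rm L}^\Omega(N\rho)-T_{\rm TFW}^\Omega(N\rho)\le 0$. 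Note that the mass-one condition holds since $\int\rhor{\mu}=1$ and $\int_\T\rhot{m}=1$, and that $\sqrt\rho\in\mH^1$ holds by smoothness and exponential decay, resp.\ periodicity.

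To verify the vanishing of the deficits, there are two routes. The cleanest is to use the characterization of equality in Theorems~\ref{thm:LT_line} and~\ref{thm:LT_torus} and avoid computation.

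On $\R$, the state $\gamma$ from~\eqref{eq:orbitals line} has density $N\rhor{\mu}$ and saturates~\eqref{eq:LT_line_main}. Hence
\[T_{\rm L}^\R(N\rho)\le T[\gamma]=\tfrac{\pi^2}{3}(1+\tfrac{2}{N^2})N^3\int\rho^3-3N G(\rho),\]
using $G(N\rho)=NG(\rho)$. Conversely, \eqref{eq:LT_line_main} applied to any competitor gives the reverse inequality, so $T_{\rm L}^\R(N\rho)$ equals this expression. Subtracting $T_{\rm TFW}^\R(N\rho)=\frac{\pi^2}{3}N^3\int\rho^3-\frac N3G(\rho)$ gives
\[T_{\rm L}^\R(N\rho)-T_{\rm TFW}^\R(N\rho)=\tfrac{2\pi^2N}{3}\int\rho^3-\tfrac{8N}{3}G(\rho)=-\tfrac23N\delta_P^\R(\rho).\]
The lower bound in Theorem~\ref{thm:TFW_line} is thus attained exactly. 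Since this holds for every $N$, the remaining task is to show $\delta_P^\R(\rho)=0$.

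I will do this by a direct computation, which is elementary. With $\rho=\frac{\mu}{\pi}\mathrm{sech}(\mu x)$ we have $(\sqrt\rho)'=-\frac{\mu}{2}\tanh(\mu x)\sqrt\rho$. Therefore
\[4G=\frac{\mu^3}{\pi}\int\tanh^2\mathrm{sech}(\mu x)\,dx \quad\text{and}\quad \pi^2\int\rho^3=\frac{\mu^3}{\pi}\int\mathrm{sech}^3(\mu x)\,dx.\]
Both equal $\frac{\mu^2}{\pi}\cdot\frac{\pi}{2}$, using $\int\mathrm{sech}^3=\pi/2$ and $\int\mathrm{sech}-\int\mathrm{sech}^3=\pi-\pi/2$. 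Alternatively, one can appeal to Remark~\ref{rem:Dirichlet Poincare deficit}: the cumulative profile of $\rhor{\mu}$ should be $a(s)=c\sin(\pi s)$, the first Dirichlet eigenfunction, which forces zero deficit.

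On $\T$, I would likewise argue via Remark~\ref{rem:anti-periodic Poincare deficit}. Since $\rhot{m}$ is built from the Jacobi amplitude, in cumulative coordinates $f_a$ should be a combination of $\ee^{\pm\ii\pi s}$, the first anti-periodic eigenfunctions. The case $m=0$ (where $\rho=1$ gives $4G=0$ and $\cS=1=\int\rho^3$) is immediate.

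The main obstacle is this torus verification for general $m$. If the cumulative-coordinate identification is not available at this point, I would instead use the equality case directly. For odd $N$, apply the energy argument above with the saturating $\gamma_m$ for two different odd $N$. The difference $T_{\rm L}-T_{\rm TFW}$ is linear in $N$ with coefficient a fixed multiple of $\delta_P^\T$, while Theorem~\ref{thm:TFW_torus}'s bounds pin the quantities. Comparing the $N^3$ and $N$ coefficients then forces the deficit to vanish. This is the route I would attempt first, falling back on explicit elliptic-integral identities for $G(\rhot{m})$ and $\int(\rhot{m})^3$ otherwise.
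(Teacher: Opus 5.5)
Your treatment of $\R$ is correct. The computation $4G(\rhor{\mu})=\pi^2\int_\R(\rhor{\mu})^3=\mu^2/2$ checks out, so $\delta_P^\R(\rhor{\mu})=0$ and Theorem~\ref{thm:TFW_line} gives the claim. The paper instead obtains $\delta_P^\R=0$ from the cumulative-coordinate characterization in Proposition~\ref{prop:zero deficit}, but your direct computation is equally valid.

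The torus case, however, is not proved. Your primary route only asserts that $f_a$ ``should be'' a combination of $\ee^{\pm\ii\pi s}$, and the fallback cannot work. Saturation of~\eqref{eq:LT_torus} by $\gamma_m$ yields, for every $N$ of either parity (since $\mathbf 1_{N\,\mathrm{odd}}+\theta_N=\tfrac23$ in both cases),
\[
T_{\rm L}^\T(N\rho)-T_{\rm TFW}^\T(N\rho)=-\tfrac23\,N\,\delta_P^\T(\rho).
\]
This only says that the lower bound in Theorem~\ref{thm:TFW_torus} is attained, and it is consistent with every value $\delta_P^\T(\rho)\geq 0$. The $N^3$ terms cancel identically, and several values of $N$ just reproduce the same linear identity. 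So no comparison of coefficients forces the deficit to vanish: the equality case of the lower bound alone carries no information about $\delta_P^\T$.

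What is missing is an independent evaluation on $\rhot{m}$. The paper's route is the ``if'' direction of Proposition~\ref{prop:zero deficit}, which it combines with Theorem~\ref{thm:TFW_torus}. By the converse part of Lemma~\ref{lem:inverse cumulative optimizers}, already available at this point, $a(s)=\frac{2K(m)}{\pi}\sqrt{1-m\sin^2(\pi s)}$. Hence $a^2=A+B\cos(2\pi s)$ with $0\leq B<A$, and $a=|f_+|$ for $f_+=p\ee^{-\ii\pi s}+q\ee^{\ii\pi s}$ with $p>q\geq 0$. Write the non-vanishing anti-periodic $f_+$ as $a\ee^{\ii\chi_+}$ with $\chi_+(1)-\chi_+(0)\in\pi+2\pi\Z$. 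Then Lemma~\ref{lem:optimal phase} gives $\int_0^1|a'|^2+\pi^2\big(\int_0^1 a^{-2}\big)^{-1}\leq\int_0^1|f_+'|^2=\pi^2\int_0^1 a^2$. This is $\delta_P^\T(\rhot{m})\leq 0$, hence $\delta_P^\T(\rhot{m})=0$.

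Alternatively, compute directly with $K=K(m)$:
\[
\int_0^1 a^2=\tfrac{4K^2}{\pi^2}\bigl(1-\tfrac m2\bigr),\qquad \int_0^1 a^{-2}=\tfrac{\pi^2}{4K^2\sqrt{1-m}},\qquad \int_0^1|a'|^2=4K^2\bigl(1-\tfrac m2-\sqrt{1-m}\bigr).
\]
These combine to zero deficit.

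If you want to keep your energy approach, the missing ingredient is this: for odd $N$, the optimizer $\gamma_m$ is exactly the trial state of Lemma~\ref{lem:kinetic_energy_upper_bound}, since $S(x)=\frac1\pi\mathrm{am}(2K(m)x\,|\,m)$. Its energy is computed there with equality, so $T[\gamma_m]-T_{\rm TFW}^\T(N\rho)$ equals both $\frac13N\delta_P^\T(\rho)$ and $-\frac23N\delta_P^\T(\rho)$. This forces $\delta_P^\T(\rhot{m})=0$, and the even-$N$ case then follows from Theorem~\ref{thm:TFW_torus}.
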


\begin{remark}[Non-exact approximation away from the equality manifold?]
We do not know whether exactness can also occur for densities outside this family. Proving such a statement would be very interesting but possibly requires a careful stability analysis of the gradient-corrected LT inequality derived here. This is left for future work.
\end{remark}
\subsubsection{Local kinetic-energy density approximation}
The next result shows that the TFW approximation can be justified even at the level of energy densities. More precisely, for a given density $\sqrt{\rho} \in \mH^1(\Omega)$, where $\Omega = \T$ or $\Omega = \R$, we denote by $\gamma_\rho$ an optimizer of the kinetic energy Lieb functional, i.e.,
\begin{align*}
   \gamma_\rho \in \mathfrak S(\Omega), \qquad  T[\gamma_\rho] = T_{\rm L}^\Omega(\rho), \qquad\text{and}\qquad  \rho_{\gamma_\rho} = \rho. 
\end{align*}
We then define the local kinetic-energy density of $\gamma_\rho$ as 
\begin{align}
    \tau_\rho(x) := (-\partial_x \gamma_\rho \partial_x)(x,x) = \sum_{j \in M} |\psi_j'(x)|^2, \quad \text{where}\quad \gamma_\rho = \sum_{j \in M} \ketbra{\psi_j}{\psi_j},  \label{eq:exact kinetic-energy density}
\end{align}
and $M \subset \N$ is an index set. Since $\gamma_\rho$ is one of the optimizers, $\tau_\rho$ depends on the choice of the optimizer, but the notation does not specify it. For the TFW approximation, we define the local energy density as
\begin{align}
    \tau^{\rm TFW}_\rho := \begin{dcases} c_{\rm TF} \rho^3 + \frac13 \rho'' -  \frac{1}{12} \frac{|\rho'|^2}{\rho}, \quad &\text{on the real line $\Omega = \R$,} \\ c_{\rm TF} \rho^3 + \frac13 \rho'' -  \frac{1}{12} \frac{|\rho'|^2}{\rho} + \theta_N \pi^2 \cS(\rho)^2 \frac{1}{\rho}, &\quad \text{on the torus $\Omega = \T$,} \end{dcases}
    \label{eq:TFW energy density}
\end{align}
where $\rho''$ is understood in the distributional sense. Note that (formally at least) 
\begin{align*}
    T_{\rm L}^\Omega(\rho) = \int_\Omega \tau_\rho(x) 
    \dd x  \qquad\text{and}\qquad T_{\rm TFW}^\Omega(\rho) = \int_\Omega \tau_\rho^{\rm TFW}(x) \dd x.
\end{align*}
To measure the error between the exact kinetic-energy density and its TFW approximation, we introduce the following metric. For $\sqrt{\rho} \in \mH^1(\Omega)$, we define $\mathcal{Y}_\rho$ as the space of continuous functions $\phi$ on $U_\rho \coloneqq \{x\in \Omega : \rho(x) > 0 \}$ such that $\phi$ is weakly differentiable on $U_\rho$ and
\begin{align}
	\norm{\phi}_{\mathcal{Y}_\rho(\Omega)}^2 := \mysup{x \in U_\rho} |\phi(x)|^2 + \int_{U_\rho} \rho(x) |\phi'(x)|^2 \dd x <\infty.  \label{eq:Yrho norm}
\end{align}
We then denote the corresponding dual norm by
\begin{align}
    \norm{F}_{\mathcal{Y}_\rho^\ast(\Omega)} \coloneqq \sup_{0 \not\equiv\phi \in \mathcal{Y}_\rho} \frac{|\inner{F, \phi}|}{\norm{\phi}_{\mathcal{Y}_\rho}}. \label{eq:dual Yrho topology}
\end{align}
This norm may seem slightly exotic at first, but as we point out in Remark~\ref{rem:Yrho topology} below, it is natural for our purposes. Using this norm, we have the following result.

\begin{theorem}[Kinetic-energy density approximation]
\label{thm:local kinetic-energy density}
Let $\Omega=\T$ or $\Omega=\R$, and let $\sqrt{\rho} \in \mH^1(\Omega)$ be a mass one density profile. For $N\in \N$, let $\tau_{N\rho}$ and $\tau_{N\rho}^{\rm TFW}$ be respectively the kinetic-energy density defined in~\eqref{eq:exact kinetic-energy density} and its TFW approximation defined in~\eqref{eq:TFW energy density}. Then we have
\begin{align}
    \norm{\tau_{N\rho}-\tau_{N\rho}^{\rm TFW}}
    _{\mathcal{Y}_\rho^\ast(\Omega)}
    \leq
    N\left[
         C(\rho)
        \sqrt{\delta_{\rm P}(\rho)}
        +
        \delta_{\rm P}(\rho)
    \right],
    \label{eq:local_kinetic_Yrho_estimate}
\end{align}
where $\norm{\cdot}_{\mathcal{Y}_\rho^\ast}$ is defined in~\eqref{eq:dual Yrho topology}, $\delta_P(\rho)$ denote the appropriate Poincar\'e deficits in~\eqref{eq:Poincare deficit real line} and~\eqref{eq:Poincare deficit torus}, and
\begin{align} C(\rho) =
    \frac53 + \frac43 \sqrt{G(\rho)} + \frac23 \pi \sqrt{\cS(\rho)}, 
    \label{eq:Crho_local_kinetic}
\end{align}
with the convention that $\cS(\rho) = 0$ if $\Omega = \R$. 
\end{theorem}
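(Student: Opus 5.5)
We plan to prove the estimate by testing the difference against a function $\phi\in\mathcal{Y}_\rho$ and reducing everything to cumulative coordinates, where Theorem~\ref{thm:LT_line}/\ref{thm:LT_torus} and Theorems~\ref{thm:TFW_line}/\ref{thm:TFW_torus} are expressed through the Poincar\'e deficit.

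\textbf{Step 1: writing the exact minimizer in cumulative coordinates.} Let $s(x)=\int_{-\infty}^x\rho$ (respectively $\int_0^x\rho$ on $\T$). Any optimizer $\gamma_{N\rho}=\sum_j\ketbra{\psi_j}{\psi_j}$ has orbitals that can be written as
\[
\psi_j(x)=\sqrt{\rho(x)}\,u_j(s(x)),
\]
where the $u_j$ are orthonormal in $L^2(0,1)$ and $\sum_j|u_j|^2=N$ holds pointwise. Expanding $|\psi_j'|^2$ yields
\[
\tau_{N\rho}=\frac{|\rho'|^2}{4\rho}N+\rho^3\sum_j|u_j'(s)|^2+\tfrac12\rho'\,\rho\,\partial_s\Big(\sum_j|u_j|^2\Big).
\]
The last term vanishes because $\sum_j|u_j|^2$ is constant. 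Thus
\[
\tau_{N\rho}-\tau^{\rm TFW}_{N\rho}=\rho^3\Big(\sum_j|u_j'|^2-\tfrac{\pi^2}{3}N^3\Big)+N\Big(\tfrac13\tfrac{|\rho'|^2}{\rho}-\tfrac13\rho''\Big)+(\text{shell term}),
\]
after collecting the von Weizs\"acker pieces and using $\tau^{\rm TFW}_{N\rho}=c_{\rm TF}N^3\rho^3+\frac N3\rho''-\frac N{12}\frac{|\rho'|^2}{\rho}$.

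\textbf{Step 2: comparing the orbital part with the plane-wave reference.} On the optimizer, $T_{\rm L}(N\rho)$ equals the minimum over such $u_j$ of $\int\rho^3\sum|u_j'|^2+N G(\rho)$. We compare $\sum_j|u_j'|^2$ with the free Fermi sea $u_k^0=e^{2\pi i k s}$, which has $\sum_k|{u_k^0}'|^2=\frac{\pi^2}{3}N^3+$ (parity correction). The deficit from Theorem~\ref{thm:TFW_line}/\ref{thm:TFW_torus} controls
\[
\int_0^1 a(s)^2\Big(\sum_j|u_j'|^2-\text{reference}\Big)\dd s\le N\delta_P
\]
in both directions, together with the sign information from the LT equality structure. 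To test against $\phi$, we transfer $\phi(x)$ to $\tilde\phi(s)=\phi(x(s))$ and bound
\[
\int\rho^3\Big(\textstyle\sum|u_j'|^2-\mathrm{ref}\Big)\phi
\]
by splitting $\phi=\bar\phi+(\phi-\bar\phi)$. The constant part gives $\sup|\phi|\cdot N\delta_P$ by the global theorem. For the fluctuation we integrate by parts in $s$ and use Cauchy--Schwarz, with $\int\rho|\phi'|^2$ appearing naturally, since $\partial_s=\rho^{-1}\partial_x$.

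\textbf{Step 3: the gradient and shell terms.} The difference $\frac13\big(\frac{|\rho'|^2}{\rho}-\rho''\big)$ should be reorganized via the Euler--Lagrange equation of the Poincar\'e problem, namely $-a''=\pi^2a$ with $a=\rho$ in $s$-coordinates. Its pairing with $\phi$ becomes, after one integration by parts,
\[
\int\rho'\phi'+\ldots .
\]
By Cauchy--Schwarz this is bounded by $\sqrt{G(\rho)}\big(\int\rho|\phi'|^2\big)^{1/2}$ times the $L^2$-residual $\|a''+\pi^2a\|$-type quantity, which we estimate by $\sqrt{\delta_P}$ through a spectral-gap argument for the Dirichlet (respectively anti-periodic) Laplacian on $(0,1)$. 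Similarly, the shell term $\theta_N\pi^2\cS^2/\rho$ pairs with $\phi$ and yields the $\pi\sqrt{\cS}$ contribution.

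\textbf{Main obstacle.} The hardest step is the fluctuation part in Step 2: obtaining a \emph{local} square-root bound $\sqrt{\delta_P}$ from the global deficit. The first approach is to show that the orbitals of the optimizer are $\sqrt{N\delta_P}$-close in a weighted $H^1$ sense to the Poincar\'e-manifold orbitals, using the coercivity of the quadratic form $\int|f'|^2-\pi^2\int|f|^2$ on the complement of its kernel.
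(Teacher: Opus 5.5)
There is a genuine gap, exactly at the step you flag as the main obstacle, and neither of your two suggestions closes it.

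\textbf{Step 2 only gives global control.} After Step~2 you control only the \emph{integral} $\int_0^1 a^2\bigl(\sum_j|u_j'|^2-\mathrm{ref}\bigr)\,\dd s$. Integrating the fluctuation part by parts just moves the problem to the primitive $F(s)=\int_0^s a^2\bigl(\sum_j|u_j'|^2-\mathrm{ref}\bigr)$, about which nothing local is known. So Cauchy--Schwarz against $\int a^2|\phi'|^2$ has nothing small to pair with.

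\textbf{The coercivity fallback does not help.} Coercivity of $\int|f'|^2-\pi^2\int|f|^2$ is a single-function statement. It bounds the distance of the (phase-corrected) profile $a$ from $\mathrm{span}\{\ee^{\pm\ii\pi s}\}$, i.e.\ of the density from the Poincar\'e manifold. It says nothing about the $N$ orbitals of a minimizer at the fixed density $N\rho$; that would need a quantitative stability result for the density-constrained minimization, which is not available. Moreover, even granting $\sum_j\int a^2|(u_j-u_j^\ast)'|^2\lesssim N\delta_P$, comparing $\sum_j|u_j'|^2$ with $\sum_j|(u_j^\ast)'|^2$ by Cauchy--Schwarz costs the square root of the total kinetic energy, of order $N^{3/2}$. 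You would get $N^2\sqrt{\delta_P}$ instead of $N\sqrt{\delta_P}$.

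\textbf{The missing idea: the lower bound localizes, at a cost of order $N$ only.} Lemma~\ref{lem:twisted_filling} holds for an arbitrary bounded weight, so apply it with weight $wa$ for real $w\in\mathcal{Y}_a$. Since $\sum_j|u_j|^2\equiv N$ forces $\sum_j\mathrm{Re}\,\overline{u_j}u_j'=0$, one has exactly $\sum_j|(awu_j)'|^2=w^2e_N+Na^2|w'|^2+N(w^2)'aa'$, with $e_N:=\sum_j|(au_j)'|^2$. Hence $\inner{e_N-\ell_N,w^2}\geq-N\int_0^1a^2|w'|^2$, where $\ell_N=\frac{\pi^2}{3}(N^2+2)Na^2+N(aa')'$ (minus $N\pi^2\cS(\rho)^2a^{-2}$ for odd $N$ on $\T$, via phase averaging). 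Combine this with the global upper bound $\inner{e_N-\ell_N,1}\leq N\delta_P$ from the trial states of Lemma~\ref{lem:kinetic_energy_upper_bound}. Testing with $w^2=1\pm tr$ and optimizing in $t$ then gives $|\inner{e_N-\ell_N,r}|\leq N(\sqrt{\delta_P}+\delta_P)\norm{r}_{\mathcal{Y}_a}$, with no information on the orbitals at all. If your reference is taken from those trial states, your orbital fluctuation is exactly $e_N-\ell_N-N\mathcal{R}_{\rm P}^\Omega[\rho]$, so this is precisely the estimate your Step~2 lacks.

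\textbf{Step 3 also needs repair.} First, $a''+\pi^2a$ need not lie in $\mL^2$ when $\sqrt\rho\in\mH^1$. Second, the spectral gap of $L_\pi=-\partial_s^2-\pi^2$ bounds $\norm{L_\pi f}^2$ from \emph{below} by a multiple of $\inner{f,L_\pi f}$, which is the wrong direction. Instead, write $\mathcal{R}_{\rm P}^\Omega[\rho]=\mathrm{Re}\,\overline{f}\,L_\pi f$ with the antiperiodic $f=a\ee^{\ii\chi}$, $\chi'=\pi\cS(\rho)a^{-2}$, so that $\delta_P=\inner{f,L_\pi f}$. Then use Cauchy--Schwarz for the nonnegative form: $|\inner{g,\mathcal{R}_{\rm P}^\Omega[\rho]}|^2\leq\inner{gf,L_\pi gf}\,\delta_P$. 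The factors $2\sqrt{G(\rho)}$ and $\pi\sqrt{\cS(\rho)}$ come from bounding $\inner{gf,L_\pi gf}$ by $\norm{g}_{\mathcal{Y}_a}^2$, not from pairing $\rho'$ with $\phi'$.
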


\begin{remark}[Stability of approximate minimizers] \label{rem:approximate minimizers} In fact, Theorem~\ref{thm:local kinetic-energy density} does not require \(\gamma\) to be an exact minimizer of $T_{\rm L}$. The same conclusion holds for approximate minimizers: for any $0\leq \gamma \leq 1$ with density $N\rho$, and such that 
\[ T[\gamma] < T_{\rm TFW}(N\rho) + \left(c-\frac23\right)N \delta_P(\rho), \quad \mbox{for some $c>0$},\]
the associated kinetic-energy density $\tau_\gamma (x) = (-\partial_x \gamma \partial_x)(x,x)$ satisfies 
\begin{align}
    \norm{\tau_\gamma - \tau_{N\rho}^{\rm TFW}}_{\mathcal{Y}_\rho^\ast} \leq N\left(\left(C(\rho) - 1 + \sqrt{c}\right)\sqrt{\delta_P(\rho)} + c \delta_P(\rho) \right). \label{eq:general estimate}
\end{align}
In particular, for any sequence $\{ \gamma_N\}_{N\in \N}$ with $0\leq \gamma_N \leq 1$ and density $N\rho$ we have
\[ T[\gamma_N] \leq c_{\rm TF}\int_\Omega \rho^3 N^3 + \mathcal{O}_\rho(N), \qquad \text{implies}\qquad \tau_{\gamma_N} = c_{\rm TF} \rho^3 N^3 + \mathcal{O}_\rho(N). \]
\end{remark}

\begin{remark}[On the $\mathcal{Y}_\rho$ topology] \label{rem:Yrho topology} Let us make a few comments about the $\mathcal{Y}_\rho$ topology introduced above. 
\begin{enumerate}
\item (Topology on $v$-representable densities) In the case $\Omega = \T$, for any density satisfying $0< m \leq \rho \leq M$, we have the equivalence 
\[ \min\{1, m\} \norm{\phi}_{\mH^1} \leq \norm{\phi}_{\mathcal{Y}_\rho} \leq (C_{\rm Sob} + M) \norm{\phi}_{\mH^1}, \]
where $C_{\rm Sob}$ is the optimal constant in the Sobolev embedding $\mH^1(\T) \hookrightarrow \mL^\infty(\T)$. In particular, in the set of $v$-representable densities 
\begin{align}
    \mathcal{D}_1(\T) := \left\{ \rho :\sqrt{\rho} \in \mH^1(\T), \quad \int_\T \rho = 1,\quad \rho(x)>0, \quad \mbox{for any $x\in \T$} \right\},
\end{align}
the error can be measured in the density-independent $\mH^{-1}$ topology at the cost of multiplying it by the factor $(C_{\rm Sob} + \norm{\rho}_{\mL^\infty})$. 
\item (von-Weizs\"acker energy density) For $\sqrt{\rho} \in\mH^1$, the space $\mathcal{Y}_\rho^\ast$ is the natural ambient space for the term $\rho''$ in~\eqref{eq:TFW energy density}. Indeed, from Cauchy-Schwarz we have, for any $\psi \in \cY_\rho$,
\begin{align*}
    |\inner{\rho'', \psi}| = \left|\int_\Omega \rho'(x) \psi'(x) \dd x\right| \leq \left(\int_\Omega \frac{|\rho'|^2}{\rho}\right)^\frac12 \left( \int_\Omega \rho |\psi'|^2 \right)^{\frac12} \leq 2G(\rho)^{\frac12} \norm{\psi}_{\mathcal{Y}_\rho}. 
\end{align*}
\end{enumerate}
\end{remark}

\subsubsection{The inverse Kohn--Sham problem}

We now apply the previous results to study the inverse Kohn-Sham potential associated to a given density. As the existence of such a potential is not yet understood in the real line setting, we shall focus here on the periodic setting, whose existence and uniqueness has been recently settled \cite{SutPenRug24,Cor25,Cor26a}.

In this setting, it was shown in these works that for any normalized density profile $\rho \in \mathcal{D}_1(\T)$ and any $N\in\N$, there exists a unique (modulo additive constants) potential $v_{\rm KS}(N\rho) \in \mH^{-1}(\T;\R)$ such that $N\rho$ is a ground-state density of the operator 
\[ -\Delta_{\T^N} + \sum_{j=1}^N v_{\rm KS}(x_j) \quad \text{acting on} \quad \bigwedge^N \mL^2(\T).\]
On the other hand, one can define the inverse Thomas--Fermi--von Weizs\"acker potential with shell correction for any such density as 
\begin{align}
    v_{\rm TFW}(\rho)
    :=-3c_{\rm TF} \rho^2
    -\frac13\frac{(\sqrt\rho)''}{\sqrt\rho}
    -\theta_N\frac{\pi^2 \cS(\rho)^2}{\rho^2}.
    \label{eq:second_order_potential_main}
\end{align}
Then the next result estimates the difference between $v_{\rm TFW}$ and $v_{\rm KS}$ in the natural Sobolev space of potentials modulo additive constants $\mH^{-1}(\T)/\{1\}$.

\begin{theorem}[Kohn--Sham potential approximation]
\label{thm:KS_high_density}
Let $\rho\in\cD_1(\T)$ and suppose that $\rho \in \mH^2(\T)$. Then for any $N\in \N$, we have
\begin{align}
    \norm{N^{-2} v_{\rm KS}(N\rho)
        -N^{-2}v_{\rm TFW}(N\rho)}_{\mH^{-1}(\T)/\{1\}}
    \leq
    \frac{D_\rho}{N^2} 
    \left(\sqrt{\delta_{\rm P}(\rho)}
        +\delta_{\rm P}(\rho)
    \right),
    \label{eq:KS_corrected_deficit_main}
\end{align}
where
\begin{align} 
	D_\rho = 4 C(\rho)(1+2C_{\rm Sob})\nor{1/\rho}{\infty}^2\norm{\rho}_{\mH^2},  
\end{align}
and $C(\rho)$ is the constant defined in~\eqref{eq:Crho_local_kinetic} and $C_{\rm Sob}$ is the optimal constant in the Sobolev embedding $\mH^1(\T) \hookrightarrow \mL^\infty(\T)$. In particular, for any $\rho \in \mathcal{D}_1(\T) \cap \mH^2(\T)$,
\begin{align}
    \frac{1}{N^2} v_{\rm KS}(N\rho) = v_{\rm TF}(\rho) + \mathcal{O}_\rho(N^{-2}), \qquad  \label{eq:TF leading}
\end{align}
where $v_{\rm TF}(\rho) = -\pi^2 \rho^2$ is the inverse Thomas--Fermi potential. Moreover, the rate of convergence is optimal in general.
\end{theorem}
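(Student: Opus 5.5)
The plan is to turn the kinetic-energy density estimate of Theorem~\ref{thm:local kinetic-energy density} into a potential estimate through the Euler--Lagrange equation of the Kohn--Sham optimizer. Since $N\rho\in\mathcal{D}_1(\T)$ is $v$-representable, the optimizer $\gamma_{N\rho}$ of $T^\T_{\rm L}(N\rho)$ is a ground state of $-\partial_x^2+v_{\rm KS}$. Its orbitals therefore satisfy $-\psi_j''+v_{\rm KS}\psi_j=\epsilon_j\psi_j$, possibly with a fractionally occupied top shell, whose eigenvalue is $\epsilon_F$.

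\textbf{Step 1: a local identity for $v_{\rm KS}$.} Multiply each orbital equation by $\overline{\psi_j}$, take real parts, and sum with the occupations. Using $\Re(\overline\psi\psi'')=\frac12|\psi|^2{}''-|\psi'|^2$, this gives the distributional identity
\[ v_{\rm KS}\,N\rho = \tfrac{1}{2}(N\rho)'' - \tau_{N\rho} + \sum_j n_j\epsilon_j|\psi_j|^2 .\]
The TFW side has the analogous structure:
\[ v_{\rm TFW}(N\rho)\,N\rho = \tfrac{1}{2}(N\rho)'' - \tau^{\rm TFW}_{N\rho} + (\text{terms} ) .\]
I would verify this by direct computation from~\eqref{eq:TFW energy density} and~\eqref{eq:second_order_potential_main}. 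The point is that $-3c_{\rm TF}\rho^3$ arises from $-c_{\rm TF}\rho^3$ together with a Lagrange-type term $-2c_{\rm TF}\rho^3$. This leftover is not a constant multiple of $\rho$, so the naive comparison fails.

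\textbf{Step 1, corrected route.} Instead I would test against variations, exploiting that $v_{\rm KS}$ is (minus) the derivative of $T_{\rm L}$ at $N\rho$ and $v_{\rm TFW}$ is minus the derivative of $T_{\rm TFW}$. Concretely, for $\phi\in\mH^1(\T)$ with mean zero, consider the unitary change of variables generated by the flow of a vector field $X$ with $(\rho X)'=-\rho\phi$. More precisely, use the push-forward $\gamma\mapsto U_t\gamma U_t^*$, which transports $N\rho$ along $X$. The first variation of $T[U_t\gamma U_t^*]$ is a linear functional in $X'$ and $X'''$ whose density is exactly $\tau_{N\rho}$ (a virial/stress identity: $\frac{d}{dt}T=-2\int \tau X' + \tfrac12\int N\rho X'''$). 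The same computation for $T_{\rm TFW}$ yields $\tau^{\rm TFW}$ in place of $\tau$, up to the shell and gradient terms, which are exactly consistent by the definitions. Optimality of $\gamma_{N\rho}$ and the ground-state property give
\[ \langle v_{\rm KS}-v_{\rm TFW},\,(\rho X)'\rangle = 2\langle \tau_{N\rho}-\tau^{\rm TFW}_{N\rho}, X'\rangle .\]
This is the key identity.

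\textbf{Step 2: estimates.} Given mean-zero $\phi\in\mH^1$, solve $(\rho X)' = -\rho\phi$, i.e.\ $X=-\rho^{-1}\int_0^x\rho\phi$ (well-defined after subtracting a constant; mean zero makes it periodic). Then bound $\norm{X'}_{\mathcal{Y}_\rho}\lesssim \nor{1/\rho}{\infty}^2\norm{\rho}_{\mH^2}(1+2C_{\rm Sob})\norm{\phi}_{\mH^1}$. Here $X'$ involves $\phi$, $\rho'/\rho$, and its derivative involves $\rho''$ and $\phi'$; I would use the sup-norm via Sobolev. Combining this with Theorem~\ref{thm:local kinetic-energy density} and duality gives~\eqref{eq:KS_corrected_deficit_main}, since the $\mH^{-1}/\{1\}$ norm is the supremum over mean-zero test functions.

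\textbf{Consequences and main obstacle.} Estimate~\eqref{eq:TF leading} follows because $N^{-2}v_{\rm TFW}(N\rho)=-\pi^2\rho^2+O(N^{-2})$ in $\mH^{-1}$ for $\rho\in\mH^2$. For optimality, I would take $\rho=\rhot m$ with $m\in(0,1)$: there $\delta_P=0$, so $v_{\rm KS}=v_{\rm TFW}$ exactly, and the $N^{-2}$ correction $-\frac{1}{3}(\sqrt\rho)''/\sqrt\rho$ is nonconstant. The main obstacle is Step 1: rigorously justifying the first-variation (stress) identity for a merely $\mH^{-1}$ potential. This includes handling degeneracy and fractional occupation of the top shell and verifying the bookkeeping for the shell term $\cS$. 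I would first derive it formally for smooth $\rho$, then approximate.
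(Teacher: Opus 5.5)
Your corrected route is the paper's route. You transport the minimizer by $\Phi_t(x)=x+t\xi(x)$ to get a force-balance identity for $v_{\rm KS}$, derive the analogous identity for $v_{\rm TFW}$, test with $\xi=\rho^{-1}\int_0^x\eta$ for mean-zero $\eta\in\mH^1(\T)$, bound $\norm{\xi'}_{\mathcal Y_\rho}$, and apply Theorem~\ref{thm:local kinetic-energy density}. Deriving the TFW identity from $v_{\rm TFW}=-\delta T^\T_{\rm TFW}/\delta\rho$ along the same transport is a valid alternative to the paper's direct check that $v_{\rm TFW}(N\rho)'\,N\rho=-(2\tau^{\rm TFW}_{N\rho}-\frac N2\rho'')'$.

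However, your key identity is off by a factor $N$, and that factor is exactly what gives the claimed rate. The transported density is $N\rho_t$, whose first variation is $-(N\rho\xi)'$, so the identity reads
\[
N\langle v_{\rm KS}(N\rho)-v_{\rm TFW}(N\rho),(\rho\xi)'\rangle=-2\langle\tau_{N\rho}-\tau^{\rm TFW}_{N\rho},\xi'\rangle .
\]
Theorem~\ref{thm:local kinetic-energy density} bounds the right-hand side by $2N\bigl[C(\rho)\sqrt{\delta_P}+\delta_P\bigr]\norm{\xi'}_{\mathcal Y_\rho}$. So the $N$ cancels, and together with the bound on $\norm{\xi'}_{\mathcal Y_\rho}$ you get $\norm{v_{\rm KS}-v_{\rm TFW}}_{\mH^{-1}/\{1\}}\leq D_\rho(\sqrt{\delta_P}+\delta_P)$. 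With your identity as written, dividing by $N^2$ would only give an $N^{-1}$ rate, so the step ``combining with Theorem~\ref{thm:local kinetic-energy density} gives the estimate'' fails as stated.

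Several smaller points also need fixing.

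\emph{Test functions.} The $\mH^{-1}/\{1\}$ norm is tested against mean-zero $\eta$, so impose $(\rho\xi)'=\eta$ directly. With $(\rho X)'=-\rho\phi$, periodicity of $X$ requires $\int\rho\phi=0$, not $\int\phi=0$. Converting $\norm{\phi}_{\mH^1}$ into $\norm{\eta}_{\mH^1}$ also costs extra factors of $\rho$.

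\emph{The constant $D_\rho$.} To get exactly $\norm{1/\rho}_\infty^2$, the coefficient of $\int_0^x\eta$ in $\xi''$ must be handled by the integration by parts
\[
\int_\T\rho\Bigl(\frac{2(\rho')^2}{\rho^3}-\frac{\rho''}{\rho^2}\Bigr)^2=\int_\T\frac{(\rho'')^2}{\rho^3}-\frac43\int_\T\frac{(\rho')^4}{\rho^5}.
\]
A plain sup-norm bound produces higher powers of $\norm{1/\rho}_\infty$ and $\norm{\rho}_{\mH^2}$.

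\emph{The obstacle you single out does not arise.} No orbital equations, degeneracy analysis, or smoothing of $\rho$ is needed. Smoothing would in fact force you to control how $\gamma_{N\rho}$ and $v_{\rm KS}(N\rho)$ depend on $\rho$. Instead: $\gamma_t$ is admissible for $T^\T_{\rm L}(N\rho_t)$ and $\gamma$ is a minimizer, so the subgradient inequality gives $T[\gamma_t]-T[\gamma]\geq-N\langle v_{\rm KS},\rho_t-\rho\rangle$. Dividing by $t>0$ and by $t<0$ and letting $t\to0$ yields the identity. This uses only $(\rho_t-\rho)/t\to-(\rho\xi)'$ in $\mH^1$, which is where $\rho\in\mH^2$ enters.

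\emph{Optimality.} On $\rhot{m}$ the $N^{-2}$ correction is $-\frac13(\sqrt\rho)''/\sqrt\rho-\theta_N\pi^2\cS(\rho)^2/\rho^2$, not just the first term, so you must check that the sum is nonconstant. It is: modulo constants it equals $4K(m)^2\bigl[\frac14\mathrm{dn}^2-(\theta_N+\frac1{12})(1-m)\,\mathrm{dn}^{-2}\bigr]$ evaluated at $2K(m)x$. This is nonconstant for $m\in(0,1)$ in both parities.
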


\begin{remark}[Complementary asymptotic regimes] Equation~\eqref{eq:TF leading} concerns the leading order of the non-interacting
density-to-potential map in the high-density regime $N\rightarrow \infty$. In the companion paper \cite{CG26}, we address the leading asymptotics in the interacting case. The strongly interacting and fixed particle number limit of the representing potential on the torus has been studied by the first author in~\cite{Cor26c},
where convergence to the Kantorovich potential
is established.
\end{remark}

As a corollary of Theorem~\ref{thm:KS_high_density}, we can explicitly compute the Kohn-Sham potentials at the periodic Poincar\'e density manifolds. In fact, we can also compute the Kohn-Sham potentials associated to the real line density manifold. Remarkably, these potentials come from two classes of (quasi-)exactly solvable models in one-dimensional quantum mechanics: the P\"oschl-Teller potential \cite{PoschlTeller33} in the real line and the associated Lam\'e potential \cite{KhareSukhatme99} in the torus. 
\begin{corollary}[Inverse potentials on the optimizing density manifold]
\label{cor:KS_Poincare_manifold} Let $\mu>0$ and $\rho(x)=\rhor{\mu}(x)$ where $\rhor{\mu}$ is defined in~\eqref{eq:rho line}. Then, up to additive constants,
\begin{align}
    v_{\rm KS}(N\rho)(x) = v_{\rm TFW}(N\rho)(x)
    =
    -\mu^2\left(N^2-\frac14\right)
    \frac{1}{\cosh^2\!\big(\mu x\big)}.
    \label{eq:PT_inverse_potential}
\end{align}
Likewise, let $\rho(x) = \rhot{m}(x)$ for some $m\in [0,1)$, where $\rhot{m}$ is defined in \eqref{eq:LT_torus_equality_density}, then, up to additive constants,
\begin{multline}
    v_{\rm KS}(N\rho)(x) = v_{\rm TFW}(N\rho)(x)  \\
    = 4K(m)^2 m\left[
        \left(N^2-\frac14\right) \mathrm{sn}^2\left(2K(m)x \, | \, m\right) + \left(\theta_N+\frac{1}{12}\right) \frac{\mathrm{cn}^2\left(2K(m) x \, | \, m\right)}{\mathrm{dn}^2\left(2K(m) x \, | \, m\right)}  \right],
    \label{eq:Lame_inverse_potential}
\end{multline}
where 
\[ \mathrm{sn}(x | m ) = \sin(\mathrm{am}(x | m)), \quad\! \mathrm{cn}(x | m ) = \cos(\mathrm{am}(x | m)), \quad \!\text{and}\!\quad\mathrm{dn}(x | m) = \sqrt{1-m \,\mathrm{sn}^2(x | m)} \]
are the Jacobi elliptic functions with modulus $\sqrt{m}$ and $\mathrm{am}$ is the Jacobi amplitude. 
\end{corollary}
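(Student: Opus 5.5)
The plan is to combine the equality cases of Theorems~\ref{thm:LT_line} and~\ref{thm:LT_torus} with an explicit calculation. The first step is to show that $v_{\rm KS}(N\rho)=v_{\rm TFW}(N\rho)$ up to constants on the Poincar\'e manifolds. There are two routes.

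In the torus case, Theorem~\ref{thm:KS_high_density} gives this directly: $\delta_P^\T(\rho)=0$ on $\mathcal D_P^\T$ by Proposition~\ref{prop:zero deficit}. Here $\rho$ is smooth and positive, so $\rho\in\cD_1(\T)\cap\mH^2(\T)$, and the right-hand side of \eqref{eq:KS_corrected_deficit_main} vanishes.

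On the line there is no existence and uniqueness theory to invoke, so we argue directly and show that the optimizer $\gamma$ from \eqref{eq:orbitals line} is the ground-state projector of $-\partial_x^2+v$ with $v$ given by \eqref{eq:PT_inverse_potential}. Observe that $\phi(x)=2\arctan(\ee^{\mu x})$, with $\phi'=\mu/\cosh(\mu x)=\pi\rho^\R_\mu$, gives $\psi_j=\sqrt{\rho^\R_\mu}\,\ee^{\ii c\,j\phi}$ up to constant phases. Indeed, $(\ee^{-\mu x/2}+\ii\ee^{\mu x/2})/\sqrt{2\cosh\mu x}$ has modulus one and argument $\arctan(\ee^{\mu x})$.

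The orbitals therefore have the form $\sqrt{\rho}\,\ee^{\ii 2k\Phi}$ with $\Phi'=\pi\rho$ (resp.\ $\Phi=\mathrm{am}$ on the torus). A direct computation gives
\[
\Bigl(-\partial_x^2+\frac{(\sqrt\rho)''}{\sqrt\rho}-4k^2\Phi'^2\Bigr)\bigl(\sqrt\rho\,\ee^{\ii2k\Phi}\bigr)
=-\ii\,2k\,\frac{(\rho\Phi')'}{\rho}\,\sqrt\rho\,\ee^{\ii 2k\Phi}.
\]
The right-hand side is not an eigenvalue equation unless $\rho\Phi'$ is constant, which fails here ($\rho\Phi'=\pi\rho^2$). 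So these $\psi_k$ are not themselves eigenfunctions; they are instead unitary mixtures of them. Their span should coincide with the span of the $N$ lowest P\"oschl--Teller (resp.\ Lam\'e) eigenfunctions, which are $\cosh^{-1/2}$ times polynomials of degree less than $N$ in $\tanh$, or Lam\'e polynomials. Both spans are the space $\sqrt{\rho}\cdot\mathrm{span}\{\ee^{\ii2k\Phi}\}$, equivalently $\sqrt{\rho}$ times trigonometric polynomials of degree at most $(N-1)/2$ in $2\Phi$, which match polynomials in $\tanh(\mu x)=-\cos\phi$, $\sin\phi$.

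The key step, and the main obstacle, is this identification of spans. I would verify it on the line by writing $-\partial_x^2+v$ in the variable $\phi$. The operator $\cosh^{1/2}(\mu x)\,(-\partial_x^2+v)\cosh^{-1/2}(\mu x)$ should become $\mu^2$ times a Legendre or Gegenbauer-type operator in $t=\tanh(\mu x)$. With $v=-\mu^2(N^2-\frac14)\cosh^{-2}$, this operator leaves the polynomials of degree $<N$ invariant, and its eigenvalues $-\mu^2(N-\frac12-n)^2$ are increasing in $n$. Hence $\gamma$ is the projector onto the $N$ lowest states, with a gap for $N$ integer. The torus case is analogous, with Lam\'e polynomials in $\mathrm{sn}$, $\mathrm{cn}$, $\mathrm{dn}$.

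If the span verification becomes messy, I would instead argue variationally, as the paper's framework suggests. Since $\gamma$ minimizes $T$ at fixed density, first-order optimality yields a Lagrange multiplier $v$ with $\gamma$ a ground state of $-\partial_x^2+v$. Then $v$ is computed from the Euler--Lagrange equation for the density alone, which gives exactly $v_{\rm TFW}$. On the torus, uniqueness from \cite{Cor25} then identifies this $v$ with $v_{\rm KS}$.

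The final step is to plug $N\rho^\R_\mu$ into \eqref{eq:second_order_potential_main} without the shell term. One has $\rho^2\propto\cosh^{-2}$, and
\[
\frac{(\sqrt\rho)''}{\sqrt\rho}=\mu^2\Bigl(\tfrac14-\tfrac34\cosh^{-2}(\mu x)\Bigr).
\]
Collecting terms gives $-\mu^2\bigl(N^2-\tfrac14\bigr)\cosh^{-2}(\mu x)$ up to a constant, which is \eqref{eq:PT_inverse_potential}.

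On the torus, set $u=2K(m)x$, so that $\rho=N\frac{2K}{\pi}\mathrm{dn}(u)$, and use $\mathrm{dn}'=-m\,\mathrm{sn}\,\mathrm{cn}$ together with $\mathrm{dn}^2=1-m\,\mathrm{sn}^2$. The contributions are:
\begin{itemize}
\item $-3c_{\rm TF}N^2\rho^2$ gives $4K^2N^2m\,\mathrm{sn}^2$ plus a constant;
\item the von Weizs\"acker term gives $\frac14\cdot4K^2m\,\mathrm{sn}^2$-type pieces and a $\mathrm{cn}^2/\mathrm{dn}^2$ piece with coefficient $\frac1{12}$;
\item the shell term, using $\cS(\rho^\T_m)=\bigl(\int\pi/(2K\,\mathrm{dn})\bigr)^{-1}$ and $\int_0^{2K}du/\mathrm{dn}=\pi/\sqrt{1-m}$, gives $\theta_N\,4K^2(1-m)/\mathrm{dn}^2=\theta_N4K^2\bigl(m\,\mathrm{cn}^2/\mathrm{dn}^2+\mathrm{const}\bigr)$.
\end{itemize}
Summing these yields \eqref{eq:Lame_inverse_potential}. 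This last part is routine.
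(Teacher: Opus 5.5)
Your plan follows the paper's proof. On $\T$ you use Theorem~\ref{thm:KS_high_density} with $\delta_P^\T=0$ from Proposition~\ref{prop:zero deficit}. On $\R$ you check directly that the $N$ bound states of $-\partial_x^2+v_{\rm TFW}(N\rhor{\mu})$ reproduce $N\rhor{\mu}$, which the paper only asserts. The explicit formulas are obtained in the same way. Watch the sign in your shell bullet: $\theta_N4K^2(1-m)/\mathrm{dn}^2=\theta_N4K^2(1-m\,\mathrm{cn}^2/\mathrm{dn}^2)$, and it is the minus sign in $v_{\rm TFW}$ that produces $+\theta_N4K^2m\,\mathrm{cn}^2/\mathrm{dn}^2$.

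Identifying $\gamma$ with the spectral projector is a good way to make the line case rigorous, but the mechanism you propose for it fails as stated. The P\"oschl--Teller bound states for $\lambda=N-\frac12$ are $\cosh^{-(N-\frac12-n)}(\mu x)\,p_n(\tanh\mu x)$ with $\deg p_n=n$. They are not $\cosh^{-1/2}$ times polynomials of degree $<N$ in $t=\tanh(\mu x)$; for $N=2$ they are $\cosh^{-3/2}$ and $\cosh^{-1/2}\tanh$. Indeed, the conjugated operator is $\mu^2[-(1-t^2)^2\partial_t^2+3t(1-t^2)\partial_t-(N^2-1)(1-t^2)-\frac14]$. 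It maps $t^d$ to a polynomial whose $t^{d+2}$-coefficient is $\mu^2(N^2-(d+1)^2)$, which is nonzero for $d=N-2$. So polynomials of degree $<N$ are not invariant; only their parity-$(N-1)$ part, of dimension $\lceil N/2\rceil$, is.

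The correct space is the one you also wrote down: $\sqrt{\rhor{\mu}}\cdot\mathrm{span}\{\ee^{\ii k\phi}:|k|\leq N-1,\ k\equiv N-1\bmod 2\}$ with $\phi=2\arctan(\ee^{\mu x})$. Equivalently, these are homogeneous polynomials of degree $N-1$ in $(\cos\phi,\sin\phi)=(-\tanh\mu x,\cosh^{-1}\mu x)$. Use $\partial_x=\mu\sin\phi\,\partial_\phi$ and $\sqrt{\rhor{\mu}}\propto\sin^{1/2}\!\phi$, and set $v=-\mu^2(N^2-\frac14)\cosh^{-2}(\mu x)$. One then checks
\[
\frac{(-\partial_x^2+v)(\sin^{1/2}\!\phi\,\ee^{\ii k\phi})}{\mu^2\sin^{1/2}\!\phi}=\frac{2k^2-2N^2+1}{4}\,\ee^{\ii k\phi}-\frac{(k+1)^2-N^2}{4}\,\ee^{\ii(k+2)\phi}-\frac{(k-1)^2-N^2}{4}\,\ee^{\ii(k-2)\phi}.
\]
The coefficient of $\ee^{\ii(k+2)\phi}$ vanishes at $k=N-1$, and that of $\ee^{\ii(k-2)\phi}$ vanishes at $k=-(N-1)$. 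Hence the span of your orbitals is invariant, and the restricted $N\times N$ matrix has eigenvalues $-\mu^2(N-\frac12-n)^2<0$, $n=0,\dots,N-1$. Since $\sigma_{\rm ess}=[0,\infty)$, there are no other eigenvalues below $0$, so $\gamma$ is the non-degenerate ground-state projector.

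Finally, drop the variational fallback on $\R$. The existence of a Lagrange multiplier for the density constraint is exactly the $v$-representability that the paper notes is not available on the line. That is why an explicit verification is needed there.
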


\begin{remark}[Applications to numerical Kohn-Sham inversion] Corollary~\ref{cor:KS_Poincare_manifold} provides a set of benchmark densities and potentials for numerical inverse Kohn-Sham schemes \cite{JensenWasserman18,Garrigue22,HerbstBakkestuenLaestadius25}. Moreover, Theorem~\ref{thm:KS_high_density} provides computable error bounds for the inverse TFW potential approximation, which can be used as an initial guess for such schemes. In particular, equation~\eqref{eq:TF leading} proves the leading convergence observed numerically in~\cite[Section 5.5]{Garrigue22}. \end{remark}

\begin{remark}[TFW approximation estimates via stability of the density-to-potential map]
In~\cite{CorLae25}, it was shown that, for odd $N$, the density-to-potential map
\[
\rho\in\mathcal D_1^\T
\mapsto v_{\rm KS}(N\rho)
\in H^{-1}(\T;\R)/\R
\]
is locally Lipschitz continuous with respect to the $H^1(\T)$ topology. Combining this result with Corollary~\ref{cor:KS_Poincare_manifold} and the local Lipschitz continuity of the inverse TFW density-to-potential map, we obtain, for every $\rho_*\in\mathcal D_P^\T$, an estimate
\[
\|v_{\rm KS}(N\rho)-v_{\rm TFW}(N\rho)\|_{H^{-1}(\T)/\R}
\leq
C_{N,\rho_*}\,
\operatorname{dist}_{H^1}(\rho,\mathcal D_P^\T)
\]
for all $\rho\in\mathcal D_1^\T$ in a sufficiently small $H^1$-neighborhood of $\rho_*$. Note that this estimate is similar to~\eqref{eq:KS_corrected_deficit_main}, but requires only $H^1$ regularity of the density. Unfortunately, this estimate holds only locally, and the Lipschitz constant obtained in~\cite{CorLae25} relies on a compactness argument. Consequently, this approach does not provide explicit constants or quantitative dependence on $N$.
\end{remark}

\subsection{Outline of the proofs} Let us now briefly outline the main steps in the proofs of the main results and how they are distributed throughout the paper.

The proofs of Theorems~\ref{thm:LT_line} and~\ref{thm:LT_torus} are given in Section~\ref{sec:LT_gradient} and go roughly as follows. First, we pass to cumulative-mass coordinates, which transform any density matrix in $\R$ (or $\T$) with a prescribed density into a density matrix on the unit interval $I=(0,1)$ with pointwise constant density. This is the main step where the one-dimensional structure enters the proof. In the new coordinates, the desired inequalities reduce to lower bounds for weighted kinetic energies of density matrices subject to a constant-density constraint. This density constraint is crucial, for it allows us to use the spectral decomposition of the Laplacian with twisted boundary conditions and a Fourier-filling argument to obtain a general weighted kinetic energy inequality (see Lemma~\ref{lem:twisted_filling}), which is the central estimate of the paper. The strongest lower bound in this inequality is obtained with periodic boundary conditions for even $N$ and antiperiodic boundary conditions for odd $N$. On the line, the weighted orbitals vanish at both endpoints, so either twist is admissible and Theorem~\ref{thm:LT_line} follows at once from the weighted inequality and exact identities. On the torus, the antiperiodic choice requires an additional phase construction, which produces the shell correction. Finally, we characterize the equality cases of the weighted inequality in Lemma~\ref{lem:inequality saturation} and transform them back to physical coordinates to identify the optimizers in Theorems~\ref{thm:LT_line} and~\ref{thm:LT_torus}. This is the most calculation heavy step of the proof.

In Section~\ref{sec:TFW energy}, we combine the gradient-corrected Lieb--Thirring lower bounds with a matching upper bound to prove Theorems~\ref{thm:TFW_line} and~\ref{thm:TFW_torus}. For this upper bound, we use trial density matrices built from Harriman cumulative orbitals \cite{Harriman81} with centered periodic Fourier modes for odd $N$ and centered antiperiodic modes for even $N$, as in March and Young's construction \cite{MarchYoung58}. However, in the torus case, we need an additional common phase construction to restore periodicity at a minimal energy cost, which is crucial to obtain bounds controlled by the Poincar\'e deficit. In cumulative-mass coordinates, these trial states are inspired by the ones saturating the weighted inequality from Lemma~\ref{lem:twisted_filling}.

Theorem~\ref{thm:local kinetic-energy density} is proved in Section~\ref{sec:local kinetic energy}. We again work in cumulative-mass coordinates, where we can leverage the freedom to choose the weight in our central weighted inequality to obtain a local comparison inequality. Testing this inequality with weights of the form $\sqrt{1\pm tr}$ and using the global energy upper bound allows us to control the integral of the kinetic-energy density against arbitrary admissible test functions $r$. Optimizing in $t$ gives an error controlled by the square root of the cumulative Poincar\'e deficit and the deficit itself. To compare with the TFW kinetic-energy density, we also estimate a local Poincar\'e residual by the square root of the Poincar\'e deficit, see Lemma~\ref{lem:poincare residual bound}. These steps yield a density-constrained local Weyl estimate in cumulative coordinates (see Lemma~\ref{lem:weak_local_Fermi}), which is the key estimate in this proof. To complete the proof, we can then transfer this Weyl estimate back to physical coordinates. 

Finally, we prove Theorem~\ref{thm:KS_high_density} in Section~\ref{sec:KS potential}. This is done by combining the stationary force-balance equation, which relates potentials to kinetic energy densities, with the estimates in Theorem~\ref{thm:local kinetic-energy density}. Unfortunately, this argument requires the additional regularity assumption $\rho\in\mH^2(\T)$, because the force-balance equation formally contains products such as $v'\rho$, which are not generally well-defined for $v\in\mH^{-1}(\T)$ and $\rho\in\mH^1(\T)$. The additional derivative is also needed to recover estimates for the potential difference in $\mH^{-1}(\T)$, modulo additive constants. 

The proofs of Corollaries~\ref{cor:exact_manifold} and~\ref{cor:KS_Poincare_manifold} are presented, respectively, in Sections~\ref{sec:TFW energy} and~\ref{sec:KS potential}.

\subsection{Connection with previous works}

We conclude this section by briefly discussing some connections of the results obtained here with previous works in the literature.

\subsubsection*{Exactly solvable models and DFT approximations}

The inverse potentials associated with the Poincar\'e density manifolds belong to important families of exactly and quasi-exactly solvable one-dimensional Schr\"odinger operators. On the real line, they are given, up to translations and dilations, by the hyperbolic P\"oschl--Teller well, whose spectral theory is classical; see, e.g.,~\cite{PoschlTeller33,Lekner07}. On the torus, we are instead led to associated Lam\'e potentials of the form
\[
    m\left[
        a(a+1)\mathrm{sn}^2(u|m)
        +
        b(b+1)\frac{\mathrm{cn}^2(u|m)}{\mathrm{dn}^2(u|m)}
    \right],
    \qquad m\in[0,1),
\]
with the distinguished half-integer parameters
\[ a = N -\frac12, \quad b = \pm \frac 12\]
where the sign of $b$ depends on the parity of $N$. Associated Lam\'e potentials have been studied extensively in connection with algebraic solvability, finite-gap theory, and quasi-exact solvability, particularly at integer and half-integer values of the parameters; see, for example,~\cite{KhareSukhatme01,KhareSukhatme99,FernandezGanguly07}. However, we are not aware of any works in the literature that connect associated Lam\'e potentials to density functional approximations.

On the other hand, the P\"oschl--Teller well has been used as an exactly solvable test case for semiclassical expansions of kinetic-energy approximations in several works~\cite{Burke20,RibeiroBurke17,OkunCancioBurke24}. These works investigate the structure of semiclassical corrections to Thomas--Fermi theory in one dimension and emphasize the role of quantum oscillations and finite-system corrections beyond standard semilocal approximations in orbital-free DFT. In particular, Burke~\cite[eq.~(35)]{Burke20} derives an asymptotic correction arising from discrete orbital filling that closely resembles the shell correction obtained here. More generally, the decomposition into a smooth extended-Thomas--Fermi contribution and a finite-size shell correction has a long history in density-functional descriptions of finite systems, see, e.g., the work of Yannouleas and Landman~\cite{YannouleasLandman93,YannouleasLandman13}.

Nevertheless, the observation that the TFW approximation (with shell correction in the torus) is exact on the manifold of densities arising from a specific family of such exactly and quasi-exactly solvable potentials seems to be completely new. Remarkably, this exactness is true even at the level of energy densities. 

\subsubsection*{Lieb--Thirring inequalities}
 To the best of our knowledge, the first LT inequality with gradient terms was proved by Nam~\cite{Nam18}. In any dimension $d\in\N$, his estimate takes the form
\begin{align}
    T[\gamma]
    \geq c_{\rm TF,d}(1-\varepsilon)
    \int_{\R^d} \rho^{1+\frac2d}
    -
    \frac{c_d}{\varepsilon^{\beta_d}}\int_{\R^d} |\nabla \sqrt{\rho} |^2 , \label{eq:Nam est}
\end{align}
where $c_{\rm TF,d}$ denotes the $d$-dimensional semiclassical, or Thomas--Fermi, constant and $c_d,\beta_d>0$ depend only on the dimension. Seiringer and Solovej~\cite{SeiringerSolovej23} subsequently gave a simpler proof and obtained an improved dependence on $\varepsilon$. While these estimates hold in arbitrary dimensions, there is an
important distinction between~\eqref{eq:Nam est} and the inequalities derived here. In the former, the gradient term is seen as an error term: the coefficient of the local term approaches the Thomas--Fermi constant only at the expense of exploding the coefficient in front of the gradient term. In our one-dimensional inequality,
by contrast, the coefficient of the local term is strictly larger than the Thomas--Fermi constant, while the coefficient of the gradient term remains finite.

This distinction is particularly relevant in the low-dimensional setting. Indeed, it was recently shown by Read~\cite{Read26} that
\[
    T[\gamma] \geq \frac{\pi^2}{4}\int_\R \rho^3
\]
is the sharp one-dimensional Lieb--Thirring inequality. Since the local coefficient in~\eqref{eq:LT_line_main} exceeds the coefficient $\pi^2/4$ in the above inequality, the former cannot be derived by the latter alone. In this sense, the inequality derived here provides a genuine gradient correction, encoding information beyond the optimal local Lieb--Thirring inequality.

There is also a substantial literature on Lieb--Thirring inequalities in periodic settings. Work by Frank, Gontier, and Lewin~\cite{FrankGontierLewin21} developed a periodic formulation of the Lieb--Thirring problem and studied the connection with the optimal constants in whole-space settings. Interestingly, they proved that in one dimension, at the distinguished exponent $\gamma=3/2$, the spectral version of the periodic LT inequality admits a one-parameter family of Lam\'e optimizers. Although their family of Lam\'e potentials differs from the associated Lam\'e potentials identified here, the appearance of these quasi-exactly solvable potentials in both settings is rather intriguing. Inequalities on tori and related compact manifolds, including estimates of the corresponding constants, were also studied by Ilyin, Laptev, Zelik, and others~\cite{ZelikIlyinLaptev19,IlyinLaptevZelik20}. These inequalities have applications to estimating the dimension of global attractors for two-dimensional Navier--Stokes systems and typically impose zero-mean conditions on the orbitals. Our results complement this literature by incorporating an explicit shell correction alongside the local and gradient contributions. 

The dependence of Lieb--Thirring inequalities on the particle number has also been studied, for instance, in~\cite{FrankGontierLewin21,FrankGontierLewinFiniteRank}. These works investigate the existence of optimizers and how the optimal constant changes when the number of eigenvalues, or equivalently, the rank of the density matrix, is restricted. Here, as highlighted in Remark~\ref{rem:finite-rank}, we address the analogous finite-rank question in the one-dimensional gradient-corrected case (with our prescribed choice of pre-factor for the gradient term). 

\section{Gradient-corrected Lieb--Thirring inequalities and their optimizers}
\label{sec:LT_gradient}

In this section our goal is to prove Theorems~\ref{thm:LT_line} and~\ref{thm:LT_torus}.
\subsection{Cumulative-mass coordinate and exact identities}
\label{sec:cumulative coordinates}

We begin by introducing the cumulative-mass coordinates. This transformation is a key feature of the one-dimensional setting and will play an important role throughout the paper.  

Let $\rho$ be a normalized density function, i.e., satisfying $\rho \geq 0$ and $\int_\Omega \rho = 1$. Then we define the cumulative coordinate as
\begin{align}
    s = S(x) \coloneqq \int_{x_0}^x\rho(y)\,\dd y,
    \label{eq:S def}
\end{align}
where $x_0=-\infty$ on $\R$ and $x_0\cong 0$ on $\T$. Moreover, we define its left-continuous inverse as
\begin{align}
    X(s) = \sup \{ y \in \Omega : S(y) < s\}, \qquad s\in(0,1).
\end{align}
In the torus case, we identify $\Omega$ with the ordered interval $[0,1)$ when defining $S$ and taking this supremum.
We can now introduce the cumulative density as
\begin{align}
    \qquad
    a(s):=\rho(X(s)). \label{eq:a def}
\end{align}
In the next proposition, we summarize some elementary but important properties of the cumulative coordinates and cumulative density.

\begin{proposition}[Structural properties of cumulative coordinates and density] \label{prop:structure cumulative density} Let $\Omega = \R$ or $\Omega = \T$ and $\sqrt{\rho} \in \mH^1(\Omega)$ be a mass one density. Let $S,X$ and $a$ be the cumulative coordinates and density defined in~\eqref{eq:S def}--\eqref{eq:a def}. Then the following holds: 
\begin{enumerate}[label=(\roman*)]
\item (Component-wise diffeomorphism) \label{it:structure cumulative coordinates} We have the decomposition with $Q \subset \N$, 
\[ \{s \in (0,1): a(s)> 0 \} = \cup_{\alpha \in Q} S_\alpha \]
where each $S_\alpha$ is an open interval such that $X: S_\alpha \rightarrow J_\alpha \coloneqq X(S_\alpha)$ is a $C^1$-diffeomorphism with $X'(s) = 1/a(s)$. Moreover, 
\begin{align*}
    \cup_{\alpha \in Q} J_\alpha = \{x \in \Omega\setminus \{x_0\} : \rho(x)>0 \}, \quad \text{and}\quad \sum_{\alpha \in Q} |S_\alpha| = 1.  
\end{align*}
\item ($\mH^1$-regularity and boundary conditions) \label{it:H1 regularity} $a$ belongs to $\mH^1(0,1)$ and 
\begin{align}
    a'(s) = \frac{\rho'(X(s))}{a(s)}, \quad \mbox{for a.e. $s \in (0,1)$.} \label{eq:a gradient}
\end{align}
Moreover, if $\Omega = \T$, respectively, $\Omega=\R$, then $a\in \mH^1(\T)$, respectively $a\in \mH^1_0(0,1)$. 
\item (Local and gradient terms) We have the identities
\begin{align}
    \int_0^1 |a(s)|^2 \dd s = \int_\Omega \rho(x)^3 \dd x \qquad \text{and}\quad \int_0^1 |a'(s)|^2 \dd s = 4 G(\rho). \label{eq:local and gradient terms} 
\end{align}
\end{enumerate}
\end{proposition}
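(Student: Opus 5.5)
The approach is to work on the open set $U=\{\rho>0\}$, where $S$ is locally a $C^1$ diffeomorphism, and then use the $\mH^1$ regularity of $\sqrt\rho$ to control what happens near the zero set and at the endpoints.

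\textbf{Step 1: part (i).} Since $\sqrt\rho\in\mH^1(\Omega)$, the density $\rho$ is continuous, so $U$ is open and is a countable union of disjoint open intervals $J_\alpha$. We exclude $x_0$ on $\T$ by cutting there.

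On each $J_\alpha$ the map $S$ is $C^1$ with $S'=\rho>0$, hence strictly increasing and a $C^1$ diffeomorphism onto the open interval $S_\alpha:=S(J_\alpha)$. The left-continuous inverse $X$ coincides with $(S|_{J_\alpha})^{-1}$ on $S_\alpha$, which gives $X'=1/\rho(X)=1/a$ there. Moreover $a=\rho\circ X>0$ on $S_\alpha$.

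Conversely, suppose $a(s)>0$. Then $X(s)\in U$. Indeed, $X(s)$ lies in some $J_\alpha$, and $s=S(X(s))$ by continuity of $S$ together with the definition of the supremum. So $s\in S_\alpha$.

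For the measure identity, $\sum_\alpha|S_\alpha|=\sum_\alpha\int_{J_\alpha}\rho=\int_\Omega\rho=1$, because $\rho$ vanishes off $U$.

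\textbf{Step 2: the identities in part (iii).} On each $S_\alpha$ we change variables $x=X(s)$, $\dd s=\rho\,\dd x$. This gives $\int_{S_\alpha}a^2\,\dd s=\int_{J_\alpha}\rho^3$.

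For the derivative, the chain rule gives $a'(s)=\rho'(X(s))X'(s)=\rho'(X(s))/a(s)$. Therefore
\[
\int_{S_\alpha}|a'|^2\,\dd s=\int_{J_\alpha}\frac{|\rho'|^2}{\rho^2}\,\rho\,\dd x=\int_{J_\alpha}\frac{|\rho'|^2}{\rho}.
\]
Here $\rho'=2\sqrt\rho(\sqrt\rho)'$ is locally integrable. Summing over $\alpha$ and using that the complement of $\cup S_\alpha$ is null gives both identities in (iii), provided $a\in\mH^1$ with this pointwise derivative.

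\textbf{Step 3: part (ii), the main obstacle.} We must show that $a$, defined piecewise, is globally $\mH^1$ with no jump contributions. The key points are:
\begin{itemize}
\item $a$ is continuous on $[0,1]$. At an endpoint $s_*$ of some $S_\alpha$, $X(s)$ tends to an endpoint of $J_\alpha$ where $\rho=0$ (or to $\pm\infty$ on $\R$), so $a(s)\to0$. The same holds at boundary points $s_*$ of $\cup S_\alpha$ that are not endpoints of any component.
\item Step 2 gives $\int_0^1|a'|^2<\infty$ on $\cup S_\alpha$.
\end{itemize}

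To conclude, we show $|a(s)-a(t)|\le\int_s^t|g|$ for all $s<t$, with $g=a'\1_{\cup S_\alpha}$. This follows from absolute continuity on each $S_\alpha$, the value $a=0$ off $\cup S_\alpha$, and the fact that the complement has measure zero. An approximation argument then handles the countably many pieces: for $\varepsilon>0$, all but finitely many $S_\alpha$ carry $\int|g|<\varepsilon$, and across gaps $a$ vanishes. Hence $a$ is absolutely continuous with $a'=g$ a.e., and $a\in\mH^1(0,1)$.

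Boundary conditions then follow. On $\R$, $\rho(X(s))\to0$ as $s\to0,1$, since $\sqrt\rho\in\mH^1(\R)$ decays at infinity; thus $a\in\mH^1_0(0,1)$. On $\T$, both $X(0^+)$ and $X(1^-)$ correspond to the point $x_0$, so $a(0)=a(1)=\rho(x_0)$ and $a\in\mH^1(\T)$.
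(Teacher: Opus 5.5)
Your proposal is correct and follows essentially the same route as the paper. Both arguments use the componentwise diffeomorphism and change of variables on the components of $\{\rho>0\}\setminus\{x_0\}$, then glue the pieces using that $a$ vanishes at interior endpoints of the $S_\alpha$. The only cosmetic difference is in the gluing: you use the direct absolute-continuity estimate $|a(s)-a(t)|\le\int_s^t|g|$, whereas the paper sums the zero-extensions $a\1_{S_\alpha}$ as an $\mH^1$-convergent series.
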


\begin{proof} Since $\rho$ is continuous, the set $\{ x\in \Omega\setminus \{x_0\}: \rho>0\}$ is open and can be written as the union of its (countably many) connected components $J_\alpha$. Now, on each $J_\alpha$, the function $S:J_\alpha\rightarrow S(J_\alpha) =: S_\alpha$ is differentiable with strictly positive derivative, so the inverse $X\rvert_{S_\alpha} = (S\rvert_{J_\alpha})^{-1}$ is also strictly monotone, differentiable and satisfies $X'(s) = 1/S'(X(s)) = 1/a(s)$. Moreover, as $S(J_\alpha)$ are disjoint and $\rho$ integrates to $1$, we have
\begin{align*}
    \sum_{\alpha \in Q} |S_\alpha| = \sum_{\alpha \in Q} \int_{J_\alpha} \rho = \int_\Omega \rho = 1.
\end{align*}
Also by construction, $a>0$ on $\cup_{\alpha\in Q}S_\alpha$. For the converse, fix $s\in(0,1)$. The set $\{y:S(y)<s\}$ is nonempty and bounded above, so its supremum $X(s)$ is finite. By continuity of $S$, we have $S(X(s))\leq s$. If this inequality were strict, continuity would give a point $y>X(s)$ with $S(y)<s$, contradicting the definition of $X(s)$. Thus $S(X(s))=s$. If $a(s)=\rho(X(s))>0$, then $X(s)\in J_\alpha$ for some $\alpha\in Q$, and consequently $s=S(X(s))\in S_\alpha$. This proves~\ref{it:structure cumulative coordinates}. At a plateau value in $(0,1)$, the left-continuous inverse selects the left endpoint of the plateau, where $\rho=0$, consistently with this characterization of the positivity set.

To prove~\ref{it:H1 regularity}, we first note that, for any $s\in S_\alpha$ converging towards the end points of $S_\alpha$, $X(s)$ must also converge to the corresponding end point of $J_\alpha$. Hence, at any endpoint of $S_\alpha$ inside the interval $(0,1)$, by continuity, we have $a(s) \rightarrow 0$ as $s$ approaches such endpoints. On the other hand, inside the interval $S_\alpha$ we have
\begin{align*}
    a'(s) = \rho'(X(s))/\rho(X(s))
\end{align*}
by the usual chain rule and
\begin{align}
    \int_{S_\alpha} |a|^2 = \int_{J_\alpha} \rho^3 \quad \text{and}\quad \int_{S_\alpha} |a'|^2 = 4\int_{J_\alpha} |(\rho^{\frac12})'|^2 \label{eq:gradient and local componentwise}
\end{align}
by the classical change of variables formula. Therefore, $a\in \mH^1(S_\alpha)$ has zero trace at any endpoint inside $(0,1)$. In particular, we can extend it by zero to the rest of $(0,1)$. Thus, since $G(\rho)<\infty$, we can enumerate the intervals $S_\alpha$ and take the $\mH^1$-limit in the sum $\sum_{\alpha \in Q} a\rvert_{S_\alpha}$ to conclude that $a = \sum_{\alpha \in Q} a\rvert_{S_\alpha} \in \mH^1(0,1)$. 

For the boundary value statements, note that we already showed that each $a\rvert_{S_\alpha}$ has zero boundary trace if $0,1\not\in \overline{S_\alpha}$. Hence, if no interval has endpoints $0$ or $1$, we have $a\in \mH^1_0(0,1)$ because $a$ is the $\mH^1$-limit of $\mH^1_0$ functions. On the other hand, if we have an interval with $0$ or $1$ as its endpoint, then the value of $a$ at this point can be evaluated by taking the limit of $\rho(x)$ as $x$ approaches the corresponding endpoint of $J_\alpha$. Consequently, the result on $\R$ follows from the fact that any $f\in \mH^1(\R)$ vanishes at infinity (as it can be uniformly approximated by compactly supported functions) and the result on $\T$ follows by periodicity. 

Finally,~\eqref{eq:local and gradient terms} follows by summing~\eqref{eq:gradient and local componentwise}. 
\end{proof}

Next, let us denote by $U_\rho: \mL^2(0,1) \rightarrow \mL^2(\Omega)$, the linear map
\begin{align}
    (U_\rho f)(x) = f(S(x)) \sqrt{\rho(x)}, \quad \text{where $S$ is the cumulative function~\eqref{eq:S def}.} \label{eq:cumulative isometry def}
\end{align}
It is immediate from the change of variables formula that $U_\rho$ is an isometry. In particular, we can pull-back a density matrix in $\Omega$ to a density matrix in $(0,1)$ via the formula
\begin{align} Q = U_\rho^\ast \gamma U_\rho. \label{eq:density matrix pull-back} \end{align}
The next observation, which is crucial to our analysis, is that one can relate the kinetic energy of $\gamma$ to a weighted kinetic energy of $Q$.
\begin{lemma}[Exact kinetic energy identity in cumulative-coordinates]
\label{lem:cumulative_kinetic_identity}
Let $\sqrt{\rho}\in\mH^1(\Omega)$ be a mass one density on $\Omega = \R$ or $\Omega = \T$, and let $a$ be defined as in~\eqref{eq:a def}. Let $0\leq \gamma \leq 1$ be a density matrix with density $N \rho$ for some $N>0$. Let $Q$ be the pull-back defined via~\eqref{eq:density matrix pull-back}, then the following holds:
\begin{enumerate}[label=(\roman*)]
\item (Orthonormality and density constraint) \label{it:constant density constraint} We have
\begin{align}
    0\leq Q\leq1,
    \qquad \mbox{and}\quad 
    Q(s,s)= N.
    \label{eq:Q_constraints}
\end{align}
\item (Kinetic energy identity) \label{it:kinetic identity} If $T[\gamma] < \infty$, then 
\begin{align}
\Tr -\partial_s (a Q a)\partial_s = 
    T[\gamma]+3N G(\rho).  \label{eq:kinetic identity}
\end{align}
\end{enumerate}
\end{lemma}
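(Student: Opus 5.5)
Write $\gamma=\sum_j \lambda_j\ketbra{\phi_j}{\phi_j}$ with $0\le\lambda_j\le1$. Then $Q=\sum_j\lambda_j\ketbra{f_j}{f_j}$ with $f_j=U_\rho^\ast\phi_j$. Everything reduces to orbital-wise computations, summed at the end. Part~\ref{it:constant density constraint} is immediate. $U_\rho$ is an isometry, so $\norm{U_\rho}\le1$, and conjugation preserves $0\le Q\le U_\rho^\ast U_\rho=1$.

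For the density I would first identify $U_\rho^\ast$ explicitly. By the change of variables $s=S(x)$ on each component $J_\alpha$ (Proposition~\ref{prop:structure cumulative density}\ref{it:structure cumulative coordinates}), $(U_\rho^\ast\phi)(s)=\phi(X(s))/\sqrt{a(s)}$ on $\cup_\alpha S_\alpha$, which has full measure. Hence
\[
Q(s,s)=\sum_j\lambda_j\frac{|\phi_j(X(s))|^2}{a(s)}=\frac{\rho_\gamma(X(s))}{\rho(X(s))}=N
\]
for a.e.\ $s$. One point needs care: $\gamma$ vanishes on $\{\rho=0\}$, so each $\phi_j$ is supported where $\rho>0$. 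This makes $\gamma=U_\rho QU_\rho^\ast$, i.e.\ $U_\rho U_\rho^\ast\gamma U_\rho U_\rho^\ast=\gamma$, so no mass is lost.

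For part~\ref{it:kinetic identity}, note that $\Tr(-\partial_s aQa\partial_s)=\sum_j\lambda_j\int_0^1|(af_j)'|^2$. Set $g_j=af_j$, so that $g_j(s)=\sqrt{a(s)}\,\phi_j(X(s))$. With $\phi_j=\sqrt\rho\, u_j$ on the positivity set, we get $g_j(s)=a(s)\,u_j(X(s))$. Using $X'=1/a$, this gives
\[
g_j'(s)=a'(s)\,u_j(X(s))+u_j'(X(s)).
\]
Changing variables back ($\dd s=\rho\,\dd x$) and writing $a'\circ S=\rho'/\rho$ by~\eqref{eq:a gradient}:
\[
\int|g_j'|^2\dd s=\int\rho\Bigl|u_j'+\tfrac{\rho'}{\rho}u_j\Bigr|^2\dd x .
\]
On the other hand $\phi_j'=\sqrt\rho\,(u_j'+\tfrac{\rho'}{2\rho}u_j)$. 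Expanding both squares, the difference per orbital is
\[
\int\rho\,\tfrac{3}{4}\tfrac{|\rho'|^2}{\rho^2}|u_j|^2+\int\rho'\,\Re(\bar u_ju_j').
\]
Summing with weights $\lambda_j$ uses two facts: $\sum_j\lambda_j|u_j|^2=N$ is constant, so $\sum_j\lambda_j\Re(\bar u_ju_j')=\tfrac12(\sum_j\lambda_j|u_j|^2)'=0$. The cross terms therefore cancel, and the first term sums to $\tfrac34 N\int|\rho'|^2/\rho=3NG(\rho)$. This gives~\eqref{eq:kinetic identity}.

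The main obstacle is rigor, not algebra. First, the orbitals $u_j=\phi_j/\sqrt\rho$ may be singular near zeros of $\rho$. Second, the cancellation of cross terms needs a summable justification, including exchanging the sum with differentiation. Third, $g_j$ must be shown to lie in $\mH^1$ (with the right boundary behavior on $(0,1)$, using $a\in\mH^1$ from Proposition~\ref{prop:structure cumulative density}\ref{it:H1 regularity}), so that the trace is well defined.

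I would avoid dividing by $\rho$. Instead I would compute directly $g_j'=\bigl(\phi_j'(X)+\tfrac{\rho'(X)}{2\rho(X)}\phi_j(X)\bigr)/\sqrt{a}$ on each $S_\alpha$. Then $|g_j'|^2\,\dd s$ becomes $|\phi_j'+\tfrac{\rho'}{2\rho}\phi_j|^2\,\dd x$, and the cross term is $\Re\sum_j\lambda_j\bar\phi_j\phi_j'\,\rho'/\rho=\tfrac12 N|\rho'|^2/\rho$, using $\sum_j\lambda_j|\phi_j|^2=N\rho$. This is integrable since $G(\rho)<\infty$ and $T[\gamma]<\infty$. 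Summing yields $T[\gamma]+N\int(\tfrac14+\tfrac12)|\rho'|^2/\rho=T[\gamma]+3NG(\rho)$, with all terms absolutely convergent by Cauchy--Schwarz. Finally, on each $S_\alpha$ the $g_j$ vanish at interior endpoints (they inherit this from $a$), so the pieces glue into $\mH^1$ functions, as in the proof of Proposition~\ref{prop:structure cumulative density}.
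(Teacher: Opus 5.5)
Your proof is correct and is essentially the paper's argument: the same chain-rule identity $(af_j)'=a^{-1/2}\bigl(\phi_j'+\tfrac{\rho'}{2\rho}\phi_j\bigr)\circ X$ on each component of $\{\rho>0\}$, with the cross term evaluated via $\Re\sum_j\lambda_j\bar\phi_j\phi_j'=\tfrac12 N\rho'$ to produce $\tfrac34 N\int|\rho'|^2/\rho=3NG(\rho)$. The only minor difference is in (i), where you read off $Q(s,s)=N$ from the explicit formula for $U_\rho^\ast$, while the paper tests against continuous $f$ via $\Tr Qf=\Tr\gamma\,(f\circ S)=N\int_0^1 f$.
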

\begin{proof} The bound $0\leq Q \leq 1$ is immediate. For the density identity, let $P:=\mathbf1_{\{\rho>0\}}$. The density constraint gives $\gamma=P\gamma P$, since occupied orbitals vanish a.e. on $\{\rho=0\}$. Using $U_\rho fU_\rho^\ast=(f\circ S)P$, for any $f\in C([0,1])$ we obtain
\[ \Tr Q f = \Tr \gamma U_\rho f U_\rho^\ast = \Tr \gamma  f \circ S = \int_\Omega N \rho(x) f(S(x)) \dd x = N \int_0^1 f(s) \dd s. \]
Hence, the density of $Q$ is $N$ a.e., proving~\ref{it:constant density constraint}. 

To prove~\ref{it:kinetic identity}, write the spectral decomposition
\[
    \gamma=\sum_j n_j\ketbra{u_j}{u_j},
    \qquad 0<n_j\leq1, \qquad \sum_j n_j=N,
\]
where the occupied eigenfunctions $u_j$ are orthonormal. Set $\psi_j:=\sqrt{n_j}u_j$, so that $\gamma=\sum_j\ketbra{\psi_j}{\psi_j}$ and $\sum_j|\psi_j|^2=N\rho$. Since $T[\gamma]<\infty$, each $u_j$ belongs to $\mH^1(\Omega)$ and $T[\gamma]=\sum_j n_j\int_\Omega|u_j'|^2=\sum_j\int_\Omega|\psi_j'|^2$. Then $Q=\sum_j\ketbra{\phi_j}{\phi_j}$, where $\phi_j=U_\rho^\ast\psi_j$ are the transformed orbitals
\begin{align}
    \phi_j(s) := \frac{\psi_j(X(s))}{a(s)^{\frac12}}. 
\end{align}
So, assuming first that $\rho>0$ everywhere in $\Omega$ so that $X = S^{-1}$ is a $C^1$ diffeomorphism with $X'(s) = 1/a(s)>0$, the functions $\phi_j$ are weakly differentiable in $(0,1)$ and
\[ (a\phi_j)' = \frac{a'}{2a^\frac12} \psi_j\circ X + \frac{1}{a^{\frac12}} \partial_x \psi_j \circ X. \]
Thus, taking the absolute value square, summing over $j$, and using that
\[ \sum_{j} |\psi_j|^2 \circ X = N a \quad \mbox{and}\quad 2\sum_{j} \mathrm{Re} \, \overline \psi_j \partial_x \psi_j \circ X = N \partial_x \rho \circ X = N a' a,\]
we find that
\[ \sum_{j} |(a\phi_j)'|^2  = \frac34 N |a'|^2 + \frac{1}{a}\sum_{j} |\partial_x \psi_j|^2 \circ X. \]
Hence, integrating over $s$ and using the change of variables formula we obtain the desired identity. 

In the general case where $\rho$ vanishes at some point, one can use property~\ref{it:structure cumulative coordinates} and do the same calculations component-wise on the set $\{ \rho >0\}$. 
\end{proof}

\subsection{A twisted Fourier filling inequality}
The previous identities reduce the LT inequality to a lower bound of a weighted kinetic energy for density matrices with a constant density. This allows us to use a weighted inequality based on the explicit decomposition of the Laplacian with twisted boundary conditions. 

Precisely, for $\beta\in(-\pi,\pi]$, let $H_\beta = -\partial_s^2$ denote the self-adjoint realization of the Laplacian in $[0,1]$ with twisted form domain
\[
    \mathcal{Q}_\beta \coloneqq \{ f\in \mH^1(I) : \quad 
    f(1)=e^{i\beta}f(0) \}.
\]
We then recall the following well-known spectral decomposition of $H_\beta$. 
\begin{lemma}[Spectral decomposition of $H_\beta$] \label{lem:explicit Laplacian} The eigenmodes of $H_\beta$ are given by
\begin{align} \ee_j^\beta(x) \coloneqq \ee^{\rm i (2\pi j +\beta)x}, \quad \mbox{and}\quad \mu_j(\beta) = (2\pi j + \beta)^2, \qquad j \in \Z. \label{eq:eigenpairs explicit} \end{align}
In particular, denoting the eigenvalues with multiplicity in increasing order as
\[ \varepsilon_1(\beta) \leq \varepsilon_2(\beta) \leq ... \rightarrow \infty, \]
then, for $0\leq \beta \leq \pi$, we have
\begin{align}
\varepsilon_N(\pm \beta) = \begin{cases} (\pi N -\pi+\beta)^2,  \quad &\mbox{for $N$ odd,} \\
(\pi N - \beta)^2 ,\quad &\mbox{for $N$ even,} \end{cases}
\end{align}
and
\begin{align}
    \Sigma_N(\pm \beta) \coloneqq \sum_{j=1}^N \varepsilon_j(\beta) &=
    \begin{cases}
        \displaystyle\frac{\pi^2}{3}N(N^2-1) + \beta^2 N, \quad &\text{for $N$ odd},\\[2mm]
        \displaystyle\frac{\pi^2}{3}N(N^2-1) + (\pi - \beta)^2 N, \quad &\text{for $N$ even.}
    \end{cases} \label{eq:Fermi-Sea sum}
\end{align}
\end{lemma}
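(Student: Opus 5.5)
The plan is to reduce everything to the periodic Laplacian by a gauge transformation, and then order the explicit eigenvalues and sum them by elementary arithmetic.

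\emph{Eigenpairs and completeness.} Let $(M_\beta f)(x)=\ee^{\ii\beta x}f(x)$. This is a unitary map on $\mL^2(0,1)$, and it sends the periodic form domain $\mathcal{Q}_0$ bijectively onto $\mathcal{Q}_\beta$, because $\ee^{\ii\beta\cdot 1}f(1)=\ee^{\ii\beta}f(0)$ whenever $f(1)=f(0)$. Conjugating the form gives $\langle M_\beta f,H_\beta M_\beta f\rangle=\int|f'+\ii\beta f|^2$, so
\[
M_\beta^\ast H_\beta M_\beta=(-\ii\partial_s+\beta)^2
\]
with periodic boundary conditions. The Fourier basis $\ee^{2\pi\ii js}$, $j\in\Z$, diagonalizes this operator with eigenvalues $(2\pi j+\beta)^2$. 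Transporting back under $M_\beta$ gives exactly the eigenpairs in \eqref{eq:eigenpairs explicit}, and completeness is inherited from the completeness of the Fourier basis.

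\emph{Ordering.} Take $0\le\beta\le\pi$. The numbers $|2\pi j+\beta|$ for $j\in\Z$ are
\[
\{2\pi l+\beta : l\ge0\}\cup\{2\pi l-\beta : l\ge1\},
\]
and in increasing order they read $\beta\le 2\pi-\beta\le 2\pi+\beta\le4\pi-\beta\le\cdots$. The inequalities hold because $\beta\le\pi$. Hence the $N$-th value is $2\pi l+\beta=\pi(N-1)+\beta$ when $N=2l+1$ is odd, and $2\pi l-\beta=\pi N-\beta$ when $N=2l$ is even; squaring gives the stated $\varepsilon_N$. For $-\beta$ the substitution $j\mapsto-j$ maps the spectrum of $H_{-\beta}$ onto that of $H_\beta$, so $\varepsilon_N(-\beta)=\varepsilon_N(\beta)$ and the same holds for $\Sigma_N$.

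\emph{Sums.} For odd $N=2L+1$, we have
\[
\Sigma_N=\sum_{l=0}^{L}(2\pi l+\beta)^2+\sum_{l=1}^{L}(2\pi l-\beta)^2 .
\]
The cross terms $\pm4\pi l\beta$ cancel pairwise for $1\le l\le L$. What remains is $N\beta^2+8\pi^2\sum_{l=1}^{L}l^2=N\beta^2+\tfrac{4\pi^2}{3}L(L+1)(2L+1)$. Since $4L(L+1)=N^2-1$, this equals $\tfrac{\pi^2}{3}N(N^2-1)+N\beta^2$.

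For even $N=2L$, set $\gamma=\pi-\beta$. The first $N$ values are $2\pi l+\beta$ for $0\le l\le L-1$ and $2\pi l-\beta$ for $1\le l\le L$. These can be rewritten as $\pi(2k-1)-\gamma$ and $\pi(2k-1)+\gamma$ for $k=1,\dots,L$. Summing the squares, the cross terms again cancel, and we get
\[
2\pi^2\sum_{k=1}^{L}(2k-1)^2+2L\gamma^2=\tfrac{2\pi^2}{3}L(4L^2-1)+N\gamma^2=\tfrac{\pi^2}{3}N(N^2-1)+N(\pi-\beta)^2 .
\]

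\emph{Potential obstacle.} The only delicate step is the ordering at the endpoint cases $\beta=0$ and $\beta=\pi$, where eigenvalues are degenerate. There ties may be broken arbitrarily, since only the multiset of values enters $\varepsilon_N$ and $\Sigma_N$, so the formulas remain valid.
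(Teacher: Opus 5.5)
Your proof is correct and takes the same approach as the paper. You verify the explicit eigenpairs and inherit completeness from the Fourier basis; your unitary gauge map $M_\beta$ makes this inheritance explicit. You then carry out the ordering and the two summations that the paper leaves as ``straightforward calculations.''
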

\begin{proof} It is immediate to verify that $\{\ee_j^\beta\}$ form an orthonormal sequence of eigenfunctions with corresponding eigenvalues given by $\{\mu_j\}$. Completeness follows from completeness of the usual Fourier basis $\{\ee_j^0\}_{j\in \Z}$. The explicit formula for the ordered eigenvalues and for the sum follow from the explicit formulas and straightforward calculations. 
\end{proof}

Using the above spectral decomposition, we can now prove the following weighted estimate, which is the only estimate used in our proof. This estimate will also play a central role in the local kinetic-energy density estimates.

\begin{lemma}[Twisted Fourier filling]
\label{lem:twisted_filling}
Consider $I=(0,1)$, $N>0$, and $\beta\in(-\pi,\pi]$. Let $Q \in \mathcal{B}(\mL^2(I))$ with $0\leq Q \leq 1$ and $\Tr Q = N$, decomposed in $Q = \sum_{j\in M} \ketbra{\phi_j}{\phi_j}$, where $M\subset\N$ and $\phi_j \in L^2(I)$ for every $j\in M$, and suppose that $Q(s,s) = N$. Let $w$ be a bounded function and suppose that $w \phi_j$ belongs to $\mathcal{Q}_\beta$ for every $j\in M$. Then
\begin{align}
    \sum_{j\in M}\int_0^1|(w \phi_j)'|^2
    \geq \left(\Sigma_{\lceil N \rceil}(\beta)+ \varepsilon_{\lceil N \rceil}(\beta)(N - \lceil N \rceil)\right) \int_0^1 |w|^2 .
    \label{eq:weighted_twisted_filling}
\end{align}
In particular,
\begin{align}
    \sum_{j\in M} \int_0^1 |(w\phi_j)'|^2
    \geq
    \frac{\pi^2}{3}
    \left(N^2+\kappa(N,\beta)\right)N
    \int_0^1 |w|^2 , \label{eq:simplified lower bound}
\end{align}
where, writing $m\coloneqq \lfloor N\rfloor$ and $\theta\coloneqq N-m\in[0,1)$, 
\begin{align}
    \kappa(N,\beta)
    \coloneqq
    \begin{cases}
    \displaystyle
    3 (\beta/\pi)^2 
    -6\frac{m(1-\theta)}{N}(|\beta|/\pi)
    +\frac{2m-3m\theta^2-\theta^3}{N},
    & m \text{ even},
    \\[3mm]
    \displaystyle
    3(\beta/\pi)^2 
    -6\frac{\theta(m+1)}{N}(|\beta|/\pi)
    +\frac{-m+6m\theta-3m\theta^2+3\theta-\theta^3}{N},
    & m \text{ odd}.
    \end{cases}
    \label{eq:kappa_N_beta}
\end{align}
\end{lemma}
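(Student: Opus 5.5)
Expanding in the eigenbasis of $H_\beta$ from Lemma~\ref{lem:explicit Laplacian} is the natural route. Since $w\phi_j\in\mathcal Q_\beta$, the quadratic form of $H_\beta$ gives
\[
\sum_{j\in M}\int_0^1|(w\phi_j)'|^2=\sum_{j\in M}\sum_{k\in\Z}\mu_k(\beta)\,|\langle \ee_k^\beta, w\phi_j\rangle|^2=\sum_{k\in\Z}\mu_k(\beta)\, n_k,
\]
where $n_k:=\sum_{j}|\langle \ee_k^\beta,w\phi_j\rangle|^2=\langle w\ee_k^\beta, Q\, w\ee_k^\beta\rangle$. If some $n_k$ is infinite the left side is infinite and there is nothing to prove, so assume the left side is finite.

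The next step is to pin down the occupations. Since $0\le Q\le1$, each occupation satisfies $n_k\le \|w\ee_k^\beta\|^2=\int_0^1|w|^2$. For the total mass,
\[
\sum_k n_k=\sum_j\|w\phi_j\|^2=\int_0^1|w|^2\,Q(s,s)\,\dd s=N\int_0^1|w|^2,
\]
by Parseval and the constant-density constraint $Q(s,s)=N$. This is exactly where the constant density enters.

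The minimization is then a bathtub (Fermi filling) problem. Writing $W:=\int|w|^2$, we minimize $\sum_k\mu_k n_k$ subject to $0\le n_k\le W$ and $\sum n_k=NW$. The minimum is reached by filling the $\lfloor N\rfloor$ lowest eigenvalues fully and the next one with weight $(N-\lfloor N\rfloor)W$. This gives $W\bigl(\Sigma_{\lfloor N\rfloor}+\theta\,\varepsilon_{\lceil N\rceil}\bigr)$, which equals $W\bigl(\Sigma_{\lceil N\rceil}+\varepsilon_{\lceil N\rceil}(N-\lceil N\rceil)\bigr)$ (for integer $N$ the two coincide). To justify this, compare term by term against the Fermi level $\varepsilon_{\lceil N\rceil}$:
\[
\sum_k(\mu_k-\varepsilon_{\lceil N\rceil})n_k\ \ge\ \sum_{\text{filled}}(\mu_k-\varepsilon_{\lceil N\rceil})W,
\]
which holds because levels below the Fermi level are counted with at most $W$ and levels above contribute nonnegatively. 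This proves \eqref{eq:weighted_twisted_filling}.

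For \eqref{eq:simplified lower bound}, insert the explicit formulas for $\Sigma_N$ and $\varepsilon_N$ from Lemma~\ref{lem:explicit Laplacian}, with $|\beta|$ in place of $\beta$ by symmetry, and split according to the parity of $\lceil N\rceil$, i.e.\ of $m+1$ when $\theta>0$. For example, when $m$ is even and $\theta>0$, $\lceil N\rceil=m+1$ is odd, so the sum equals $\frac{\pi^2}{3}(m+1)((m+1)^2-1)+\beta^2(m+1)-(1-\theta)(\pi m+|\beta|)^2$. Expanding this and rewriting it as $\frac{\pi^2}{3}N(N^2+\kappa)$ identifies $\kappa$. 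The case $\theta=0$ must be checked separately using $\lceil N\rceil=m$, and one verifies it agrees with the formula at $\theta=0$.

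I expect this last algebraic matching of $\kappa(N,\beta)$ in each parity case to be the main (purely computational) obstacle. The analytic content is the filling argument above.
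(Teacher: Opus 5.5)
Your proof is correct and is the paper's argument written in the eigenbasis of $H_\beta$. Your occupations $n_k$ are the diagonal entries of $\Gamma_w = wQ\bar w$ (for complex $w$ one should write $n_k=\langle \bar w\ee_k^\beta, Q\,\bar w\ee_k^\beta\rangle$, which does not affect $n_k\le\int_0^1|w|^2$), and your Fermi-level comparison is exactly the paper's splitting into the two nonnegative terms $\Tr(H_\beta-\lambda_F)_+\Gamma_w$ and $\Tr(H_\beta-\lambda_F)_-(|w|^2-\Gamma_w)$, using the same three inputs: constant density, $0\le Q\le 1$, and constant-modulus eigenfunctions.
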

\begin{proof} Let us set
\[  \lambda_F = \varepsilon_{\lceil N \rceil}(\beta)\quad \text{and}\quad \Gamma_w = w Q \overline w. \]
By the decomposition of $Q$, we have
\begin{align}
    \Tr H_\beta\Gamma_w = \sum_{j\in M}\int_0^1 |(w\phi_j)'|^2. \label{eq:trace orbital identity}
\end{align}
Then, since $Q(s,s) = N$, we have
\begin{align}
    \Tr \lambda_F \Gamma_w = N \varepsilon_{\lceil N \rceil}(\beta)\int_0^1 |w|^2. \label{eq:first trace id}
\end{align}
Similarly, since the eigenfunctions of $H_\beta$ have constant modulus one, we have
\begin{align}
    \Tr (H_{\beta} - \lambda_F)_- |w|^2 = \left(\varepsilon_{\lceil N \rceil}(\beta)\lceil N \rceil -\Sigma_{\lceil N \rceil}(\beta)\right) \int_0^1 |w|^2. \label{eq:second trace id}
\end{align}
Combining~\eqref{eq:first trace id} and~\eqref{eq:second trace id} we find
\begin{multline}
    \Tr H_{\beta} \Gamma_w - \left(\Sigma_{\lceil N \rceil}(\beta)+ \varepsilon_{\lceil N \rceil}(\beta)(N - \lceil N \rceil)\right) \int_0^1 |w|^2 = \Tr (H_{\beta}- \lambda_F) \Gamma_w + \Tr(H_{\beta} - \lambda_F)_- |w|^2  \ \\
    = \Tr  (H_{\beta} - \lambda_F)_+ \Gamma_w + \Tr (H_{\beta} - \lambda_F)_-(|w|^2 - \Gamma_w). \label{eq:funny id}
\end{multline}
To conclude, note that, since $0\leq Q \leq1$ we have $0\leq \Gamma_w \leq |w|^2$, and therefore, the right-hand side is the sum of two non-negative terms.

The explicit formulas in~\eqref{eq:simplified lower bound} and~\eqref{eq:kappa_N_beta} follow from the explicit formulas in Lemma~\ref{lem:explicit Laplacian}.
\end{proof}

We next explicitly compute all weight functions and density matrices that saturate inequality~\eqref{eq:weighted_twisted_filling}.

\begin{lemma}[Saturation of weighted inequality] \label{lem:inequality saturation} Let $N>0$, $\beta\in(-\pi,\pi]$, and let $w\geq0$ be a bounded
weight that is not identically zero. Let
\[
    Q=\sum_{j \in M}|\phi_j\rangle\langle\phi_j|,
    \qquad 0\leq Q\leq1,
    \qquad Q(s,s)=N, 
    \qquad M \subset \N
\]
and define
\[ \lambda_F \coloneqq \varepsilon_{\lceil N \rceil}(\beta). \]
Suppose that $w\phi_j\in\mathcal Q_\beta$ for every $j \in M$, then
\begin{align}
    \Tr H_{\beta} (w Q w) = \left(\Sigma_{\lceil N \rceil}(\beta)+ \lambda_F(N - \lceil N \rceil)\right) \int_0^1 w^2.  \label{eq:weighted_saturated}
\end{align}
holds if and only if one of the following holds:
\begin{enumerate}[label=(\roman*)]
\item \textbf{Integer exceptional case.} \label{it:exceptional}
If $N\geq2$ is an integer and $N+\beta/\pi$ is even, then
\begin{align}
    &w^2(s) = A + B \cos(2\pi (s-s_0)), \qquad A>0, \quad 0\leq B \leq A, \quad s_0 \in [0,1) \label{eq:weight_formula}
\end{align}
and
\begin{align}
&Q = t Q_+ + (1-t) Q_-, \quad t\in [0,1], \qquad
    Q_{\pm} = \frac{f_\pm}{w} P^{\leq}_{\pi-\beta} \frac{\overline{f_\pm}}{w} , \label{eq:Q formula}
\end{align}
where
\begin{align}
    f_+(s) \coloneqq p \ee^{-\ii \pi (s-s_0)} + q \ee^{\ii \pi (s-s_0)}, \qquad f_-(s) \coloneqq \overline{f_+(s)}, \label{eq:f formula}
\end{align}
with $p \geq q \geq 0$ satisfying
\begin{align}
    p^2 + q^2 = A, \quad 2p q = B, \label{eq:alpha beta def}
\end{align}
and $P^{\leq}_{\pi-\beta}$ denotes the (centered) rank-$N$ Fermi projection with the opposite boundary twist:
\[
P^{\leq}_{\pi-\beta} \coloneqq \mathrm{1}_{(-\infty,\lambda_F]}(H_{\pi-\beta}) = 
\sum_{\substack{k\in\frac{\pi -\beta}{2\pi}+\mathbb Z\\
|k|\leq (N-1)/2}}
\ketbra{\ee_k}{\ee_k},
\qquad
\ee_k(s):=\ee^{2\pi\ii ks}.
\]

\item
\textbf{Antiperiodic lowest-level case.} \label{it:lowest}
If $\beta=\pi$ and $0<N<2$, then $w^2$ is as in~\eqref{eq:weight_formula} and
\begin{equation}
    Q=N\bigl(tQ_++(1-t)Q_-\bigr),
    \qquad
    Q_\pm=
    \left|\frac{f_\pm}{w}\right\rangle
    \left\langle\frac{f_\pm}{w}\right|,
    \label{eq:saturation_lowest}
\end{equation}
where $f_{\pm}$ are again defined by~\eqref{eq:f formula} and~\eqref{eq:alpha beta def}, and $0\leq t\leq 1$ satisfies the additional restriction 
\begin{equation}
    2N^2 t(1-t) \frac{\sqrt{A^2- B^2}}{A + \sqrt{A^2-B^2}} \geq (N-1)_+.
    \label{eq:t restriction}
\end{equation}
In particular, every $t\in[0,1]$ is admissible when
$0<N\leq1$.
\item \textbf{Constant weights.}
If neither case~\ref{it:exceptional} nor case~\ref{it:lowest} apply, then $w^2=A$ for
some $A>0$, and
\begin{equation}
    Q=P^<_\beta +
    \sum_{k\in K_F}t_k|\ee_k\rangle\langle \ee_k|,
    \qquad
    0\leq t_k\leq1,
    \qquad
    \sum_{k\in K_F}t_k=N-\rank P^<_\beta,
    \label{eq:saturation_constant}
\end{equation}
where $P^<_\beta \coloneqq \mathrm{1}_{(-\infty,\lambda_F)}(H_\beta)$ and 
\[
    K_F:=
    \left\{
        k\in\mathbb Z+\frac{\beta}{2\pi}:
        (2\pi k)^2=\lambda_F
    \right\}.
\]
In particular, if $\lambda_F$ is simple, $Q$ is uniquely
determined.
\end{enumerate}
\end{lemma}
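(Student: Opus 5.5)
The plan is to turn the inequality of Lemma~\ref{lem:twisted_filling} into an equality and read off the consequences from the identity~\eqref{eq:funny id}. The right-hand side of~\eqref{eq:funny id} is a sum of two non-negative terms, so~\eqref{eq:weighted_saturated} holds if and only if
\[
\Tr (H_\beta-\lambda_F)_+\Gamma_w=0 \qquad\text{and}\qquad \Tr (H_\beta-\lambda_F)_-\,w(1-Q)w=0,
\]
where $\Gamma_w=wQw$. Let $E:=\ran \mathrm 1_{(-\infty,\lambda_F]}(H_\beta)$ and $E^<:=\ran P^<_\beta$; both are finite-dimensional and spanned by exponentials $\ee_k$.

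The two conditions translate as follows.
\begin{itemize}
\item The first condition says that every $w\phi_j$ lies in $E$, so $\ran\Gamma_w\subset E$.
\item The second condition says that $(1-Q)^{1/2}w\,\ee_k=0$ for every $\ee_k\in E^<$. Hence $Qw\ee_k=w\ee_k$.
\item Applying $w$ to this identity gives $\Gamma_w\ee_k=w^2\ee_k$. Therefore $w^2\ee_k\in E$ for all $\ee_k\in E^<$.
\end{itemize}
This last property is the structural constraint on the weight. Since $w^2\ee_k\in E$, the function $w^2$ is a trigonometric polynomial whose frequencies, shifted by any frequency in $E^<$, stay inside the frequencies of $E$.

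The case analysis is a frequency-counting argument. For the admissible $\beta$, the frequencies of $E$ form a symmetric-or-shifted block of consecutive points $k\in\frac{\beta}{2\pi}+\Z$. $E^<$ is $E$ with the Fermi-level modes removed. I will split according to the structure of this block.
\begin{itemize}
\item If $E^<$ is nonempty and the Fermi level is simple, or the block $E$ exceeds $E^<$ on only one side, then shifting $E^<$ by a nonzero frequency leaves $E$. This forces $w^2$ to be constant, which is case (iii).
\item Conversely, suppose $w^2=A$. Then $\Gamma_w=AQ$, and the two conditions say exactly that $P^<_\beta\le Q\le\mathrm 1_{(-\infty,\lambda_F]}(H_\beta)$. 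The constant-density constraint then gives~\eqref{eq:saturation_constant}. Here I use that $|\ee_k|=1$, so any convex combination of Fermi-level modes with the correct trace has constant density.
\item Nonconstant $w^2$ requires $E$ to contain $E^<$ together with one extra mode on each side. This means the Fermi level is doubly degenerate and both Fermi modes are included, which happens exactly when $\lceil N\rceil$ is an integer $N$ whose parity matches $\beta$, i.e.\ $N+\beta/\pi$ is even. In that case $w^2$ may contain only the frequencies $0,\pm1$, which gives~\eqref{eq:weight_formula}, with $B\le A$ forced by $w^2\ge0$.
\item The degenerate situation $E^<=\emptyset$, i.e.\ $\beta=\pi$ and $N<2$, gives no constraint on $w^2$ from the second condition. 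I will handle it separately in the last step.
\end{itemize}

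Next I identify $Q$ in the exceptional case. Here $E$ has dimension $N+1$ and $\Gamma_w$ is an operator on $E$ satisfying the following.
\begin{itemize}
\item $\Gamma_w$ coincides with multiplication by $w^2$ on $E^<$ (dimension $N-1$).
\item $\Gamma_w\le w^2$.
\item $\Gamma_w(s,s)=Nw^2(s)$.
\end{itemize}
By Fej\'er--Riesz, I factor $w^2=|f_+|^2=|f_-|^2$ with $f_\pm$ as in~\eqref{eq:f formula}, where $p,q$ solve~\eqref{eq:alpha beta def}. Multiplication by $f_\pm$ maps the opposite-twist centred Fermi sea $\ran P^\le_{\pi-\beta}$ (dimension $N$) isometrically, after dividing by $w$, into $E$ and contains $w\cdot(w E^<)$-type vectors. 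I then check directly that $Q_\pm$ in~\eqref{eq:Q formula} are projections that saturate both conditions and have density $N$. For the converse, I will show that the set of admissible $\Gamma_w$ is a segment. Since $\Gamma_w$ is fixed on a codimension-two subspace, it remains only to determine a $2\times2$ block. This block is constrained by $\Gamma_w\le w^2$ and by the Fourier coefficients of the diagonal condition $\Gamma_w(s,s)=Nw^2$, and its extreme points should be exactly $wQ_\pm w$.

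The lowest-level case $\beta=\pi$, $N<2$ is handled in the same way. Now $E=\mathrm{span}\{\ee^{\pm\ii\pi s}\}$, so $w\phi_j\in E$ forces $w^2\cdot(\text{rank of }Q)$ to be a sum of $|c_1\ee^{-\ii\pi s}+c_2\ee^{\ii\pi s}|^2$. Hence $w^2$ has the form~\eqref{eq:weight_formula}, and $Q$ is a combination of rank-one operators $|f/w\rangle\langle f/w|$ with $|f|^2\propto w^2$. That gives precisely $f_\pm$. The operator bound $Q\le1$ on the mixture then becomes the eigenvalue condition~\eqref{eq:t restriction}, which I will verify by computing the largest eigenvalue of the $2\times2$ Gram matrix of $f_\pm/w$.

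I expect the main obstacle to be the $2\times2$ block analysis in the exceptional case, that is, proving that only the convex combinations $tQ_++(1-t)Q_-$ occur. My first attempt will be to parametrize the block explicitly in the basis of the two Fermi-level modes and impose the frequency-$\pm1$ and frequency-$\pm2$ components of the diagonal identity $\Gamma_w(s,s)=Nw^2$.
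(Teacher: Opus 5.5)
Your framework matches the paper's: both terms in \eqref{eq:funny id} must vanish, so $\ran\Gamma_w\subset E$ and $\Gamma_w=w^2$ on $E^<$, hence $w^2E^<\subset E$; this is followed by Fourier counting and a Fermi-block analysis. The case classification, however, rests on a false step. You claim that a doubly degenerate Fermi level with both Fermi modes in $E$ ``happens exactly when'' $N+\beta/\pi$ is even. In fact, for $\beta\in\{0,\pi\}$ every level $\lambda_F>0$ is doubly degenerate, and $E=\ran\mathrm{1}_{(-\infty,\lambda_F]}(H_\beta)$ always contains both modes. For instance, $\beta=0$, $N=3$ gives $E=\mathrm{span}\{\ee_0,\ee_{\pm1}\}$ and $E^<=\mathrm{span}\{\ee_0\}$, and then $w^2E^<\subset E$ holds for every $w^2=A+B\cos(2\pi(s-s_0))$. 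The same happens whenever $\beta\in\{0,\pi\}$, $E^<\neq\{0\}$ and $N+\beta/\pi$ is not an even integer. The lemma asserts $B=0$ in these cases, but your range/kernel conditions cannot deliver it, even together with $0\le\Gamma_w\le w^2$. Indeed, $\Gamma_w=A^{-1}|w^2\rangle\langle w^2|$ satisfies all of them with $B\neq0$ and fails only the density constraint. So your derivation of case (iii) breaks down precisely in these closed-shell and fractional-filling cases.

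The missing ingredient is to use the density constraint $\Gamma_w(s,s)=Nw^2(s)$ again, on the Fermi block, as in Step 3 of the paper. Take $w^2=A+B\cos(2\pi(s-s_0))$ from your counting and write $\Gamma_w=P^{\leq}_\beta w^2P^{\leq}_\beta+H$. The two vanishing conditions force $H$ to act only on $\mathrm{span}\{\ee_{\pm k_F}\}$, where $k_F=\sqrt{\lambda_F}/(2\pi)$. Since the frequencies of $E$ are consecutive, $(P^{\leq}_\beta w^2P^{\leq}_\beta)(s,s)=(\dim E-1)w^2(s)+A$. Hence $H(s,s)=(N+1-\dim E)\,w^2(s)-A$. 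The left side contains only the frequencies $0,\pm2k_F$, so the $\cos(2\pi(s-s_0))$ mode forces $B=0$ unless one of two things holds. Either $\dim E=N+1$, a half-filled Fermi shell, which is precisely $N\in\N$ with $N+\beta/\pi$ even; or $2k_F=1$, the lowest antiperiodic level.

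There are also two smaller omissions. First, $E^<=\{0\}$ occurs not only for $\beta=\pi$, $N<2$. It also occurs for $\beta\neq\pi$, $0<N\le1$, where $\dim E=1$ and the density constraint immediately gives constant $w$. It occurs as well for $\beta=\pi$, $N=2$, which is not exceptional; there $B=0$ must be extracted from $Q\le1$, since the $N=2$ analogue of \eqref{eq:t restriction} forces $B=0$ and $t=1/2$. Second, in the one-sided case the shift toward the extra mode does stay inside $E$. You need $w^2$ to be real so that the conjugate coefficient kills that Fourier mode.
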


\begin{proof} As in the previous proof, we set $\Gamma_w 
\coloneqq w Q w$. Then, since both terms in~\eqref{eq:funny id} are non-negative, identity~\eqref{eq:weighted_saturated} holds if and only if both 
\begin{align} \Tr \left(H_{\beta} - \lambda_F\right)_+ \Gamma_w = 0 \quad \text{and}\quad \Tr \left(H_\beta - \lambda_F \right)_- (w^2 - \Gamma_w) = 0. \label{eq:vanishing two} \end{align}

\textbf{Step 1: constant weights.} Let us now assume that $w$ is constant and show that $Q$ is of the form appearing in~\eqref{eq:saturation_constant}. For this, and for the rest of the proof, it is convenient to introduce the subspaces
\begin{align*} &E^{<} \coloneqq \ran \mathrm{1}_{(-\infty, \lambda_F)}(H_\beta),  \quad\!\! E \coloneqq \ran \mathrm{1}_{\{\lambda_F\}}(H_\beta), \!\!\quad \text{and}\!\! \quad E^{\leq} \coloneqq E^{<} \oplus E. \end{align*}
Note that, by the definition of $\lambda_F$ and the spectral decomposition of $H_\beta$ we have
\begin{align}
    \dim E^{<} \leq \lceil N \rceil -1 \quad \text{and}\quad \dim E \leq 2.
\end{align}
Then, for constant $w$,~\eqref{eq:vanishing two} is equivalent to 
\[ \Tr  (H_\beta - \lambda_F)_+ Q = 0 = \Tr (H_\beta - \lambda_F)_- (1-Q). \]
Since $0\leq Q \leq 1$, this implies that $Q$ acts only on $E^{\leq}$ and satisfies $Q\rvert_{E^{<}} = 1$. In particular, $Q = P^<_\beta + H$ for some $H$ acting on $E$. Therefore, if $\dim E = 1$, by the trace constraint $\Tr Q = N$, we have
\[ Q = P^<_\beta + (N - \dim E^{<}) P_E \]
which is precisely~\eqref{eq:saturation_constant} for the case of simple eigenvalue $\lambda_F$. If $\dim E = 2$, then, since $P^<_\beta(s,s)$ is pointwise constant, the condition $Q(s,s) = N$ forces the off-diagonal part of $H$ (in the Fourier eigenbasis of $E$) to vanish. This implies that $Q$ is again of the form~\eqref{eq:saturation_constant}. We have therefore established that, if the weight is constant, then $Q$ is given by~\eqref{eq:saturation_constant}.

\textbf{Step 2: determining the weights.} We now proceed to characterize the weight functions. First, vanishing of the first term in~\eqref{eq:vanishing two} is equivalent to 
\begin{align} \ran \Gamma_w \subset E^{\leq}.\label{eq:first vanishing} \end{align}
On the other hand, vanishing of the second term in~\eqref{eq:vanishing two} is equivalent to  
\begin{align} E^{<} \subset \ker (w^2 - \Gamma_w). \label{eq:second vanishing} \end{align}
We now make two case distinctions.
\begin{enumerate}
    \item ($E^{<} = \{0\}$) First, suppose that $E^{<} = \{0\}$. This happens exactly when either $N\leq 2$ and $\beta = \pi$ or $N \leq 1$ and $\beta \neq \pi$. In the latter case, the space $E^{\leq} = E$ is one-dimensional, so by~\eqref{eq:first vanishing}, $\Gamma_w = A P_E$ for some constant $A>0$. Thus by the trace condition $\Gamma_w(s,s) = Nw^2(s)$, we must have constant $w$. On the other hand, in the case $N\leq 2$ and $\beta = \pi$, we have 
\begin{align} E^{\leq } = E = \mathrm{span} \{ \ee_{1/2}, \ee_{-1/2}\}.  \label{eq:E lowest case} \end{align}
Hence, the condition $\Gamma_w(s,s) = N w^2(s)$ forces $w^2$ to have Fourier modes only on the frequencies $2\pi$, $-2\pi$ and $0$. Since $w$ is non-negative, this implies that $w$ is of the form~\eqref{eq:weight_formula}. 

\item ($E^{<} \neq \{0\}$) In this case,~\eqref{eq:first vanishing} and~\eqref{eq:second vanishing} implies that $w^2(E^{<}) \subset E^{\leq}$. Hence, 
\begin{align}
\inner{\ee_j, w^2 \ee_k} = \inner{\ee^{\ii 2\pi (j-k) s}, w^2} = 0, \quad \text{for $|j|^2< \frac{\lambda_F}{4\pi^2}< |k|^2$, $j,k\in \frac{\beta}{2\pi} + \Z$.}\label{eq:Fourier vanishing}
\end{align}
In the case $\beta \neq 0, \pi$, the differences $k-j$ run through all integers except $\{0,1\}$ or $\{-1,0\}$ (depending on $\beta$ and $\lambda_F$). Hence, all but these two Fourier modes of $w^2$ vanish, and, since $w^2$ is real-valued, it must be constant. On the other hand, in the case $\beta = 0$ or $\beta = \pi$, the differences $k-j$ in~\eqref{eq:Fourier vanishing} run over all integers except $\{-1,0,1\}$. Since $w^2$ is non-negative, we must have $w$ of the form~\eqref{eq:weight_formula}. 
\end{enumerate}
In summary, we have shown that for all $N$ and $\beta$ equality can only hold if
\[ w^2(s) = A + B\cos(2\pi (s-s_0)) ,\]
and, moreover, that $B= 0$ in the cases $\beta \not\in \{0,\pi\}$ or $\beta = 0$ and $N\leq 1$. Hence, Step 1 completes the proof in these cases.

\textbf{Step 3: identifying the non-constant weight cases.}  Therefore, we can now assume $w^2$ to be of the form above and investigate the density matrices $Q$ that saturate the inequality in the cases $\beta = \pi$ or $\beta =0$ and $N>1$. Note that the special feature about these cases is that $E$ is two-dimensional and given by symmetric Fourier modes:
\begin{align} E = \mathrm{span} \{ \ee_{\pm k} \}, \quad \text{where $k= \frac{\sqrt{\lambda_F}}{2\pi}$.} \label{eq:E representation} \end{align}
Next, we note that $w^2$ acts on the twisted Fourier basis as
\begin{align} w^2 \ee^\beta_j = A\ee^\beta_j + \frac{B}{2}\ee^{-\ii 2\pi s_0} \ee^\beta_{j+1} + \frac{B}{2} \ee^{\ii 2\pi s_0} \ee^\beta_{j-1}.\label{eq:multiplication by w^2}  \end{align}
Hence, a direct calculation yields
\begin{align*} \left(P^{\leq}_\beta w^2 P^{\leq}_\beta\right)(s,s) &= A \dim E^{\leq} + B (\dim E^{\leq}-1) \cos(2\pi (s-s_0)) \\
&= (\dim E^{\leq}-1) w^2(s) + A. \end{align*}
Thus, defining
\[ H\coloneqq \Gamma_w - P^\leq_\beta w^2 P^{\leq}_\beta \]
and using that $\Gamma_w(s,s) = Nw(s)^2$, we find
\begin{align}
H(s,s) =  (N+1 - \dim E^{\leq}) w^2(s) - A. \label{eq:H diagonal}
\end{align}
On the other hand, by~\eqref{eq:first vanishing} and~\eqref{eq:second vanishing}, the self-adjoint operator $H$ annihilates $E^{<}$ and therefore acts only on the two-dimensional eigenspace $E$. However, from the representation in the basis of $E$ in~\eqref{eq:E representation}, the kernel $H(s,s)$ of any operator acting on $E$ has non-vanishing Fourier modes at most at the frequencies $\{-2\sqrt{\lambda_F}, 0, 2\sqrt{\lambda_F}\}$. Consequently, if $N+1 \neq \dim E^{\leq}$ and $\lambda_F \neq \pi^2$, equation~\eqref{eq:H diagonal} can only hold if $w$ is constant. From the explicit spectrum of $H_\beta$, this reduces the problem to either the exceptional integer cases $\beta/\pi + N$ even (where $\dim E^\leq = N+1$) or the lowest anti-periodic cases $\beta = \pi$ and $N\leq 2$ (where $\lambda_F = \pi^2$). The endpoint $N=2,\beta=\pi$ will be shown below to force a constant weight.

\textbf{Step 4: determining $Q$ for exceptional integer cases.} We now focus on the case $N + \beta/\pi$ even and $N\geq 2$. In this case, $N$ is an integer and $\lambda_F = \pi^2 N^2$. Moreover, we have $\dim E^{\leq} = N+1$ and $H(s,s) = -A$. Thus $H$ is of the form
\[ H = -\lambda \ketbra{\ee_{N/2}}{\ee_{N/2}} + (\lambda-A) \ketbra{\ee_{-N/2}}{\ee_{-N/2}}. \]
In particular, $\Gamma_w = P^\leq w^2 P^\leq + H$ is of the form
\begin{align}
\Gamma_w(\lambda)
\coloneqq&
\lambda |\ee_{-N/2} \rangle \langle \ee_{-N/2} |
+(A-\lambda)\ketbra{\ee_{N/2}}{\ee_{N/2}}
+A P^<_\beta
\nonumber\\
&\quad
+
\sum_{j=-N/2+1}^{N/2}
\left(
\frac{B}{2}\ee^{-\ii 2\pi s_0}\,\ketbra{\ee_j}{\ee_{j-1}}
+
\frac{B}{2}\ee^{\ii 2\pi s_0} \,\ketbra{\ee_{j-1}}{\ee_j}
\right)
\label{eq:Gamma-a}
\end{align}
for some $\lambda \in \R$, where the sums run over half-integers for $N$ odd and integers for $N$ even. The remaining issue is to determine the values of $\lambda \in \R$ such that $0\leq \Gamma_w(\lambda) \leq w^2$ and show that this can be written as in~\eqref{eq:Q formula}. For this, first note that, for $f_+$ defined as in~\eqref{eq:f formula},
\[
f_+\ee_k
=
p \ee^{\ii \pi s_0} \ee_{k-1/2}+q \ee^{-\ii \pi s_0} \ee_{k+1/2}.
\]
Therefore, splitting the sums and shifting the indices,
\begin{align*}
\Gamma_+
:=&
f_+ P^\leq_{\pi-\beta} \overline f_+ 
=
\sum_{|k|\leq \frac{N-1}{2}}
\ketbra{p \ee^{\ii \pi s_0} \ee_{k-1/2}+q \ee^{-\ii \pi s_0}  \ee_{k+1/2}}
{p \ee^{\ii \pi s_0}  \ee_{k-1/2}+q \ee^{-\ii \pi s_0} \ee_{k+1/2}} \\
=& p^2 \ketbra{\ee_{-N/2}}{\ee_{-N/2}} + q^2 \ketbra{\ee_{N/2}}{\ee_{N/2}} + \sum_{|k| \leq \frac{N-2}{2}}  (p^2  + q^2)\ketbra{\ee_{k}}{\ee_k} \\
& + pq \sum_{|k|\leq \frac{N-1}{2}} \left(\ee^{\ii 2\pi s_0} \ketbra{\ee_{k-1/2}}{\ee_{k+1/2}} + \ee^{-\ii 2\pi s_0} \ketbra{\ee_{k+1/2}}{\ee_{k-1/2}}\right). 
\end{align*}
Using~\eqref{eq:alpha beta def}, the above expression becomes exactly \eqref{eq:Gamma-a} with $\lambda=p^2 $. A similar calculation shows that
\[
\Gamma_-:= f_- P^\leq_{\pi-\beta} \overline{f_-} = \Gamma_w(q^2). 
\]
In particular, $\Gamma_w(p^2) = \Gamma_+ \geq 0$ and $\Gamma_w(q^2) = \Gamma_- \geq 0$. On the other hand, since
\[ |f_\pm(s)|^2 = p^2 + q^2 + 2pq \mathrm{Re} \, \ee^{\ii 2\pi (s-s_0)} = w(s)^2, \quad \text{and}\quad P^\leq_{\pi-\beta} \leq 1, \]
we also have the upper bound $\Gamma_\pm \leq w^2$. Therefore, $0\leq Q_\pm \leq 1$, and the same holds for any convex combination of these density matrices. This shows that any convex combination of $Q_+$ and $Q_-$ is admissible and saturates the inequality with the weight function in~\eqref{eq:weight_formula}.

To complete the proof in this case, it remains to show that convex combinations of $Q_\pm$  give all admissible solutions of~\eqref{eq:weighted_saturated}. We shall do this by showing that $\Gamma_w(\lambda)$ is no longer non-negative for $\lambda$ outside the interval $[q^2,p^2]$. So first, we note that, in the ordered Fourier basis $\{\ee_{N/2}, \ee_{N/2-1},..., \ee_{-N/2}\}$ of $E^\leq$, $\Gamma_w(\lambda)$ can be written in the matrix form
\begin{align}
    \Gamma_w(\lambda) = \begin{pmatrix} A-\lambda & c & 0 & 0 & \cdots &   0\\
    \overline c & A & c & 0 &\cdots &   0\\
     0 & \overline c & A & c & 0 &\vdots  \\
     \vdots & \ddots & \ddots & \ddots & \ddots & \vdots\\ 
      & \dots & &  \overline c & A & c\\ 
     0 & \dots &  &  0  & \overline c & \lambda \end{pmatrix}, \qquad c = \frac{B}{2} \ee^{-\ii 2\pi s_0}.
\end{align}
Thus 
\[ \det \Gamma_w(\lambda) = -\lambda ^2 \det M + C \lambda + D\]
for some constants $C,D\in \R$ where the minor $M\in \C^{(N-1) \times (N-1)}$ is given by
\begin{align*}
 M = \begin{pmatrix}   A & c & 0 &\cdots &   0\\
      \overline c & A & c & 0 &\vdots  \\
     0 & \ddots & \ddots & \ddots & \vdots \\ 
     \vdots &  & \ddots &\ddots & c \\
     0 & \dots & &  \overline c & A  \end{pmatrix}.
\end{align*}
Hence, by doing Gauss elimination in $M$ to remove the upper triangular part, the diagonal elements of the resulting matrix can be recursively defined as
\begin{align*}
    p_k = A - \frac{|c|^2}{p_{k-1}}, \quad k\geq 2, \quad p_1 = A.
\end{align*}
Hence, using that $|c|^2 = \frac{B^2}{4} \leq \frac{A^2}{4}$ and induction, we find that
\begin{align*}
    p_k \geq \frac{A}{2}>0, \quad \mbox{for any $1\leq k \leq N-1$.}
\end{align*}
In particular, $\prod_k p_k > 0$, and therefore, $\det M > 0$. Consequently, the function $\lambda \mapsto \det \Gamma_w(\lambda)$ is a non-trivial polynomial of degree $2$. On the other hand, we have $\dim E^{\leq} = N+1$ whereas $\rank \Gamma_\pm = N$; therefore,
\[ \det \Gamma_w(p^2) =\det \Gamma_+ =  0 = \det \Gamma_w(q^2). \]
Thus, if $p^2 > q^2$, then $\det \Gamma_w(\lambda) = -\det M (\lambda - p^2)(\lambda-q^2)$, and therefore, $\Gamma_w(\lambda)\geq 0$ if and only if $\lambda \in [q^2,p^2]$. The case $p^2 = q^2$ then follows by a continuity argument.

\textbf{Step 5: determining $Q$ in the lowest anti-periodic case.} Finally, we address the last case $\beta = \pi$ and $N\leq 2$. Here we know that $E^\leq = E$ is given by~\eqref{eq:E lowest case}. So $\Gamma_w$ can be written as
\begin{align*}
\Gamma_w = a \ketbra{\ee_{1/2}}{\ee_{1/2}} + c\ketbra{\ee_{1/2}}{\ee_{-1/2}} + \overline c\ketbra{\ee_{-1/2}}{\ee_{1/2}} + b \ketbra{\ee_{-1/2}}{\ee_{-1/2}}. \end{align*}
The constraint $\Gamma_w(s,s) = N w^2(s) =N (A  + B \cos(2\pi (s-s_0)))$ now forces $a+b = NA$ and $c = NB\ee^{-2\ii \pi s_0}/2$. On the other hand, one can check that 
\begin{align*}
\Gamma_+ = N \ketbra{f_+}{f_+} = w N Q_+ w \quad \text{and} \quad \Gamma_- = N\ketbra{f_-}{f_-} = N w Q_- w
\end{align*} 
satisfy this constraint. Thus any $\Gamma_w$ satisfying this constraint is of the form $\Gamma_w(t) = t \Gamma_+ + (1-t) \Gamma_-$ for $t\in \R$ (in the case $p\neq q$). Moreover, using~\eqref{eq:alpha beta def}, straightforward calculation yields
\[ \det \Gamma_w(t) = N^2 t(1-t) \left(A^2 - B^2\right). \]
Hence, $\Gamma_w(t) \geq 0$ if and only if $t\in [0,1]$. In case $p=q$, $\Gamma_+ = \Gamma_-$ and one can show that this is the only non-negative operator satisfying the diagonal constraint.

It remains to determine when 
\[ Q(t) = tN \ketbra{u_+}{u_+} + (1-t) N \ketbra{u_-}{u_-} \quad u_{\pm} = f_{\pm}/w , \]
satisfies $Q(t) \leq 1$. For $N\leq 1$, this is clearly satisfied for any $t\in [0,1]$. For $N>1$, let $\nu_1,\nu_2$ denote the two non-negative eigenvalues of $Q$. Then
\[
    \nu_1+\nu_2=N,
    \qquad
    \nu_1\nu_2
       =N^2t(1-t)
          \bigl(1-|\langle u_+,u_-\rangle|^2\bigr).
\]
Since $\nu_1 + \nu_2 = N\leq 2$, both eigenvalues are at most one if and only if
\[
    (1-\nu_1)(1-\nu_2)\geq0,
\]
or equivalently, 
\[
    t\in[0,1],
    \qquad
    N^2t(1-t)
          \bigl(1-|\langle u_+,u_-\rangle|^2\bigr) \geq(N-1)_+,
\]
which is precisely~\eqref{eq:t restriction} if we can show that
\[
    |\langle u_+,u_-\rangle|^2=\frac{A-\sqrt{A^2-B^2}}{A+\sqrt{A^2-B^2}}.
\]
This follows from the identities, 
\[
    f_+^{\,2}(s)
       =B + A\cos (2\pi (s-s_0)) -\ii (p^2-q^2) \sin (2\pi (s-s_0)).
\]
and using the integral
\[
    \int_0^1\frac{ds}{A+B\cos(2\pi(s-s_0))}=\frac{1}{\sqrt{A^2-B^2}} , \quad \text{in the case $A>B$.}
\]
The case $A=B$ is trivial because $f_{+}(s)^2 = w(s)^2$.

Finally, if $N=2$,~\eqref{eq:t restriction} forces $\sqrt{A^2-B^2}=A$ and $t=1/2$, thus $B=0$ and
\[
    Q=|\ee_{-1/2}\rangle\langle \ee_{-1/2}|
      +|\ee_{1/2}\rangle\langle \ee_{1/2}|,
\]
which belongs to the constant-weight case.
\end{proof}

\subsection{Inverse cumulative transformation for optimal weights}

We now compute the density profiles with cumulative densities given by the optimal family of weights in~\eqref{eq:weight_formula}. 
\begin{lemma}[Density profiles from optimal weights]
\label{lem:inverse cumulative optimizers}
Let $\Omega=\R$ or $\Omega=\T$, and let $\sqrt{\rho}\in\mH^1(\Omega)$ be a normalized density profile. Suppose that its cumulative density $a$, defined in~\eqref{eq:a def}, satisfies
\begin{align}
    a(s)^2=A+B\cos(2\pi(s-s_0)),
    \qquad A>0,\quad 0\leq B\leq A,\quad s_0\in[0,1).
    \label{eq:general cumulative optimizer}
\end{align}
Then, up to translations, the following hold:
\begin{enumerate}[label=(\roman*)]
\item If $\Omega=\R$, then necessarily $B=A$ and $s_0=1/2$, and
\begin{align}
    \rho(x)=\rhor{\mu}(x)
    =\frac{\mu}{\pi\cosh(\mu x)},
    \qquad \mu=\pi\sqrt{2A}>0.
\end{align}
Moreover, the corresponding cumulative coordinate is
\begin{align}
    S(x)=\frac2\pi\arctan(\ee^{\mu x}). \label{eq:inverse cumulative line}
\end{align}

\item If $\Omega=\T$, then necessarily $B<A$, and
\begin{align}
    \rho(x)=\rhot{m}(x)
    =\frac{2K(m)}{\pi}
    \sqrt{1-m\sin^2\left(
    \mathrm{am}(2K(m)x\,|\,m)\right)},
    \qquad m=\frac{2B}{A+B}\in[0,1).
\end{align}
Moreover, choosing $x_0\in \T$ in~\eqref{eq:S def} such that $\rho(x_0) = \norm{\rho}_\infty$, we have
\begin{align}
    a(s) = \frac{2K(m)}{\pi} \sqrt{1-m \sin^2(\pi s)} \quad \text{and}\quad S(x)=\frac1\pi\mathrm{am}(2K(m)x\,|\,m). \label{eq:inverse cumulative torus}
\end{align}
\end{enumerate}
Conversely, every density profile of the above forms has a cumulative density satisfying~\eqref{eq:general cumulative optimizer}.
\end{lemma}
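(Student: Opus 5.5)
The plan is to invert the cumulative change of variables explicitly using Proposition~\ref{prop:structure cumulative density}. On each component $S_\alpha$ of $\{a>0\}$ we have $X'(s)=1/a(s)$. So once $a$ is known, $X$ is determined up to an additive constant (a translation) by
\begin{align*}
X(s)=\int^s\frac{\dd\sigma}{a(\sigma)} .
\end{align*}
We then recover $S=X^{-1}$ and $\rho(x)=S'(x)=a(S(x))$. The work therefore splits into two parts: first use the boundary and normalization information to pin down the parameters $A,B,s_0$; then evaluate this elementary integral.

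\textbf{Line.} By Proposition~\ref{prop:structure cumulative density}\ref{it:H1 regularity}, $a\in\mH^1_0(0,1)$. Hence $a(0)^2=A+B\cos(2\pi s_0)=0$. Since $0\le B\le A$ and $A>0$, this forces $B=A$ and $\cos(2\pi s_0)=-1$, i.e.\ $s_0=1/2$. Then $a(s)^2=A(1-\cos 2\pi s)=2A\sin^2(\pi s)$, so
\begin{align*}
a(s)=\sqrt{2A}\,\sin(\pi s).
\end{align*}
In particular $a>0$ on all of $(0,1)$. Thus $\{a>0\}$ is a single component, $\rho>0$ everywhere, and $X:(0,1)\to\R$ is a diffeomorphism. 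Integrating, with $\mu=\pi\sqrt{2A}$,
\begin{align*}
X(s)=\frac{1}{\mu}\log\tan\left(\frac{\pi s}{2}\right)+c .
\end{align*}
Removing $c$ by translation gives $\tan(\pi S/2)=\ee^{\mu x}$, which is~\eqref{eq:inverse cumulative line}. Differentiating $S$ gives
\begin{align*}
\rho=S'=\frac{2\mu\,\ee^{\mu x}}{\pi(1+\ee^{2\mu x})}=\frac{\mu}{\pi\cosh(\mu x)} .
\end{align*}
No further normalization is needed, since $\int_0^1\dd s/a=\infty$ matches $|\R|=\infty$.

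\textbf{Torus.} Here the key extra constraint is $|\T|=1$. By Proposition~\ref{prop:structure cumulative density}\ref{it:structure cumulative coordinates}, $\sum_\alpha|J_\alpha|\le 1$, so
\begin{align*}
\int_{\{a>0\}}\frac{\dd s}{a}\le 1 .
\end{align*}
If $B=A$, then $a$ vanishes linearly at $s_0+1/2$ and $1/a$ is not integrable near that point, which is a contradiction; hence $B<A$. Then $a>0$ everywhere, $\rho>0$ on $\T$, and $X$ is a global diffeomorphism. The identity
\begin{align*}
\int_0^1\frac{\dd s}{a}=|\T|=1
\end{align*}
becomes the normalization condition that fixes the scale.

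To apply it, choose $x_0$ at a maximum of $\rho$. This corresponds to shifting $s$ so that $s_0=0$; I expect this relabelling, which must be made consistent with the definition of $S$ from $x_0$, to need a short argument. After the shift, write
\begin{align*}
a^2=(A+B)-2B\sin^2(\pi s)=(A+B)\bigl(1-m\sin^2(\pi s)\bigr),\qquad m=\frac{2B}{A+B}\in[0,1).
\end{align*}
Then, by the definition~\eqref{eq:elliptic function def},
\begin{align*}
X(s)=\frac{F(\pi s\,|\,m)}{\pi\sqrt{A+B}} ,
\end{align*}
using that $F(\pi s\,|\,m)$ is the substitution $\theta=\pi s$. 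The condition $X(1)=1$ gives $F(\pi|m)=2K(m)$, hence $\sqrt{A+B}=2K(m)/\pi$. This yields the formula for $a$ in~\eqref{eq:inverse cumulative torus}. Inverting gives $\pi S(x)=\mathrm{am}(2K(m)x\,|\,m)$, and finally $\rho(x)=a(S(x))=\rhot{m}(x)$.

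The converse direction is the same computation read backwards: for $\rhor{\mu}$ or $\rhot{m}$, compute $S$ explicitly and check that $a=\rho\circ X$ has $a^2$ of the form~\eqref{eq:general cumulative optimizer}.

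The main obstacle is the bookkeeping around zeros of $a$ and the component structure, rather than the integrals themselves. On the line, one must make sure the endpoint condition really comes from $a\in\mH^1_0$ and that no interior zeros or plateaus can occur; on the torus, one must justify the non-integrability argument that excludes $B=A$. A secondary care point is the translation freedom, namely matching the choice of $x_0$ at the maximum of $\rho$ with $s_0=0$.
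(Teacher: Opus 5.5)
Your proposal is correct and follows the paper's proof essentially step by step. On the line, $a\in\mH^1_0$ forces $B=A$ and $s_0=1/2$, and integrating $X'=1/a$ gives $\log\tan(\pi s/2)$. On the torus, non-integrability of $1/a$ against the unit length of $\T$ rules out $B=A$, and the normalization $\int_0^1 \dd s/a=1$ with $X(s)=F(\pi s\,|\,m)/(2K(m))$ yields the elliptic formulas. The only difference is cosmetic: you exclude $B=A$ via $\sum_\alpha|J_\alpha|\le 1$ rather than the paper's bound $|X(s)-X(t)|\le 1$ on one component, which is the same argument.
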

\begin{proof}
We first consider $\Omega=\R$. By Proposition~\ref{prop:structure cumulative density}, the cumulative density satisfies $a\in\mH^1_0(0,1)$. Thus $a(0)=a(1)=0$, which, together with~\eqref{eq:general cumulative optimizer}, implies $B=A$ and $s_0=1/2$. Consequently,
\[
a(s)=\sqrt{2A}\sin(\pi s)
=\frac{\mu}{\pi}\sin(\pi s),
\qquad \mu=\pi\sqrt{2A}.
\]
In particular, $a>0$ on $(0,1)$ and, by Proposition~\ref{prop:structure cumulative density}, $S:\R \rightarrow (0,1)$ is a $C^1$-diffeomorphism with inverse $X = S^{-1}: (0,1) \rightarrow \R$ satisfying $X'(s) = 1/a(s)$. Moreover, after possibly translating $\rho$, we can assume that $X(1/2) = 0$. Hence,
\[
X(s)=\frac{\pi}{\mu}\int_{1/2}^s
\frac{\dd t}{\sin(\pi t)}
=\frac1\mu\log\tan\left(\frac{\pi s}{2}\right).
\]
Inverting this expression gives
\[
S(x)=\frac2\pi\arctan(\ee^{\mu x}),
\]
and hence
\[
\rho(x)=a(S(x))
=\frac{\mu}{\pi\cosh(\mu x)}.
\]

Next, let $\Omega=\T$. Since the origin of the cumulative coordinate can be chosen arbitrarily in $\T$, we may assume, up to translation, that $s_0=0$. We now claim that $B<A$. Indeed, if $B=A$, then $a$ vanishes only at $s=1/2$ and, by Proposition~\ref{prop:structure cumulative density}~\ref{it:structure cumulative coordinates}, we can find an open interval $J\subset \T\setminus\{x_0\} \cong (0,1)$ such that $X:(0,1/2) \rightarrow J$ is a $C^1$-diffeomorphism with $X'(s) = 1/a(s)$. In particular
\[ |X(s) - X(t)| = \left|\int_s^t \frac{1}{a(u)} \dd u \right| \leq 1, \quad s,t \in (0,1/2). \]
However, $1/a(u)$ is not integrable around $u=1/2$, which yields a contradiction.

Since $A>B$, we can write
\[
a(s)=\sqrt{A+B}\sqrt{1-m\sin^2(\pi s)},
\qquad m=\frac{2B}{A+B}\in[0,1).
\]
Moreover, note that $a>0$ everywhere so that $X$ is a diffeomorphism in $(0,1)$. The normalization $X(1)-X(0)=1$ then gives
\[
1=\int_0^1\frac{\dd s}{a(s)}
=\frac{2K(m)}{\pi\sqrt{A+B}} \quad \text{thus}\quad \sqrt{A+B}=\frac{2K(m)}{\pi}.
\]
and therefore,
\[
a(s)=\frac{2K(m)}{\pi}
\sqrt{1-m\sin^2(\pi s)}.
\]
Choosing $X(0)=0$, we obtain
\[
X(s)=\int_0^s\frac{\dd t}{a(t)}
=\frac1{2K(m)}F(\pi s\,|\,m).
\]
Hence,
\[
S(x)=\frac1\pi\mathrm{am}(2K(m)x\,|\,m),
\]
which yields
\[
\rho(x)=a(S(x))
=\frac{2K(m)}{\pi}
\sqrt{1-m\sin^2\left(
\mathrm{am}(2K(m)x\,|\,m)\right)}.
\]
Finally, the converse statements follow by the same changes of variables.
\end{proof}
\subsection{Proof on the real line}

We now turn to the proof of the gradient-corrected LT inequality in the real line. 

\begin{proof}[Proof of Theorem~\ref{thm:LT_line}]
We can assume $\sqrt{\rho} \in \mH^1(\R)$, as otherwise there is nothing to prove. Then, writing $Q = \sum_j \ketbra{\phi_j}{\phi_j}$ where $Q= U_{\rho/N}^\ast \gamma U_{\rho/N}$, from~\eqref{eq:kinetic identity}, we have
\begin{align} T[\gamma] + 3G(\rho) = \sum_{j} \int_0^1 |(a \phi_j)'(s)|^2 \dd s.\label{eq:first id} \end{align}
Here the normalized density is $\rho/N$, so there is no $N$ in the gradient term as in~\eqref{eq:kinetic identity}.

Moreover, from Proposition~\ref{prop:structure cumulative density}, $a\in \mH^1_0(0,1)$. The density constraint gives
\[
    |\phi_j(s)|^2\leq \sum_k|\phi_k(s)|^2=N
    \quad\text{for a.e. }s\in(0,1),
\]
and hence $|a\phi_j|\leq\sqrt N\,a$ almost everywhere. By~\eqref{eq:first id}, $a\phi_j\in\mH^1(0,1)$. Taking the continuous representatives of $a\phi_j$ and $a$, the preceding inequality extends to every $s\in[0,1]$. Since the trace of $a$ vanishes at both endpoints, we therefore have
\[\lim_{s\uparrow 1} a(s) \phi_j(s) = 0 = \lim_{s\downarrow 0} a(s) \phi_j(s). \]
Consequently $a \phi_j$ belongs to $\mH^1_0(I)$, hence to $\mathcal{Q}_\beta$ for any $-\pi \leq \beta < \pi$. We can therefore apply Lemma~\ref{lem:twisted_filling} in~\eqref{eq:first id} with weight $w(s) = a(s)$ and optimize over $\beta$ to obtain
\begin{align*}
    T[\gamma] \geq \frac{\pi^2}{3} \left(1 + \frac{\kappa(N)}{N^2}\right) \int_\R \rho^3 - 3 G(\rho),
\end{align*}
where
\begin{align*}
    \kappa(N) = \max_{0\leq \beta \leq \pi} \kappa(N,\beta) \quad \mbox{for $\kappa(N,\beta)$ defined in~\eqref{eq:kappa_N_beta}.}
\end{align*}
Note that $\kappa(N,\beta)$ is a positive parabola in $|\beta|$ so that the supremum is achieved at either $\beta = 0$ or $\beta =\pi$. In the integer case, the maximum is achieved at
\begin{align} \beta_N = \begin{cases} \pi, \quad &\mbox{for $N$ odd,} \\ 0, \quad &\mbox{for $N$ even.} \end{cases}\label{eq:beta optimal} \end{align}
This completes the proof of the general version of the inequality in~\eqref{eq:LT_general_line}.

To compute the optimizers for integer $N\in \N$, we note that the only inequality used in the proof was the weighted inequality in Lemma~\ref{lem:twisted_filling}. Hence, equality is achieved if and only if $Q$ saturate the weighted inequality~\eqref{eq:weighted_twisted_filling} with weight $w=a$ and twist $\beta_N$. For $N\geq2$, since $N+\beta_N/\pi$ is even, by Lemma~\ref{lem:inequality saturation}~case~\ref{it:exceptional} this holds if and only if $a=w$ for some $w$ as in~\eqref{eq:weight_formula} and $Q$ is as in~\eqref{eq:Q formula}. 
For $N=1$, use case~\ref{it:lowest}: $a(0)=a(1)=0$ forces $B=A$ and $s_0=1/2$, so $f_+=f_-=a$ and $Q=\ketbra{\ee_0}{\ee_0}$.
By Lemma~\ref{lem:inverse cumulative optimizers}, this is precisely the case when
\begin{align*}
    \rho(x) = N \rhor{\mu}(x) = N \frac{\mu}{\pi \cosh(\mu x)}, \quad \text{for some $\mu >0$,}
\end{align*}
and, by~\eqref{eq:Q formula},~\eqref{eq:f formula}, and~\eqref{eq:inverse cumulative line}, 
\[ \gamma = \sum_{|j|\leq \frac{N-1}{2}} \ketbra{\psi_j}{\psi_j} \quad \text{with}\quad \psi_j(x) = (U_{\rho/N} \ee_j)(x) = \sqrt{\frac{\rho(x)}{N}} \ee^{\ii 4 j \arctan(\ee^{\mu x})}. \]
The equivalent expression in~\eqref{eq:gamma line} simply follows from the exact sum identity
\[ Q(s,t) = \frac{\sin(\pi N (s-t))}{\sin(\pi (s-t))}. \]
\end{proof}

\subsection{Proof on the torus} For the proof of the gradient-corrected LT in the torus, we shall use one additional lemma.
\begin{lemma}[Minimizing phase under flux constraint] \label{lem:optimal phase} Let $a\in \mH^1(0,1)$ with $\min a  >0$, and $\beta \in (-\pi,\pi]$. Then 
\begin{align}
    \inf \left\{ \int_0^1 a^2 |\chi'|^2  : \chi \in \mH^1(I) , \quad  \chi(1) - \chi(0) \in  \beta + 2\pi \Z \right\} = \frac{\beta^2}{\int_0^1 a^{-2}}, 
\end{align}
and the only minimizers (up to an additive constant) are given by $\beta \chi_a(s)$ if $\beta \neq \pi$ and $\pm \pi \chi_a$ if $\beta = \pi$, where
\begin{align}
    \chi_a(s) = \frac{1}{\int_0^1 a^{-2}} \int_0^s a(t)^{-2} \dd t.  \label{eq:optimal phase}
\end{align}
\end{lemma}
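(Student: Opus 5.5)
The plan is a Cauchy--Schwarz argument, with a direct construction for attainment. Since $\min a>0$ and $a\in\mH^1(0,1)\subset C([0,1])$, the weights $a^{\pm 2}$ are bounded above and below, so $\int_0^1 a^{-2}\in(0,\infty)$ and $\chi_a\in\mH^1(I)$ with $\chi_a(0)=0$, $\chi_a(1)=1$.

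\textbf{Lower bound.} Fix an admissible $\chi$ with $\chi(1)-\chi(0)=\beta+2\pi k$, $k\in\Z$. By Cauchy--Schwarz,
\begin{align*}
|\beta+2\pi k|^2=\left|\int_0^1 \chi'\right|^2 = \left|\int_0^1 (a\chi')\,a^{-1}\right|^2 \leq \int_0^1 a^2|\chi'|^2 \int_0^1 a^{-2}.
\end{align*}
For $\beta\in(-\pi,\pi]$ we have $\min_{k\in\Z}|\beta+2\pi k|=|\beta|$. This gives the lower bound $\beta^2/\int_0^1 a^{-2}$.

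\textbf{Attainment.} Take $\chi=\beta\chi_a$. Then $\chi'=\beta a^{-2}/\int_0^1 a^{-2}$, and a direct computation gives
\[
\int_0^1 a^2|\chi'|^2=\frac{\beta^2}{\int_0^1 a^{-2}},
\]
so the infimum is attained. For $\beta=\pi$, the choice $-\pi\chi_a$ is also admissible, since $-\pi\in\pi+2\pi\Z$, and gives the same value.

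\textbf{Uniqueness.} Equality forces two things.
\begin{enumerate}
\item[(a)] Equality in Cauchy--Schwarz, i.e.\ $a\chi'=c\,a^{-1}$ a.e.\ for some constant $c$, so $\chi'=c\,a^{-2}$. In the degenerate case $\beta=0$ the minimum is $0$, which forces $\chi'=0$ and hence $\chi$ constant, consistent with the formula.
\item[(b)] $|\beta+2\pi k|=|\beta|$. For $\beta\in(-\pi,\pi)$ this forces $k=0$. For $\beta=\pi$ it allows $k\in\{0,-1\}$, i.e.\ total increment $\pm\pi$.
\end{enumerate}
Integrating (a) gives $c=(\beta+2\pi k)/\int_0^1 a^{-2}$. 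Hence $\chi=\beta\chi_a$ up to an additive constant, or $\chi=\pm\pi\chi_a$ up to an additive constant when $\beta=\pi$.

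The main point needing care is the reduction over the winding number $k$ and the resulting two-fold degeneracy at $\beta=\pi$. Everything else is routine, given the positivity of $a$ that makes $a^{-1}\in\mL^2$.
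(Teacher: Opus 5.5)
Your proposal is correct and follows essentially the same argument as the paper: Cauchy--Schwarz applied to $\int_0^1 \chi' = \int_0^1 (a\chi')a^{-1}$, with the equality case forcing $\chi' = c\,a^{-2}$, and the observation that $-\beta \in \beta + 2\pi\Z$ only when $\beta = \pi$ producing the two minimizers $\pm\pi\chi_a$. Your version is somewhat more explicit than the paper's, since you verify attainment by $\beta\chi_a$ directly and spell out the reduction over the winding number $k$.
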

\begin{proof}
    By Cauchy-Schwarz,
    \[ |\beta| \leq \left| \int_0^1 \chi'\right| \leq \left( \int_0^1 a^2 |\chi'|^2 \right)^{\frac12} \left(\int_0^1 a^{-2} \right)^{\frac12},\]
    with equality if and only if $\chi' = C a^{-2}$ for some $C\in \R$ and $\chi(1) - \chi(0) = \pm \beta$. Thus $\pm \beta \chi_a$ are the only functions with the desired normalization that saturates this inequality. As $-\beta \in \beta + 2\pi Z$ only if $\beta = \pi$, this completes the proof.
\end{proof}

\begin{proof}[Proof of Theorem~\ref{thm:LT_torus}]
As in the proof of Theorem~\ref{thm:LT_line}, we have~\eqref{eq:first id} with the transformed orbitals $\phi_j$. In this case, however, the weighted orbitals 
\[ g_j(s) \coloneqq a(s) \phi_j(s) \] 
are only (in general) periodic, so that we can only use the estimate for the case $\beta =0$.

To avoid this shortcoming, we use an averaging argument. Precisely, we define
\[ g_{j,\pm \beta} = g_j(s) \ee^{\pm \ii \beta \chi(s)}, \quad \text{for any $\chi \in \mH^1$ with $\chi(1) - \chi(0) = 1$,} \]
and note that, using~\eqref{eq:Q_constraints},
\begin{align}
    \sum_{j} \frac12 \int_0^1 |g_{j,\beta}'|^2 + |g_{j,-\beta}'|^2  = \sum_{j} \int_0^1 |g_j'|^2 + N \beta^2 \int_0^1 a^2 |\chi'|^2. \label{eq:averaging id}
\end{align}
because the phase term cancels out. Therefore, assuming for the moment that $a>0$ everywhere in $(0,1)$ and choosing the phase $\chi_a(s)$ defined in~\eqref{eq:optimal phase}, which optimizes the second term in~\eqref{eq:averaging id} by Lemma~\ref{lem:optimal phase}, and noticing that $g_{j,\pm \beta} \in \mathcal{Q}_{\pm \beta}$ and
\begin{align*}
    \int_0^1 a(s)^{-2} \dd s= N \int_\T \rho^{-1}(x) \dd x, \quad \text{(by change of variables),}
\end{align*}
we can apply Lemma~\ref{lem:twisted_filling} to~\eqref{eq:averaging id} and re-write all integrals back in the original coordinates to obtain
\begin{align}
    T[\gamma] + 3 G(\rho) \geq \frac{\pi^2}{3} \left(1 + \frac{\kappa(N,\beta)}{N^2} \right) \int_\T \rho^3  - \frac{\beta^2}{\int_\T \rho^{-1}}, \quad -\pi \leq \beta \leq \pi. \label{eq:LT torus proof} 
\end{align}

In the case where $a$ vanishes at a point $s_\ast\in (0,1)$, we can instead pick the phase $\chi = 0$ for $s\leq s_\ast$ and $\chi = \beta$ for $s>s_\ast$. Even though this phase has a jump and is not in $\mH^1$, the total functions $g_{j,\pm}(s) = g_j(s) \ee^{\pm \ii \chi}$ are in $\mH^1$ because each $g_j$ vanishes on $s_\ast$. Moreover, the phase term on the right hand-side of~\eqref{eq:averaging id} vanishes, and we can again apply Lemma~\ref{lem:twisted_filling} with $\beta = \pi$ to these terms to obtain~\eqref{eq:LT torus proof} without the shell term. Nevertheless, we note that, in this case, $\int_\T 1/\rho = \int_0^1 1/a^2 = \infty$, because $a$ (or $\sqrt{\rho}$) are $\frac12$-H\"older continuous so that $1/a^2$ diverges at least as $1/|s-s_\ast|$ close to $s_\ast$. In particular, we can keep the full expression in~\eqref{eq:LT torus proof} with the convention that $\cS(\rho) = 1/\infty = 0$ whenever $a$ vanishes at a point.

We now take the maximum over $\beta$ in~\eqref{eq:LT torus proof}, which is achieved again at either $\beta = 0$ and $\beta = \pi$ because the right-hand side is convex in $\beta$. Indeed, the coefficient of $\beta^2$ in~\eqref{eq:LT torus proof} is
\[ \frac{1}{N^2} \int_\T \rho^3 - \cS(\rho) \]
which is non-negative because, by Cauchy-Schwarz,
\[ N = \int_\T \rho \leq \left(\int_\T \frac{1}{\rho} \right)^{\frac12} \left( \int_\T \rho^3 \right)^{\frac12} = \cS(\rho)^{-\frac12} \left(\int_\T \rho^3\right)^{\frac12}. \]
Thus setting $\beta = 0$ or $\beta = \pi$ completes the proof of the general $N>0$ inequality in~\eqref{eq:LT_general_torus}. For integer $N\in \N$, we note that the maximum is achieved at 
\begin{align}
        \beta_N = \begin{dcases} 0, \quad &\mbox{for $N$ even,} \\ \pi, \quad &\mbox{for $N$ odd.} \end{dcases} \label{eq:favorable beta}
\end{align}

To compute the densities and density matrices saturating the inequality, we note that we only used the weighted inequality~\eqref{eq:weighted_twisted_filling} for $g_{j,\pm \beta}$. Hence, the optimizers $\gamma$ of the LT inequality are precisely the density matrices $\gamma$ such that both
\begin{align}  \ee^{\pm\ii \beta_N \chi}U_{\rho/N}^\ast \gamma  U_{\rho/N} \ee^{\mp\ii \beta_N \chi} =  \ee^{\pm\ii \beta_N \chi} Q \ee^{\mp\ii \beta_N \chi}\label{eq:pm density matrices} \end{align}
are optimizer of the weighted inequality~\eqref{eq:weighted_twisted_filling} with $\beta_N$ twist. We can therefore use the characterization in Lemma~\ref{lem:inequality saturation} and the inverse cumulative transformations in Lemma~\ref{lem:inverse cumulative optimizers} to obtain the corresponding formulas in physical space.

Precisely, for $N$ even, we have $\beta_N = 0$ and both density matrices in~\eqref{eq:pm density matrices} are the same. Thus, by Lemma~\ref{lem:inverse cumulative optimizers}~\ref{it:exceptional}, equality holds precisely when $\rho(x) = N \rhot{m}(x)$, $m\in [0,1)$ with $\rhot{m}$ given by~\eqref{eq:density manifold torus} and
\[ \gamma = t \gamma_+ + (1-t) \gamma_- \]
where, by formulas~\eqref{eq:Q formula}, 
\begin{align} \gamma_{\pm} = \sum_{\substack{|k|\leq \frac{N-1}{2} \\ k\in \Z + \frac12}} \ketbra{\psi_{k,\pm}}{\psi_{k,\pm}}, \quad \psi_{k,\pm}(x) = \sqrt{\rhot{m}(x)} \frac{f_{\pm}(S(x))}{a(S(x))} \ee^{\ii 2\pi k S(x)}, \label{eq:pm density matrices optimal} \end{align}
with $a,S$ given by~\eqref{eq:inverse cumulative torus} and $f_{\pm}$ defined as in~\eqref{eq:f formula}. This is precisely~\eqref{eq:LT_torus_equality_gamma} once we note that
\[ \frac{f_{\pm}(s)}{a(s)} = \ee^{\ii \mp \arctan \theta(s)}, \quad  \theta(x) = \frac{p-q}{p+q} \tan( \pi s) = \sqrt{\frac{A-B}{A+B}} \tan(\pi s) = \sqrt{1-m} \tan(\pi s).\]

For $N$ odd, we can again use Lemma~\ref{lem:inequality saturation} and Lemma~\ref{lem:inverse cumulative optimizers} to conclude that equality can only be achieved if $\rho = N \rhot{m}$ for some $m\in [0,1)$, with $\rhot{m}$ given by~\eqref{eq:LT_torus_equality_density}. In this case, however, the density matrices in~\eqref{eq:pm density matrices} are not the same, so we must have both
\begin{align}  \ee^{\pm \ii \pi \chi} Q \ee^{\mp \ii 
\pi \chi} = t_{\pm} Q_+ + (1-t_{\pm}) Q_-, \quad \text{for some $t_\pm \in [0,1]$,} \label{eq:double optimizing identity} \end{align}
with $Q_\pm$ defined as in~\eqref{eq:Q formula}. We now show that this is only possible if 
\[ Q = P^{\leq}_0= \sum_{|j|\leq \frac{N-1}{2}} \ketbra{\ee_j}{\ee_j},\]
which together with the previous formulas yields the optimizer in~\eqref{eq:LT_torus_equality_gamma}.

For this, we first claim that, up to a global phase, 
\begin{align} a(s)\ee^{\ii \pi \chi_a(s)} = f_+(s) \quad \text{or} \quad a(s) \ee^{\ii \pi \chi_a(s)} = f_-(s).\label{eq:claim phase id} \end{align}
Indeed, on the one hand, $|f_\pm(s)| = a(s)$ and $f_\pm$ have minimal kinetic energy among all anti-periodic functions. On the other hand, by Lemma~\ref{lem:optimal phase}, $a(s) \ee^{\ii \pi \chi_a(s)}$ has the minimal kinetic energy among anti-periodic functions $g$ with $|g(s)| = a(s)$. Since there are only two such anti-periodic functions by the uniqueness in Lemma~\ref{lem:optimal phase}, the claim follows. 

Now using identity~\eqref{eq:claim phase id}, we can see that $Q = P^{\leq}_0$ satisfies both~\eqref{eq:pm density matrices optimal} for $t_\pm = 1$ or $t_{\pm} =0$ (depending on which equation in~\eqref{eq:claim phase id} holds). Thus $\gamma$ defined in~\eqref{eq:LT_torus_equality_gamma} is an optimizer. 

To conclude the proof, we could use the explicit expressions for $Q_\pm$ to argue that $Q= P^{\leq}_0$ is the only possible solution of~\eqref{eq:double optimizing identity}. Instead, we shall argue via uniqueness of ground-states, which avoids further calculations. Precisely, we note that any optimizer $\gamma$ minimizes the kinetic energy among all density matrices with density $N\rhot{m}$. Since $\rhot{m}>0$ everywhere in $\T$, by \cite[Theorem 2.4]{Cor26a}, there exists a unique potential $v\in \mH^{-1}(\T)$ such that $\gamma$ is the single-particle density matrix of a fermionic ground-state of $H_N(v) = -\Delta_{\T^N} + \sum_{j=1}^N v(x_j)$. However, since $N$ is odd, by the non-degeneracy theorem in \cite[Theorem 2.3]{Cor26b}, there exists only one such ground-state, and therefore, only one such density matrix. This shows that the optimizer is unique, and therefore given by~\eqref{eq:LT_torus_equality_gamma}. 
\end{proof}

\section{Thomas--Fermi--von Weizs\"acker energy approximation}
\label{sec:TFW energy}

In this section we prove Theorems~\ref{thm:TFW_line} and~\ref{thm:TFW_torus}. To this end, we combine the lower bounds provided by the gradient-corrected LT inequalities with the following upper bounds.

\begin{lemma}[Centered kinetic-energy upper bound]
\label{lem:kinetic_energy_upper_bound}
Let $\sqrt{\rho} \in \mH^1(\R)$ with $\int_\R \rho = N \in \N$. Then we have
\begin{align}
    T_{\rm L}^\R(\rho) \leq \frac{\pi^2}{3} \left(1- \frac{1}{N^2}\right) \int_\R \rho^3 + G(\rho). \label{eq:kinetic_upper_refined_line}
\end{align}
Moreover, let $\rho \in H^1(\T)$ with $\int_\T \rho=N \in \N$.  Then we have
\begin{align}
    T_{\rm L}^{\T}(\rho)
    \leq \frac{\pi^2}{3} \left(1- \frac{1}{N^2}\right) \int_\T \rho^3 + G(\rho) + \mathbf{1}_{N\,\mathrm{even}} \pi^2 \cS(\rho),
    \label{eq:kinetic_upper_refined_torus}
\end{align}
where $\cS(\rho)$ is the shell correction defined in~\eqref{eq:shell correction}. 
\end{lemma}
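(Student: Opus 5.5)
The plan is to construct explicit trial density matrices in cumulative-mass coordinates and evaluate their kinetic energy exactly with Lemma~\ref{lem:cumulative_kinetic_identity}. Let $a$ be the cumulative density of the normalized profile $\rho/N$, and set $U := U_{\rho/N}$. For any $Q=\sum_j\ketbra{\phi_j}{\phi_j}$ on $\mL^2(0,1)$ with $0\leq Q\leq 1$ and $Q(s,s)=N$, the operator $\gamma:=UQU^\ast$ satisfies $0\leq\gamma\leq1$. Moreover $\rho_\gamma=\sum_j|\phi_j\circ S|^2\,\rho/N=\rho$, so $\gamma$ is admissible in $T_{\rm L}^\Omega(\rho)$.

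I also want the identity $T[\gamma]+3G(\rho)=\sum_j\int_0^1|(a\phi_j)'|^2$, which is~\eqref{eq:kinetic identity} with the normalized profile, exactly as used in~\eqref{eq:first id}. Lemma~\ref{lem:cumulative_kinetic_identity} proves this identity assuming $T[\gamma]<\infty$. Here the logic runs the other way: I have to show $T[\gamma]<\infty$ from $a\phi_j\in\mH^1$. To do so I redo the chain-rule computation of that lemma in reverse on each component $J_\alpha$ of $\{\rho>0\}$ from Proposition~\ref{prop:structure cumulative density}. The orbitals $\psi_j=U\phi_j=\sqrt{\rho/N}\,\phi_j\circ S$ vanish where $\rho$ does, and they are in $\mH^1$ because $|\psi_j|\leq\sqrt\rho$ near the zero set. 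I expect this regularity and boundary check to be the only genuinely delicate step. It should be handled componentwise as in the proof of Proposition~\ref{prop:structure cumulative density}.

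\textbf{Line, and torus with $N$ odd.} Take $Q=\sum_{|k|\leq\frac{N-1}{2}}\ketbra{\ee_k}{\ee_k}$, with $k\in\Z$ for $N$ odd and $k\in\Z+\frac12$ for $N$ even, where $\ee_k(s)=\ee^{2\pi\ii ks}$. These modes are orthonormal with $|\ee_k|=1$, so $Q(s,s)=N$. Since $a$ is real, $|(a\ee_k)'|^2=|a'|^2+4\pi^2k^2a^2$. Summing over $k$ and using $\sum_{|k|\leq (N-1)/2}4\pi^2k^2=\frac{\pi^2}{3}N(N^2-1)$, I get
\[
\sum_k\int_0^1|(a\ee_k)'|^2=N\int_0^1|a'|^2+\frac{\pi^2}{3}N(N^2-1)\int_0^1a^2 .
\]
By~\eqref{eq:local and gradient terms} for $\rho/N$, we have $\int|a'|^2=4G(\rho)/N$ and $\int a^2=N^{-3}\int\rho^3$. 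Hence the right-hand side equals $4G(\rho)+\frac{\pi^2}{3}(1-N^{-2})\int\rho^3$. Subtracting $3G(\rho)$ gives~\eqref{eq:kinetic_upper_refined_line}.

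On $\R$ the functions $a\ee_k$ lie in $\mH^1_0(0,1)$ since $a\in\mH^1_0(0,1)$, so the boundary behaviour of the twist is irrelevant. On $\T$ with $N$ odd, the $\ee_k$ are periodic, so the $\psi_j$ are periodic and $\gamma$ is a genuine state on $\mL^2(\T)$. This gives~\eqref{eq:kinetic_upper_refined_torus} without the shell term.

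\textbf{Torus with $N$ even.} Here the half-integer modes are antiperiodic, so I restore periodicity with a common phase. I set $\phi_k:=\ee_k\,\ee^{-\ii\pi\chi_a}$, with $\chi_a$ from~\eqref{eq:optimal phase}, which is built from the cumulative density $a$ of $\rho/N$. Multiplying by a unimodular function preserves orthonormality and the constraint $Q(s,s)=N$, and it makes every $a\phi_k$ periodic. Now $|(a\phi_k)'|^2=|a'|^2+a^2(2\pi k-\pi\chi_a')^2$. The cross terms are proportional to $k$ and cancel in the symmetric sum, leaving the extra contribution $N\pi^2\int a^2|\chi_a'|^2$. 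By Lemma~\ref{lem:optimal phase} this equals $N\pi^2/\int_0^1a^{-2}$. A change of variables gives $\int_0^1a^{-2}=N\int_\T\rho^{-1}$, so the extra term is exactly $\pi^2\cS(\rho)$, which yields~\eqref{eq:kinetic_upper_refined_torus}.

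If $\rho$ vanishes at some point, the identity $\cS(\rho)=0$ is consistent with the construction. I then use the step phase from the proof of Theorem~\ref{thm:LT_torus}: it jumps by $\pi$ at a zero $s_\ast$ of $a$. It costs no energy, since $a\ee_k$ vanishes at $s_\ast$, and the result stays in $\mH^1$. The main obstacle is the regularity bookkeeping when $\rho$ has zeros, namely justifying the kinetic identity for these trial states; the energy computation itself is routine.
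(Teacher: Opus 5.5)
Your proposal is correct and uses the same trial states as the paper: centered integer or half-integer Fourier modes in cumulative coordinates, the optimal common phase $\pi\chi_a$ (the paper's $\chi_\rho=\pi\cS(\rho)\int_0^x\rho^{-1}$) to restore periodicity on $\T$ for even $N$, and, when $\rho$ has a zero, a cut placed at that zero, which is equivalent to your step phase. The only difference is that you evaluate the energy through the cumulative identity~\eqref{eq:kinetic identity}, whereas the paper differentiates $\psi_k=\sqrt{\rho/N}\,\ee^{2\pi\ii kS}$ directly in physical coordinates; there the regularity issue you flag disappears, because each $\psi_k$ is an $\mH^1$ function times a Lipschitz phase ($S'=\rho/N$ and $\chi_\rho'=\pi\cS(\rho)/\rho$ are bounded), so $T[\gamma]<\infty$ is immediate and the identity can simply be applied in its forward direction.
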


\begin{proof}
The case $G(\rho)=+\infty$ is trivial; assume $G(\rho)<\infty$.
If $N$ is odd, we can choose the $N$ centered periodic Fourier modes in cumulative coordinates as a trial state. Precisely, we pick as a trial state the density matrix
\begin{align*}
    \gamma := \sum_{|j| \leq \frac{N-1}{2}} \ketbra{\psi_j}{\psi_j}, \quad \text{with}\quad \psi_j(x) \coloneqq \sqrt{\frac{\rho(x)}{N}} \ee^{\ii 2\pi j S(x)}, 
\end{align*}
where we recall that
\begin{align*}
    S(x) = \int_{x_0}^x \frac{\rho(y)}{N} \dd y , \quad \text{and $x_0 \equiv 0$ on $\T$ or $x_0 = -\infty$ on $\R$.}
\end{align*}
Then straightforward calculations show that
\begin{align*}
    T_{\rm L}^\R(\rho) \leq T[\gamma] = \frac{1}{N^3} \sum_{|j| \leq \frac{N-1}{2}} 4\pi^2 j^2 \int_\R \rho^3 + G(\rho) = \frac{\pi^2}{3} \left(1-\frac{1}{N^2}\right) \int_\R \rho^3 + G(\rho),
\end{align*}
which is the desired upper bound. 

For $N$ even we treat the real line and torus separately. In the real line case, we simply define $\phi_j$  via the centered anti-periodic modes, i.e., we set again
\begin{align}
    \gamma := \sum_{|j| \leq \frac{N-1}{2}} \ketbra{\psi_j}{\psi_j}, \quad \text{with}\quad \psi_j(x) = \sqrt{\frac{\rho(x)}{N}} \ee^{\ii 2\pi j S(x)}, \label{eq:test gamma anti-periodic}
\end{align}
but with $j\in \frac12 + \Z$. Then straightforward calculations yields again the desired upper bound. On the torus, if $\rho$ has a zero, choose the cut $x_0$ there. The orbitals have zero endpoint traces and are periodic, giving the bound with $\cS(\rho)=0$. Otherwise, multiply them by a common phase $\ee^{\ii\chi_\rho}$ with $\chi_\rho(1)-\chi_\rho(0)=\pi$ to restore periodicity, choosing
\begin{align*}
    \chi_\rho(x) = \pi \cS(\rho) \int_0^x \frac{1}{\rho(y)} \dd y,
\end{align*}
which is the minimizing phase from Lemma~\ref{lem:optimal phase} in physical coordinates. Thus using $\widetilde \gamma = \ee^{\ii \chi_\rho} \gamma \ee^{-\ii \chi_\rho}$ as a trial state, where $\gamma$ is defined as in~\eqref{eq:test gamma anti-periodic}, we find
\begin{align*}
    T_{\rm L}^{\T}(\rho) \leq T[\widetilde \gamma] &= \int_\T \sum_{|j| \leq \frac{N-1}{2}} \left(2\pi j \frac{\rho(x)}{N} + \pi \frac{\cS(\rho)}{\rho(x)}\right)^2 \frac{\rho(x)}{N} \dd x + G(\rho) \\
    &= \sum_{|j| \leq \frac{N-1}{2}} \frac{4\pi^2 j^2}{N^3}  \int_\T \rho(x)^3  + \pi^2 \cS(\rho)^2 \int_\T \frac{1}{\rho(x)} \dd x  + G(\rho) \\
    &= \frac{\pi^2}{3} \left(1-\frac{1}{N^2}\right) \int_\T \rho^3 + \pi^2 \cS(\rho) + G(\rho),
\end{align*}
where we used in the second identity that the mixed term 
\[ \sum_j 4\pi^2 j \cS(\rho)\int_\T \rho(x)/ N^2 \dd x  \]
vanishes because the sum is centered at zero. This is the desired upper bound. 
\end{proof}

\begin{remark}[Upper bound via lower bound trial states]
We emphasize that, in cumulative-mass coordinates, the trial states used in the proof above are (up to a phase twist) the $Q_\pm$ in Lemma~\ref{lem:inequality saturation}. This explains why the upper and lower bound match exactly whenever $\rho$ is a density coming from optimizers of the LT inequality.
\end{remark}

We now complete the proof of Theorems~\ref{thm:TFW_line} and~\ref{thm:TFW_torus}.

\begin{proof}[Proof of Theorems~\ref{thm:TFW_line} and~\ref{thm:TFW_torus}] In the torus, by using the LT inequality~\eqref{eq:LT_torus} and the upper bound in Lemma~\ref{lem:kinetic_energy_upper_bound}, we have
\begin{align}
   &T_{\rm L}^\T(\rho) \leq \frac{\pi^2}{3} \left(1-\frac{1}{N^2}\right) \int_\T \rho^3 + G(\rho) + \mathbf{1}_{N\,\mathrm{even}} \pi^2 \cS(\rho) ,\\
   &T_{\rm L}^\T(\rho) \geq \frac{\pi^2}{3} \left(1+ \frac{2}{N^2}\right) \int_\T \rho^3 - 3 G(\rho) - \mathbf{1}_{N\,\mathrm{odd}} \pi^2 \cS(\rho).
\end{align}
Hence, by subtracting the TFW plus shell correction functional defined in~\eqref{eq:TFWS def} from the right-hand sides, we obtain the desired energy bounds. The real line case is completely analogous.
\end{proof}

We now prove the following result, which together with Theorems~\ref{thm:TFW_line} and~\ref{thm:TFW_torus} prove Corollary~\ref{cor:exact_manifold}. 

\begin{proposition}[Zero deficit on Poincar\'e density manifold] \label{prop:zero deficit}
Let $\sqrt{\rho} \in \mH^1(\R)$ satisfy $\int_\R \rho =1$, then
\begin{align*}
    \delta_P^\R(\rho) = 4 G(\rho) - \pi^2 \int_\R \rho^3 =0, \quad \text{if and only if} \quad \rho \in \mathcal{D}^\R_P,
\end{align*}
where $\mathcal{D}_P^\R$ is the Poincar\'e density manifold defined in~\eqref{eq:density manifold line}. Similarly, for any $\sqrt{\rho} \in \mH^1(\T)$ with $\int_\T \rho = 1$, we have
\begin{align*}
    \delta_P^\T(\rho) = 4 G(\rho) + \pi^2 \cS(\rho) -  \pi^2 \int_\T \rho^3 =0, \quad \text{if and only if} \quad \rho \in \mathcal{D}_P^\T,
\end{align*}
where $\mathcal{D}_P^\T$ is the periodic Poincar\'e density manifold defined in~\eqref{eq:density manifold torus}.
\end{proposition}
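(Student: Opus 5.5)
The plan is to pass to cumulative coordinates and reduce both statements to the equality case of a Poincaré inequality, then invoke Lemma~\ref{lem:inverse cumulative optimizers}.

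\emph{Real line.} By Proposition~\ref{prop:structure cumulative density}, $a\in\mH^1_0(0,1)$ with $\int_0^1 a^2=\int_\R\rho^3$ and $\int_0^1|a'|^2=4G(\rho)$. Hence $\delta_P^\R(\rho)=\int_0^1|a'|^2-\pi^2\int_0^1 a^2$. This is nonnegative by the Dirichlet Poincaré inequality on $(0,1)$, whose first eigenvalue is $\pi^2$ with eigenspace spanned by $\sin(\pi s)$. So $\delta_P^\R(\rho)=0$ if and only if $a(s)=c\sin(\pi s)$. Writing $a^2=\tfrac{c^2}{2}(1-\cos 2\pi s)$ puts it in the form~\eqref{eq:general cumulative optimizer} with $A=B=c^2/2$ and $s_0=1/2$. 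Lemma~\ref{lem:inverse cumulative optimizers}(i) then gives $\rho=\rhor{\mu}$ up to translation. Conversely, that lemma shows every $\rhor{\mu}(\cdot-x_0)$ has cumulative density $\tfrac{\mu}{\pi}\sin(\pi s)$, so the deficit vanishes.

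\emph{Torus, case $\min\rho>0$.} Here $a\in\mH^1(\T)$ is periodic and positive. I form $f_a=a\,\ee^{\ii\pi\chi_a}$ with $\chi_a$ from Lemma~\ref{lem:optimal phase}. Then $f_a(1)=-f_a(0)$ and
\[
\int_0^1|f_a'|^2=\int_0^1|a'|^2+\pi^2\Big(\int_0^1 a^{-2}\Big)^{-1}.
\]
Since the cross term $2\,\mathrm{Re}(\ii\pi\chi_a' a a')$ is purely imaginary before taking the real part, it vanishes. By a change of variables, $\int_0^1 a^{-2}=\int_\T\rho^{-1}$, so $\delta_P^\T(\rho)=\int|f_a'|^2-\pi^2\int|f_a|^2$. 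This is $\ge0$ by the anti-periodic Poincaré inequality: the lowest eigenvalue of $H_\pi$ is $\pi^2$, with eigenspace $\mathrm{span}\{\ee^{\pm\ii\pi s}\}$.

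Equality forces $f_a=p'\ee^{-\ii\pi s}+q'\ee^{\ii\pi s}$ for some $p',q'\in\C$. Then
\[
a^2=|f_a|^2=|p'|^2+|q'|^2+2|p'q'|\cos(2\pi(s-s_0))
\]
for a suitable $s_0$. This is of the form~\eqref{eq:general cumulative optimizer} with $0\le B\le A$, and Lemma~\ref{lem:inverse cumulative optimizers}(ii) yields $\rho=\rhot{m}$.

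\emph{Torus, case where $\rho$ vanishes somewhere.} With the convention $\cS(\rho)=0$, I choose the cut at a zero of $\rho$. Then $a\in\mH^1_0(0,1)$, and the Dirichlet argument shows $\delta_P^\T\ge0$, with equality only if $a=c\sin(\pi s)$. But Lemma~\ref{lem:inverse cumulative optimizers}(ii) excludes $B=A$ on the torus, so equality is impossible here. This is consistent with $\rhot{m}>0$.

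\emph{Converse on the torus.} For $\rho=\rhot{m}$, the cumulative density $a=\tfrac{2K(m)}{\pi}\sqrt{1-m\sin^2(\pi s)}$ satisfies $a^2=A+B\cos(2\pi s)$. Take $f_+$ from~\eqref{eq:f formula}; it is anti-periodic, has $|f_+|=a$, and lies in the $\pi^2$-eigenspace. By the uniqueness in Lemma~\ref{lem:optimal phase}, among anti-periodic $g$ with $|g|=a$ the energy is minimized exactly by $a\,\ee^{\pm\ii\pi\chi_a}$ (the same argument as in~\eqref{eq:claim phase id}). Therefore
\[
\int_0^1|f_a'|^2\le\int_0^1|f_+'|^2=\pi^2\int_0^1 a^2,
\]
so the deficit is $\le0$, hence $=0$.

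\emph{Main obstacle.} The delicate points are the phase bookkeeping in the converse direction and the degenerate case where $\rho$ has zeros; both are handled as above. The remaining steps are routine.
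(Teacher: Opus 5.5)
Your proposal is correct and follows essentially the same route as the paper. You pass to cumulative coordinates, rewrite the deficit as the Poincar\'e deficit of $a$ (line) or of $f_a=a\,\ee^{\ii\pi\chi_a}$ (torus), use the equality case of the Poincar\'e inequality, conclude with Lemma~\ref{lem:inverse cumulative optimizers}, and obtain the torus converse from the minimality of $f_a$ among anti-periodic functions of modulus $a$. The only difference is in the degenerate torus case: you place the cut at a zero of $\rho$ and use the Dirichlet inequality, while the paper flips the sign of $a$ past the zero to get an element of $\mathcal Q_\pi$; both lead to $B=A$, which the lemma excludes on $\T$.
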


\begin{proof} Let us work out the torus case in detail. The real line case is completely analogous. First, note that, in cumulative coordinates, we have
\begin{align}
    \delta_P^\T(\rho) = 4G(\rho) -\pi^2 \int_\T \rho^3 + \frac{\pi^2}{\int_\T \rho^{-1}} &= \int_0^1 (a')^2 - \pi^2 \int_0^1 a^2 + \frac{\pi^2}{\int_0^1 a^{-2}}. \label{eq:Poincare cumulative coordinate}
\end{align}
If $\min a>0$, the right-hand side can be re-written as
\begin{align*}
    \delta_P^{\T}(\rho) &= \int_0^1 |f_a'|^2 - \pi^2 \int_0^1 |f_a|^2, \quad \text{where}\quad 
    f_a(s) = a \ee^{\ii \chi_a}, \quad \chi_a(s) := \frac{\pi}{\int_0^1 a^{-2}} \int_0^s a(t)^{-2} \dd t.
\end{align*}
If $a(s_\ast)=0$, $s_\ast\in[0,1]$, instead set $f_a=a$ for $s\leq s_\ast$ and $f_a=-a$ otherwise. Then $f_a\in\mathcal Q_\pi$, $\int_0^1 a^{-2}=+\infty$, and the same identity holds.
Hence, by the spectral decomposition of the anti-periodic Laplacian in Lemma~\ref{lem:explicit Laplacian},
\begin{align*}
    \delta_P^\T(\rho) = 0 \quad \text{if and only if} \quad f_a \in \mathrm{span} \{ \ee_{1/2}(s) \coloneqq \ee^{\ii \pi s}, \ee_{-1/2}(s) \coloneqq \ee^{-\ii \pi s}\}.
\end{align*}
If $\min a=0$, equality would force $B=A$ in~\eqref{eq:general cumulative optimizer}, excluded on $\T$ by Lemma~\ref{lem:inverse cumulative optimizers}. Otherwise, by Lemma~\ref{lem:optimal phase}, we know that $f_a$ minimizes the kinetic energy among all anti-periodic functions satisfying $|f(s)|^2 = a(s)^2$. Hence, it suffices to show that
\begin{align*}
    a_\rho^2 \in \{ |f|^2 : f\in \mathrm{span} \{\ee_{1/2}, \ee_{-1/2} \} \} \quad \text{if and only if} \quad \rho \in \mathcal{D}_P^\T.
\end{align*}
However, this set is precisely
\begin{align*}
    \{ |f|^2 : f \in \mathrm{span}\{ \ee_{1/2}, \ee_{-1/2} \} \} = \{ A + B \cos(2\pi (s-s_0)) : s_0 \in [0,1), 0\leq B \leq A\},
\end{align*}
and consequently, the result follows from Lemma~\ref{lem:inverse cumulative optimizers}.
\end{proof}

\section{Local kinetic-energy density}
\label{sec:local kinetic energy} 
In this section, our goal is to prove Theorem~\ref{thm:local kinetic-energy density}. For this, we need to set up some additional notation.

Throughout this section, we let $\Omega = \R$ or $\Omega = \T$ and fix a normalized density profile
\[
    \sqrt{\rho} \in H^1(\Omega),
    \qquad 
    \int_\Omega \rho=1.
\]
We let $\gamma=\sum_j\ketbra{\psi_j}{\psi_j}$ be a minimizer\footnote{The existence of such a minimizer follows by the direct method; see \cite{Lieb83} or \cite[Lemma 4.4]{Cor26a}.} of $T_{\rm L}^{\Omega}(N\rho)$ and denote by $\tau_N(x)$ the associated kinetic-energy density
\begin{align*}
    \tau_N(x) = \sum_{j} |\psi_j'(x)|^2. 
\end{align*}
Then the goal of this section is to compare the exact kinetic-energy density with the Thomas--Fermi-von Weizs\"acker plus shell correction energy density
\begin{align}
    \tau_N^{\rm TFW}(x)
    \coloneqq 
    \frac{\pi^2}{3}N^3\rho^3
    +
    N\left(
        \frac13\rho''
        -
        \frac1{12}\frac{|\rho'|^2}{\rho}
    \right)
    +
    \theta_NN
    \frac{\pi^2\cS(\rho)^2}{\rho}, 
    \label{eq:corrected_physical_kinetic_density}
\end{align}
where here we use the convention that $\cS(\rho) = 0$ if $\Omega = \R$ or if $\rho$ vanishes at a point in $\T$, and we set $\cS(\rho) \rho^{-1} := 0$ on $\{ \rho = 0 \}$.

As in the previous sections, it will be convenient to do this in cumulative coordinates and measure the error in terms of the Poincar\'e deficit. To this end, let us define the modified energy density in cumulative coordinates as
\begin{align}
    e_N(s)
    \coloneqq (-\partial_s a Q a \partial_s)(s,s),\quad \mbox{where} \quad Q = U_\rho^\ast \gamma U_\rho
    \label{eq:energy density}
\end{align}
with $U_\rho$ defined in~\eqref{eq:cumulative isometry def}, and $a$ is the density profile in cumulative coordinates. Moreover, we define its Thomas--Fermi-von Weizs\"acker (plus shell correction) counterpart as
\begin{align}
    e_N^{\rm TFW}
    &\coloneqq  
    \frac{\pi^2}{3}N^3a^2
    +
    N|a'|^2
    +
    \frac N3aa''
    +
    N \theta_N \pi^2 \cS(\rho)^2 a^{-2},
    \label{eq:semiclassical energy density}
\end{align}
where we set $\cS(\rho)a^{-1}:=0$ on $\{a=0\}$ and $a''$ is understood in the distributional sense. Note that the multiplication $a a''$ is well-defined in $\mH^{-1}(0,1)$ because $a\in \mH^1(0,1)$ and $\mH^1(0,1)$ is an algebra.

Let us also define the open set
\begin{align}
    I_a \coloneqq \{ s\in (0,1) : a(s) >0\},
\end{align}
and introduce the space $\mathcal{Y}_a$ of all functions $g:I_a \rightarrow \C$ which are bounded, weakly differentiable, and satisfy
\begin{align*}
    \norm{g}_{\mathcal{Y}_a}^2 \coloneq \norm{g}_\infty^2 + \int_{I_a} |a(s)|^2 |g'(s)|^2 \dd s < \infty.
\end{align*}
In addition, if $a(0) \neq 0$, we require $g$ to be periodic. Then, we define $\mathcal{Y}_a^\ast$ as the dual space endowed with the standard dual norm
\begin{align}
    \norm{F}_{\mathcal{Y}_a^\ast} \coloneqq \sup \left\{ |\inner{F,\phi}| : \phi \in \mathcal{Y}_a, \quad  \norm{\phi}_{\mathcal{Y}_a} \leq 1 \right\}.
\end{align}
As shown in the next lemma, these norms are the cumulative-coordinate versions of the $\mathcal{Y}_\rho$ norms introduced in~\eqref{eq:Yrho norm} and~\eqref{eq:dual Yrho topology}. 

\begin{lemma}[Pull-back of energy densities] \label{lem:Ya topology} Let $\mathcal{Y}_a$ be the Banach space introduced above. Then the pull-back map
\[ g \mapsto (S^\# g)\coloneqq g \circ S \]
is an isomorphism from $\mathcal{Y}_a$ to $\mathcal{Y}_\rho$. Moreover, for any $g\in \mathcal{Y}_a$ we have
\begin{align*}
    &\inner{ \tau_N^{\rm TFW}, (S^\# g)}_x = \left\langle e_N^{\rm TFW}-\frac{3N}{4} |a'|^2,g\right\rangle_s, \\
    &\inner{\tau_N, (S^\# g)}_x = \left\langle e_N-\frac{3N}{4} |a'|^2,g\right\rangle_s,
\end{align*}
where $e_N$ and $e_N^{\rm TFW}$ are defined in~\eqref{eq:energy density} and~\eqref{eq:semiclassical energy density}.
\end{lemma}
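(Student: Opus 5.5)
The plan is to verify everything componentwise on the positivity set, using the structure from Proposition~\ref{prop:structure cumulative density}\ref{it:structure cumulative coordinates}. On each $S_\alpha$ the map $X:S_\alpha\to J_\alpha$ is a $C^1$-diffeomorphism with $X'=1/a$. Together with $a=\rho\circ X$ and $a'=\rho'\circ X/a$ from \eqref{eq:a gradient}, this gives the three change-of-variables rules used throughout:
\[
\dd x=\frac{\dd s}{a(s)},\qquad \rho'(x)=(aa')(S(x)),\qquad (g\circ S)'(x)=\rho(x)\,g'(S(x))=(a g')(S(x)).
\]

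\textbf{The isomorphism.} Since $S$ maps $U_\rho\setminus\{x_0\}$ bijectively onto $I_a$ (the sets $J_\alpha$ and $S_\alpha$ correspond), we get $\sup_{U_\rho}|g\circ S|=\sup_{I_a}|g|$. The weighted term transforms as
\[
\int_{U_\rho}\rho\,|(g\circ S)'|^2\,\dd x=\int_{I_a} a\,a^2|g'|^2\,\frac{\dd s}{a}=\int_{I_a}a^2|g'|^2\,\dd s .
\]
Hence $S^\#$ is in fact an isometry. Its inverse is $\phi\mapsto\phi\circ X$, and the same computation read backwards shows this inverse is bounded. On $\T$ with $a(0)\neq0$, continuity of $g\circ S$ across $x_0$ is exactly the periodicity condition imposed in $\mathcal{Y}_a$, so the two spaces match there as well.

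\textbf{Identity for $\tau_N$.} I reuse the pointwise identity from the proof of Lemma~\ref{lem:cumulative_kinetic_identity}, with the orbitals $\psi_j$ of $\gamma$ (density $N\rho$) and $\phi_j=U_\rho^\ast\psi_j$:
\[
\sum_j|(a\phi_j)'(s)|^2=\frac{3N}{4}|a'(s)|^2+\frac{1}{a(s)}\,\tau_N(X(s))\quad\text{on }I_a .
\]
The left-hand side is exactly $e_N(s)$. I multiply by a bounded $g$, integrate over $I_a$, and use $\dd s/a=\dd x$, which gives $\langle\tau_N,g\circ S\rangle_x=\langle e_N-\tfrac{3N}{4}|a'|^2,g\rangle_s$. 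Exchanging sum and integral is justified because $g$ is bounded and $\sum_j\int|(a\phi_j)'|^2<\infty$. On $(0,1)\setminus I_a$, which corresponds to $\{\rho=0\}$, both sides contribute nothing.

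\textbf{Identity for $\tau_N^{\rm TFW}$.} I check this term by term.
\begin{itemize}
\item Local term: $\rho^3\,\dd x=a^2\,\dd s$.
\item Shell term: $\cS^2\rho^{-1}\,\dd x=\cS^2a^{-2}\,\dd s$.
\item von Weizs\"acker term: $\frac{|\rho'|^2}{\rho}\,\dd x=a^{-1}a^2|a'|^2\,\frac{\dd s}{a}=|a'|^2\,\dd s$.
\item Distributional term: by the rules above,
\[
\langle\rho'',g\circ S\rangle=-\int\rho'(g\circ S)'\,\dd x=-\int_{I_a} a a'g'\,\dd s ,
\]
while $\langle aa'',g\rangle=-\int a'(ag)'=-\int|a'|^2g-\int aa'g'$. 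Hence $\frac N3\langle\rho'',g\circ S\rangle=\frac N3\langle aa''+|a'|^2,g\rangle$.
\end{itemize}
Collecting the gradient coefficients gives $\frac N3-\frac N{12}=\frac N4=N-\frac{3N}4$. This matches $e_N^{\rm TFW}-\frac{3N}{4}|a'|^2$.

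The main obstacle is justifying these integrations by parts when $a$ has zeros or $g$ has no boundary regularity. I would handle this by working on each $S_\alpha$ separately, noting that the boundary terms $aa'g$ and $\rho'(g\circ S)$ vanish there. To make this rigorous, I would first take $g$ smooth and compactly supported in $I_a$ (or periodic), and then pass to general $g\in\mathcal{Y}_a$ by truncation/approximation. The approximation is controlled because $|\langle aa',g'\rangle|\le\|a'\|_2\,\|ag'\|_2$, which is bounded by the $\mathcal{Y}_a$ norm.
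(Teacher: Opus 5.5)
Your proposal is correct and follows the paper's proof essentially step for step. The isometry comes from the componentwise change of variables, with periodicity of $g$ handling the cut point when $a(0)\neq0$. The $\tau_N$ identity is the pointwise relation $e_N=a^{-1}\,\tau_N\circ X+\frac{3N}{4}|a'|^2$ from Lemma~\ref{lem:cumulative_kinetic_identity}. The TFW identity rests on $\langle\rho'',g\circ S\rangle_x=\langle aa'',g\rangle_s+\int_{I_a}|a'|^2g$, where $\langle aa'',g\rangle_s:=\langle a'',ag\rangle_s$ and $ag$ is extended by zero to an $\mH^1$ function.

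Your closing approximation paragraph is not needed. As in the paper, both pairings are simply given by the integrals $-\int\rho'(g\circ S)'\,\dd x$ and $-\int a'(ag)'\,\dd s$, so no integration by parts has to be justified. Also, $C_c^\infty(I_a)$ is not dense in $\mathcal{Y}_a$ in norm because of the sup-norm term (for example $g\equiv1$). So continuity in the $\mathcal{Y}_a$-norm alone would not have closed that step in any case.
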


\begin{proof} As $S$ is a $C^1$-diffeomorphism from $\Omega_\rho(x_0) \coloneqq \{x \in \Omega\setminus \{x_0\} : \rho(x) > 0 \}$ to $I_a$ by Proposition~\ref{prop:structure cumulative density}~\ref{it:structure cumulative coordinates}, it follows at once from the change of variables formula and the chain rule that $S^\#$ maps $\mathcal{Y}_a$ surjectively to the space of bounded and weakly differentiable functions in $\Omega_\rho(x_0)$ with equalities
\begin{align*} &\sup\{ |g(S(x))| : x\in \Omega_\rho(x_0)\} = \sup \{ |g(s)| : s\in I_a\} \\
\intertext{and}
&\int_{\Omega_\rho(x_0)} \rho(x) |\partial_x (g \circ S)(x)|^2 \dd x= \int_{I_a} a(s)^2 |\partial_s g(s)|^2 \dd s. \end{align*}
Since $x_0 = -\infty$ in the case $\Omega = \R$, this proves the first statement in this case. In the case $\Omega= \T$ with $a(0) = 0$, we have $\Omega_\rho(x_0) = \{x\in \T: \rho(x) > 0 \}$ so the result is also proved. If $a(0) \neq 0$, the periodicity of $g$ implies that $g \circ S$ has the same limit when $x$ approaches $x_0$ from the left and from the right, and therefore $g\circ S$ is also weakly differentiable in $\Omega_\rho(x_0) \cup \{x_0\} = \{ \rho >0 \}$. This completes the proof of the first statement. 

For $g\in\mathcal Y_a$, extend $ag$ by zero on $\{a=0\}$. This belongs to $H^1$, since $|ag|\leq\|g\|_\infty a$ and $(ag)'=a'g+ag'\in L^2$. Define
\begin{align*}
    \langle aa'',g\rangle_s&:=\langle a'',ag\rangle_s=-\int_{I_a}(|a'|^2g+aa'g'),
\end{align*}
which is continuous on $\mathcal Y_a$ by Cauchy--Schwarz. Change of variables then gives
\begin{align*}
    \langle\rho'',g\circ S\rangle_x=-\int_{I_a}aa'g'=\langle aa'',g\rangle_s+\int_{I_a}|a'|^2g,
\end{align*}
yielding the TFW identity.
For the exact energy densities, one can similarly use the identity
\begin{align*}
    e_N &= \sum_j |\partial_s (a^{\frac12} \psi_j \circ X)|^2 = \sum_j \left(\frac{1}{a} |\partial_x \psi_j \circ X|^2 + \frac{a'}{a} \mathrm{Re} \,\overline{\partial_x \psi_j} \psi_j\circ X + \frac14\frac{|a'|^2}{a} |\psi_j|^2 \circ X \right)\\
    &= \frac{1}{a} \tau_N\circ X + \frac{3N}{4} |a'|^2,
\end{align*}
which follows because $\sum_j |\psi_j|^2 = N \rho$ and $a' = \rho'/\rho \circ X$, and apply again the change of variables along the connected components. 
\end{proof}

We now prove an upper bound on the $\mathcal{Y}_a^\ast$-error between the two modified energy densities. This result can be seen as a local Weyl estimate, on which the density rather than the external potential is prescribed.

\begin{lemma}[Density-constrained local Weyl estimate]
\label{lem:weak_local_Fermi}
Let $e_N$ and $e_N^{\rm TFW}$ be defined as in~\eqref{eq:energy density} and~\eqref{eq:semiclassical energy density}.  Then
\begin{align}
    \norm{e_N-e_N^{\rm TFW}}_{\mathcal{Y}_a^\ast}
    \leq
    N\left(\sqrt{\delta_{\rm P}^\Omega(\rho)}
        + \delta_{\rm P}^\Omega(\rho)
        +
        \frac23
        \norm{\mathcal R_{\rm P}^\Omega[\rho]}_{\mathcal{Y}_a^\ast} \right),
    \label{eq:weak_local_Fermi}
\end{align}
where $\mathcal{R}_P^\Omega[\rho]$ is the local Poincar\'e residual
\begin{align}
    \mathcal R_{\rm P}^\Omega[\rho]
    \coloneqq \begin{dcases} 
    \pi^2 \cS(\rho)^2 a^{-2}-\pi^2a^2-aa'',\quad &\mbox{for $\Omega =\T$, } \\
    -\pi^2a^2-aa'', \quad &\mbox{for $\Omega = \R$.} \end{dcases}
    \label{eq:Poincare_residual}
\end{align}
\end{lemma}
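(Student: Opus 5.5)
The approach is to use the freedom in the weight $w$ in Lemma~\ref{lem:twisted_filling}. Fix a real test function $r\in\mathcal{Y}_a$ with $\norm{r}_{\mathcal{Y}_a}\leq 1$; complex $r$ are handled by splitting into real and imaginary parts, which costs at most a constant that I expect can be absorbed by treating $\mathrm{Re}\,\inner{\cdot,\ee^{\ii\vartheta} r}$ and taking the supremum over $\vartheta$. For $|t|\leq 1/2$ set $w_t:=a\sqrt{1+tr}$, which is bounded and nonnegative. Then $w_t\phi_j$ lies in the same twisted form domain as $a\phi_j$: on $\T$ one multiplies by the optimal phase $\ee^{\pm\ii\beta_N\chi_a}$ exactly as in the proof of Theorem~\ref{thm:LT_torus}, using that $r$ is periodic when $a(0)\neq0$, and on $\R$ the function vanishes at the endpoints. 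Applying \eqref{eq:weighted_twisted_filling} with the favorable twist $\beta_N$, together with the phase-averaging identity \eqref{eq:averaging id} in the torus case, gives a lower bound of the form
\[ \sum_j\int|(w_t\phi_j)'|^2 + (\text{phase terms}) \geq c_N N\int a^2(1+tr) - \mathbf 1_{\T}\,\mathbf{1}_{N\,\mathrm{odd}}\,\pi^2 N\,\tfrac{\cdots}{\int a^{-2}(1+tr)^{-1}}. \]
I would expand the left-hand side in $t$. Writing $(w_t\phi_j)'=\sqrt{1+tr}\,(a\phi_j)'+\frac{t r'}{2\sqrt{1+tr}}a\phi_j$, its zeroth order term is $\Tr(-\partial_s aQa\partial_s)$. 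Its first order term is $t\inner{e_N,r}$ plus a cross term $\frac t2\sum_j\int r'\,2\mathrm{Re}\,\overline{(a\phi_j)'}a\phi_j = \frac t2\int r'(Na^2)'$, which is explicit because $Q(s,s)=N$. The remaining terms are of order $t^2N\int a^2|r'|^2\leq t^2N$.

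For the upper side I would use Lemma~\ref{lem:kinetic_energy_upper_bound} through the identity \eqref{eq:kinetic identity}. It gives that $\Tr(-\partial_s aQa\partial_s)$ exceeds the $t=0$ lower bound by at most $N\delta_P(\rho)$, since the gap between upper and lower bounds in Theorems~\ref{thm:TFW_line}--\ref{thm:TFW_torus} is exactly $N\delta_P$. Subtracting, I get for $t=\pm|t|$
\[ \pm|t|\,\inner{e_N-\widetilde e,\,r} \geq -N\delta_P - C t^2 N, \]
where $\widetilde e$ is the first-order Taylor coefficient of the lower bound in $t$. This coefficient is $c_NNa^2$ plus the shell derivative $\pi^2\cS^2a^{-2}$ (times $N$), minus the explicit cross term. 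Optimizing $|t|=\min(1/2,\sqrt{\delta_P})$ gives $|\inner{e_N-\widetilde e,r}|\leq N(\sqrt{\delta_P}+\delta_P)$, up to constants that I expect to come out as stated. The cap $|t|\leq1/2$ is where the linear $\delta_P$ term enters.

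The remaining step is algebraic: I would compare $\widetilde e$ with $e_N^{\rm TFW}$. Integrating the cross term by parts gives $-\frac N2 (a^2)'' = -N(|a'|^2+aa'')$. Combined with $c_N N a^2=\frac{\pi^2}3N^3a^2+\frac{2\pi^2}{3}Na^2$ and the shell term, the difference $\widetilde e - e_N^{\rm TFW}$ should reduce to exactly $\frac{2N}{3}\mathcal R_P^\Omega[\rho]$, modulo the bookkeeping of $\theta_N$ and of the $-\frac{2}{3}$ versus $\frac13$ coefficients for the two parities. This produces the residual term in \eqref{eq:weak_local_Fermi}.

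I expect the main obstacle to be the torus shell term. The phase $\chi$ that is optimal for $w_t$ depends on $t$, so the $t$-derivative of $\pi^2/\int a^{-2}(1+tr)^{-1}$ has to be computed and shown to give $\pi^2\cS^2a^{-2}$ (suitably normalized) with an $O(t^2)$ remainder controlled by $\norm{r}_\infty$. Verifying the parity-dependent constants so that the residual coefficient is exactly $\frac23$ is also delicate. A secondary point is justifying the expansion when $a$ vanishes somewhere; there I would use the convention $\cS=0$ and work componentwise on $I_a$.
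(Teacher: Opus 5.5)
Your route is essentially the paper's. The paper phrases it as a weak comparison $\inner{e_N-\ell_N,w^2}\geq -N\int_0^1 a^2|w'|^2$ for real $w\in\mathcal Y_a$, with lower model $\ell_N=\frac{\pi^2}{3}(N^2+2)Na^2+N(aa')'-N\mathbf 1_{N\,\mathrm{odd}}\pi^2\cS(\rho)^2a^{-2}$, and tests it at $w=\sqrt{1\pm tr}$. That is your weight $a\sqrt{1\pm tr}$ together with:
your exact expansion via $Q(s,s)=N$;
the bound $\inner{e_N-\ell_N,1}\leq N\delta_P$ from Lemma~\ref{lem:kinetic_energy_upper_bound};
and the identity $e_N^{\rm TFW}-\ell_N=\frac{2N}{3}\mathcal R_P^\Omega[\rho]$.
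Your parity bookkeeping closes, since $\theta_N+\mathbf 1_{N\,\mathrm{odd}}=\frac23$ for both parities. Your reduction of complex $r$ to real ones costs nothing, because $e_N-e_N^{\rm TFW}$ is real and $\norm{\Re(\ee^{-\ii\vartheta}r)}_{\mathcal Y_a}\leq\norm{r}_{\mathcal Y_a}$.

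Two steps, as you set them up, do not give the stated constants.

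\emph{Torus shell term.} The obstacle you anticipate disappears if you freeze the phase at its $t=0$ value $\chi(s)=\pi\cS(\rho)\int_0^s a^{-2}$ for all $t$. The averaging identity then subtracts $N\pi^2\cS(\rho)^2\int_0^1 a^{-2}(1+tr)$, which is affine in $t$ and gives the $a^{-2}$ term of $\ell_N$ with no remainder. Your $t$-optimal phase can only improve on this admissible choice, so its remainder has the favourable sign and can simply be dropped. Estimating it instead as $O(t^2)$ via $\norm{r}_\infty$ would introduce $\cS$-dependent constants that are absent from~\eqref{eq:weak_local_Fermi}.

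\emph{Choice of $t$.} Keep the sharp bound $\frac{t^2N}{4}\int_0^1\frac{a^2|r'|^2}{1\pm tr}\leq\frac{t^2N}{4(1-t)}$ and minimize $\frac{\delta_P}{t}+\frac{t}{4(1-t)}$ over all $t\in(0,1)$. The minimizer $t_\ast=\frac{2\sqrt{\delta_P}}{1+2\sqrt{\delta_P}}$ gives exactly $\sqrt{\delta_P}+\delta_P$. The linear term comes from the factor $(1-t)^{-1}$ as $t_\ast\to1$ for large deficits, not from a cap. Even with the sharp bound, your choice $t=\min\{\frac12,\sqrt{\delta_P}\}$ yields $\sqrt{\delta_P}\bigl(1+\frac{1}{4(1-\sqrt{\delta_P})}\bigr)$, respectively $2\delta_P+\frac14$. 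Both are strictly larger than $\sqrt{\delta_P}+\delta_P$ whenever $\delta_P\notin\{0,\frac14\}$.

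\emph{Zeros of $a$.} If $a$ vanishes at some $s_\ast$, do not apply the filling inequality componentwise: it is a statement on all of $(0,1)$ with the density constraint there. Instead flip the sign of $wa\phi_j$ across $s_\ast$. This places it in $\mathcal Q_\pi$ without changing $|(wa\phi_j)'|$.
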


\begin{proof}
Let us define the lower local model $\ell_N$ as
\begin{align}
    \ell_N \coloneqq \frac{\pi^2}{3} (N^2+2) N a^2+N(aa')' - N \mathbf{1}_{N \text{ odd}} \pi^2 \cS(\rho)^2 a^{-2},
    \label{eq:local_lower_model}
\end{align}
where, we recall the convention that $\cS(\rho) =0$ if $\Omega = \R$ or if $\rho$ vanishes at a point. Then it suffices to prove that
\begin{align}
    \norm{e_N-\ell_N}_{\mathcal{Y}_a^\ast}
    \leq
    N\left(\sqrt{\delta_{\rm P}(\rho)} +\delta_{\rm P}(\rho) \right).
    \label{eq:lower_model_local_estimate}
\end{align}
Indeed, assuming the above inequality holds, then the proof follows from the triangle inequality and the identity
\begin{align*}
    e_N^{\rm TFW} - \ell_N = \frac{2}{3} N \mathcal{R}_P^\Omega[\rho]. 
\end{align*}

To prove~\eqref{eq:lower_model_local_estimate}, we first show that
\begin{align}
    \inner{e_N- \ell_N,|w|^2} \geq - N \int_0^1 |w'|^2 a^2, \quad \mbox{for any real-valued $w\in \mathcal{Y}_a$.}  \label{eq:weak comparison}
\end{align}
For this, we recall from Lemma~\ref{lem:cumulative_kinetic_identity} that
\[ (-\partial_s a Q a \partial_s)(s,s) = \sum_j |(a \phi_j)'|(s)^2, \]
where $\phi_j$ satisfy $a\phi_j \in \mH^1(0,1)$. We now make two observations. First, for any $w\in \mathcal{Y}_a$, the function $w a \phi_j \in \mH^1_0(0,1)$ if $a\in \mH^1_0(0,1)$ and $w a \phi_j \in \mH^1(\T)$ if $a\in \mH^1(\T)$. Indeed, $w a \phi_j \in \mH^1$ by the product rule and a standard approximation argument, while $w a \phi_j$ vanishes whenever $a = 0$ because $|w a \phi_j| \leq |w| |a| N$ and $w$ is bounded. The periodicity follows once $a$ is periodic because we assumed $w$ to be periodic in this case. Second, from the identities 
\begin{align*}
    \sum_{j} |\phi_j(s)|^2 = N \quad \text{and} \quad 2 \sum_{j} \mathrm{Re} \, \overline{\phi_j'}(s) \phi_j(s) = \partial_s N = 0. 
\end{align*}
we obtain the pointwise (a.e.) identity
\begin{align}
    \sum_j \mathrm{Re} \, \overline{ (a w \phi_j)'} w' (a \phi_j) &= N \mathrm{Re}\, \overline{ (a w)'} w' a - 2 a^2 \sum_j \mathrm{Im}(\overline w w') \mathrm{Im} (\overline \phi_j' \phi_j)\nonumber  \\
    &=N |w' a|^2 + N  \mathrm{Re} \,\overline{w a'} w' a -2a^2 \sum_j \mathrm{Im}(\overline w w') \mathrm{Im} (\overline \phi_j' \phi_j), \label{eq:middle term id}
\end{align}
for every $w\in \mathcal{Y}_a$. Therefore, for real-valued $w\in \mathcal{Y}_a$,
\begin{align}
    \int_0^1 e_N|w|^2   &= \sum_{j} \int_0^1 |(a w \phi_j)' - w'(a\phi_j)|^2\nonumber \\
    &= \int_0^1 \sum_j |(a w \phi_j)'|^2  - N \int_0^1 |w' a|^2 - N \int_0^1 2 \mathrm{Re} \, w'a \overline{wa'}. \label{eq:useful id}
\end{align}
Therefore, in the case $\Omega = \R$, we can apply the weighted inequality in Lemma~\ref{lem:twisted_filling} with periodic boundary conditions for even $N$ and anti-periodic boundary conditions for $N$ odd to the first term in~\eqref{eq:useful id} to obtain~\eqref{eq:weak comparison}. 

In the case $\Omega = \T$ and $N$ even, we can also directly use the periodic weighted inequality in~\eqref{eq:useful id} to obtain~\eqref{eq:weak comparison}. For $N$ odd and $\min a>0$, we do the same averaging trick from previous proofs, i.e., we set
\[ w_{\pm} \coloneqq w \ee^{\pm \ii \chi} \quad \text{where}\quad \chi(s) = \pi \cS(\rho) \int_0^s a^{-2}(t) \dd t, \] 
and then use~\eqref{eq:useful id} (plus the imaginary term from~\eqref{eq:middle term id}) with $w_\pm$ instead of $w$, apply the anti-periodic weighted inequality~\eqref{eq:weighted_twisted_filling}, and average the resulting inequalities so that the current and imaginary terms cancel out. This yields again~\eqref{eq:weak comparison}. Note that, at this step, it is important that we use only periodic test functions $w$ in the case $\Omega = \T$.
If $a(s_\ast)=0$, $s_\ast\in[0,1]$, instead apply the weighted inequality to $h_j(s)=(-1)^{\mathbf1_{\{s>s_\ast\}}}w(s)a(s)\phi_j(s)$. Since $wa\phi_j$ vanishes at $s_\ast$, $h_j\in\mathcal Q_\pi$ and $|h_j'|=|(wa\phi_j)'|$ a.e., giving~\eqref{eq:weak comparison} with $\cS(\rho)=0$.

The idea now is to test~\eqref{eq:weak comparison} against $w_{\pm} = \sqrt{1\pm r}$ to obtain the desired estimate~\eqref{eq:lower_model_local_estimate}. To do this, first note that, since
\begin{align*}
    &\inner{\ell_N, 1} = \frac{\pi^2}{3} (N^2 + 2) N \int_\Omega \rho^3 - N\mathbf{1}_{N \text{ odd}} \pi^2 \cS(\rho),
    \intertext{and}
    &\inner{e_N, 1} = T_{\rm L}^{\Omega}(N \rho) + 3N G(\rho),  
\end{align*}
the upper bound in Lemma~\ref{lem:kinetic_energy_upper_bound} yields
\begin{align}
    E_N \coloneqq \inner{e_N- \ell_N, 1} \leq 4 N G(\rho) - N \pi^2 \int_\Omega \rho^3 + N \pi^2 \cS(\rho) = N \delta_P(\rho). \label{eq:energy estimate}
\end{align}

Let now $r\in \mathcal{Y}_a$ be real-valued with $\norm{r}_{\mathcal{Y}_a}\leq 1$. Then for any $0<t<1$, the function 
\[ w_{\pm} \coloneqq \sqrt{1\pm t r}\] 
belongs to $\mathcal{Y}_a$. Hence, we can apply estimate~\eqref{eq:weak comparison} to $w_\pm$ to obtain
\[
   E_N \pm t \inner{e_N-\ell_N, r} \geq - \frac{N}{4} t^2\int_0^1 \frac{|r' a|^2}{1\pm t r}, 
\]
which together with~\eqref{eq:energy estimate} gives
\begin{align*}
    \left| \inner{e_N-\ell_N,r} \right|
    &\leq \frac{E_N}{t} + \frac{N}{4}
    \frac{t}{1-t} \leq
    N\left[ \frac{\delta_{\rm P}(\rho)}{t} +\frac{t}{4(1-t)}
    \right].
\end{align*}
For $\delta_{\rm P}(\rho)>0$, the right-hand side is minimized at
\[
    t_\ast
    =
    \frac{2\sqrt{\delta_{\rm P}(\rho)}}
    {1
     +2\sqrt{\delta_{\rm P}(\rho)}},
\]
which yields~\eqref{eq:lower_model_local_estimate}. If $\delta_{\rm P}(\rho)=0$, the same conclusion follows by letting
$t\downarrow0$.
\end{proof}

\begin{remark}[Extension to approximate minimizers]
Note that the only place where we used that $e_{N}$ is the exact minimizer was to apply the upper bound in Lemma~\ref{lem:kinetic_energy_upper_bound}. Hence, the same result holds if we assume only this upper bound, which justifies the approximate minimizer version of Theorem~\ref{thm:local kinetic-energy density} in Remark~\ref{rem:approximate minimizers}.
\end{remark}

We now show that the Poincar\'e residual can be controlled by the Poincar\'e deficit.

\begin{lemma}[Poincar\'e residual bound via Poincar\'e deficit] \label{lem:poincare residual bound}
Let $\Omega = \R$ or $\Omega = \T$ and $\sqrt{\rho} \in \mH^1(\Omega)$ be a normalized density profile. Then the Poincar\'e residual $\mathcal{R}_P^\Omega[\rho]$ defined in~\eqref{eq:Poincare_residual} satisfies
\begin{align}
    \norm{\mathcal{R}^\Omega_P[\rho]}_{\mathcal{Y}_a^{\ast}} \leq \left(1 + 2 \sqrt{G(\rho)} + \pi \sqrt{\cS(\rho)}\right) \sqrt{\delta_P^\Omega(\rho)}, \label{eq:Poincare residual bound}
\end{align}
with our convention that $\cS(\rho) = 0$ when $\Omega = \R$ or when $\rho$ vanishes at a point.
\end{lemma}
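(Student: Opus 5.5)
The plan is to rewrite the residual as a quadratic form evaluated at the optimally phased function $f_a$ from Remark~\ref{rem:anti-periodic Poincare deficit}. This form vanishes on the lowest eigenspace of the relevant Laplacian, so everything reduces to controlling the component of $f_a$ orthogonal to that eigenspace, whose size is governed by the deficit.

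\emph{Setup.} Fix a real-valued $g\in\mathcal{Y}_a$ with $\norm{g}_{\mathcal{Y}_a}\leq 1$; the complex case follows by splitting into real and imaginary parts, at the cost of at most a harmless constant that one can try to avoid. On $\T$ with $\min a>0$, set
\[ f:=f_a=a\,\ee^{\ii\chi_a}, \qquad \chi_a' =\pi\cS(\rho)a^{-2}, \]
as in the proof of Proposition~\ref{prop:zero deficit}. Then $f\in\mathcal Q_\pi$, $|f|=a$, $\mathrm{Re}\,\overline f f'=aa'$ and $|f'|^2=|a'|^2+\pi^2\cS(\rho)^2a^{-2}$. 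If $a$ vanishes somewhere on $\T$, use the sign-flipped $f_a$ and $\cS=0$. On $\R$, take $f=a\in\mH^1_0$ and the Dirichlet Laplacian.

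Testing the residual with $g$, with $aa''$ interpreted as in the proof of Lemma~\ref{lem:Ya topology}, gives
\[ \inner{\mathcal R_P^\Omega[\rho],g}= B(f,f),\qquad B(u,v):=\mathrm{Re}\int_0^1\big(\overline{u'}v'-\pi^2\overline u v\big)g+\tfrac12\,\mathrm{Re}\int_0^1\big(\overline u v'+\overline{v}u'\big)g'. \]

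\emph{Key step: $B$ vanishes on the ground space.} Decompose $f=h+r$, where $h$ is the orthogonal projection onto $\mathrm{span}\{\ee_{1/2},\ee_{-1/2}\}$ (on $\R$: onto $\sin(\pi s)$). For such $h$ we have $-h''=\pi^2h$. Integrating by parts, the boundary terms cancel since $\overline h h' g$ is periodic (on $\R$, $h$ vanishes at the endpoints). This yields $\int|h'|^2g+\mathrm{Re}\int\overline h h'g'=\pi^2\int|h|^2g$, i.e.\ $B(h,h)=0$. Therefore
\[ B(f,f)=2B(f,r)-B(r,r). \]

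\emph{Estimates.} By the spectral gap, the next eigenvalue is $9\pi^2$ in both settings, and
\[ \delta_P=\int_0^1\big(|f'|^2-\pi^2|f|^2\big)=\int_0^1\big(|r'|^2-\pi^2|r|^2\big). \]
Hence $\norm{r'}_2^2\leq\tfrac98\delta_P$ and $\pi^2\norm{r}_2^2\leq\tfrac18\delta_P$. For the cross term, use $|g|\leq1$ and $(\int a^2|g'|^2)^{1/2}\leq 1$:
\[ |B(f,r)|\leq \norm{f'}_2\norm{r'}_2+\pi^2\norm{f}_2\norm{r}_2+\tfrac12\Big(\int a^2|g'|^2\Big)^{1/2}\norm{r'}_2+\tfrac12\int|r||f'||g'|. \]
Here $\norm{f'}_2^2=4G(\rho)+\pi^2\cS(\rho)$ and $\norm{f}_2\leq \pi^{-1}\norm{f'}_2$, since $f$ is anti-periodic. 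Together with $\sqrt{4G+\pi^2\cS}\leq 2\sqrt G+\pi\sqrt{\cS}$, this produces the $2\sqrt G+\pi\sqrt{\cS}$ part of the constant, while the $a|g'|$ term produces the leading $1$.

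\emph{Main obstacle.} Two terms are problematic, because $g'$ is only controlled against the weight $a$: the term $\int|r||f'||g'|$ and the quadratic $B(r,r)$. For the first, I would reverse the roles in the integration by parts, writing the $g'$-terms as $\mathrm{Re}\int \overline{f}\,r'g'$ plus a total derivative, so that only $|f|=a$ multiplies $g'$. The same device handles $B(r,r)$, whose $g'$ part I would bound using $|r|\leq a+|h|$. Since $B(r,r)=O(\delta_P)$, it is in any case of lower order and can presumably be absorbed, possibly after an interpolation argument. This bookkeeping is where I expect to spend the effort needed to reach the exact constant stated in Lemma~\ref{lem:poincare residual bound}.
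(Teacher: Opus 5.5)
\paragraph{There is a genuine gap.} Your decomposition does not close, and it fails exactly at the step you postpone as the ``main obstacle''.

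\textbf{The proposed remedy is not available.} Since $\mathrm{Re}(\overline f r'+\overline r f')=\mathrm{Re}(\overline f r)'$, the ``total derivative'' you want to discard is multiplied by $g'$. It can only be removed by integrating by parts onto $g'$, i.e.\ by using $g''$, which $\mathcal Y_a$ does not control.

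\textbf{The bound $|r|\leq a+|h|$ does not help either,} because the ground-state component $h$ is not dominated by $a$. If $\min a>0$, you are left with $\norm{g'}_2\leq(\min a)^{-1}\norm{ag'}_2$, which brings in a dependence on $\min a$ that is absent from the stated constant. If $a$ has an interior zero $s_\ast$, the term $\int|h||f'||g'|$ need not even be finite; take $g=\sin(\log\log(1/|s-s_\ast|))$ near $s_\ast$.

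\textbf{Your key step $B(h,h)=0$ is false in that case.} A function $g\in\mathcal Y_a$ is only defined on $I_a$ and may jump at $s_\ast$. For instance, take $g$ periodic, $g\equiv1$ on $(s_\ast,1)$, and $g\equiv0$ on a left neighbourhood of $s_\ast$. Your integration by parts then leaves the boundary term $B(h,h)=-\tfrac12(|h|^2)'(s_\ast)$. This is generically non-zero, because $h$ does not vanish where $a$ does.

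\textbf{The terms you do control already overshoot the target.} The bound on $2|B(f,r)|$ alone contributes $2\bigl(\sqrt{9/8}+\sqrt{1/8}\bigr)\norm{f'}_2\sqrt{\delta_P}=2\sqrt2\,\norm{f'}_2\sqrt{\delta_P}$. So this symmetric splitting cannot reach the stated constant. Also, with your Dirichlet choice on $\R$, the next eigenvalue is $4\pi^2$, not $9\pi^2$.

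\paragraph{The missing idea: do not symmetrize.} Your $B(f,f)$ equals $\mathrm{Re}\inner{gf,L_\pi f}$, where $L_\pi=-\partial_s^2-\pi^2$ is a nonnegative form on $\mathcal Q_\pi$. For $\Omega=\R$ take it on $\mH^1_0(0,1)$ with $f=a$. The function $gf$ lies in this form domain, because $|gf|\leq\norm{g}_\infty a$ and $|(gf)'|\leq a|g'|+\norm{g}_\infty|f'|$. Here $g'$ only ever multiplies $f$, whose modulus is $a$. The Cauchy--Schwarz inequality for a nonnegative form then gives directly
\[ |\inner{\mathcal R_P^\Omega[\rho],g}|\leq\inner{gf,L_\pi(gf)}^{1/2}\inner{f,L_\pi f}^{1/2}\leq\bigl(\norm{ag'}_2+\norm{g}_\infty\norm{f'}_2\bigr)\sqrt{\delta_P^\Omega(\rho)}. \]
Since $\norm{f'}_2^2=4G(\rho)+\pi^2\cS(\rho)\leq\bigl(2\sqrt{G(\rho)}+\pi\sqrt{\cS(\rho)}\bigr)^2$, this yields exactly $1+2\sqrt{G(\rho)}+\pi\sqrt{\cS(\rho)}$ for $\norm{g}_{\mathcal Y_a}\leq1$. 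No spectral projection or gap is needed; this is the paper's proof.

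Within your framework the fix reads $B(f,f)=\mathrm{Re}\inner{gf,L_\pi r}$, using $L_\pi h=0$ in the form sense on $\mathcal Q_\pi$. Then apply the same Cauchy--Schwarz with $\inner{r,L_\pi r}=\delta_P$.

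Restricting to real $g$ costs nothing. The residual $\mathcal R_P^\Omega[\rho]$ is real, so $|\inner{\mathcal R_P^\Omega[\rho],g}|=\inner{\mathcal R_P^\Omega[\rho],\mathrm{Re}(\ee^{\ii\theta}g)}$ for a suitable $\theta$, and $\norm{\mathrm{Re}(\ee^{\ii\theta}g)}_{\mathcal Y_a}\leq\norm{g}_{\mathcal Y_a}$.
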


\begin{proof} Let us present the proof in the torus case. Suppose first that $\min a >0$ and set
\[
    f(s)\coloneqq a(s)e^{i\chi(s)}, \quad \chi(s) = \pi \cS(\rho) \int_0^s a^{-2}(t) \dd t. 
\]
Then $f$ is anti-periodic and
\begin{align}
    \delta_{\rm P}^\T(\rho)
    =
    \inner{f,L_\pi f}, \quad\text{where}\quad L_\pi = -\partial_s^2 - \pi^2. 
    \label{eq:deficit_spectral_form}
\end{align}
Moreover, a direct computation gives
\begin{align}
    \mathcal R_{\rm P}^\T[\rho]
    =
    a\,\Re\!\left[
        e^{-i\chi}
        (-\partial_s^2-\pi^2)f
    \right] = \mathrm{Re} \overline{f} (-\partial_s^2 - \pi^2) f. 
    \label{eq:residual_spectral_form}
\end{align}
Thus, since $L_\pi \geq 0$ by the spectral decomposition of the anti-periodic Laplacian (see Lemma~\ref{lem:explicit Laplacian}), for any real-valued $g\in \mathcal{Y}_a$, it follows from Cauchy-Schwarz that
\begin{align*}
   |\inner{g, \mathcal{R}_P^\T(\rho)}|^2 =  \left|\mathrm{Re} \, \inner{g f, L_\pi f}\right|^2 \leq \inner{g f, L_\pi g f}\,   \delta_P(\rho). 
\end{align*}
To conclude, we can combine the above inequality with the estimate
\begin{align*}
\inner{ f g, L_\pi fg} &= \int_0^1 |(g a)'|^2 + \pi^2 \cS(\rho)^2 (g/a)^2 - \pi^2 (ga)^2 \\
&\leq \left((1+ \norm{a'}_{\mL^2})^2  + \pi^2 \cS(\rho) \right) \norm{g}_{\mathcal{Y}_a}^2,
\end{align*}
take the supremum over real-valued $g \in \mathcal{Y}_a$ with $\norm{g}_{\mathcal Y_a}\leq 1$, and recall that $\norm{a'}_{\mL^2}^2 = 4 G(\rho)$ to obtain the desired bound. The case where $a$ vanishes at some $s_\ast \in (0,1)$ follows by taking $f = a$ for $s\leq s_\ast$ and $f= -a$ otherwise and repeating the same arguments with the convention that $\cS(\rho) = 0$. In case $a$ vanishes at the endpoints of $(0,1)$, we can simply take $f=a$ as $a$ is already anti-periodic.

The case $\Omega = \R$ follows the analogous steps with the simplification that $\mathcal{R}_P^\R[\rho] = a L_\pi a$ so that we can use we $a(s)$ directly instead of $f(s)$.
\end{proof}

We can now complete the proof of Theorem~\ref{thm:local kinetic-energy density}.

\begin{proof}[Proof of Theorem~\ref{thm:local kinetic-energy density}] By the identities in Lemma~\ref{lem:Ya topology} and the estimates in Lemma~\ref{lem:weak_local_Fermi} and Lemma~\ref{lem:poincare residual bound},
\begin{align}
    |\inner{\tau_N - \tau_N^{\rm TFW}, g\circ S}_{x}| = |\inner{e_N- e_N^{\rm TFW}, g}_s| \leq C_\rho \norm{g}_{\mathcal{Y}_a}, \quad g\in \mathcal{Y}_a,\label{eq:main est}
\end{align}
where
\[ C_\rho = N\left(\delta_P(\rho) + \frac
{\sqrt{\delta_P(\rho)}}{3} \left(5+4\sqrt{G(\rho)} + 2\pi \sqrt{\cS(\rho)}\right)\right). \]
Since $S^\#:\mathcal{Y}_a\rightarrow \mathcal{Y}_\rho$ is an isomorphism by Lemma~\ref{lem:Ya topology}, the result follows by taking the supremum over $g$ with $\norm{g}_{\mathcal{Y}_a} \leq 1$. 
\end{proof}

\section{Approximation of the inverse Kohn--Sham potential}
\label{sec:KS potential}

We now turn to the proofs of Theorem~\ref{thm:KS_high_density} and Corollary~\ref{cor:KS_Poincare_manifold}. For this, we shall use the following version of the force-balance equation.  

\begin{lemma}[Stationary force-balance equation]
\label{lem:force_balance}
Let $\rho \in \mathcal{D}_1^\T$, $N\in \N$, $-v_{\rm KS}(N\rho) \in\partial T_{\rm L}^\T(N\rho) \subset \mH^{-1}(\T)/\{1\}$, and $\tau_N$ be the exact kinetic-energy density defined in~\eqref{eq:exact kinetic-energy density}. Suppose that $\rho \in \mH^2(\T)$. Then, 
\begin{align}
     \inner{v_{\rm KS}(N\rho),(N \rho \xi)'}
    = -\inner{ 2 \tau_N - \frac{N}{2} \rho'', \xi'}, \quad \mbox{for every $\xi \in \mH^2(\T;\R)$.}
    \label{eq:force_balance}
\end{align}
Similarly, if we define the inverse TFW potential as
\begin{align}
    v_{\rm TFW}(N\rho)
    \coloneqq
    -\pi^2N^2\rho^2
    -
    \frac13\frac{(\sqrt\rho)''}{\sqrt\rho}
    -
    \theta_N\frac{\pi^2 \cS(\rho)^2}{\rho^2},
    \label{eq:TFW potential}
\end{align}
then we have
\begin{align}
    \inner{v_{\rm TFW}(N\rho),(N\rho \xi)'}
    = -\inner{2 \tau_N^{\rm TFW} - \frac{N}{2} \rho'', \xi'}, \quad \mbox{for every $\xi \in \mH^2(\T;\R)$,}
    \label{eq:TFW force_balance}
\end{align}
where $\tau_N^{\rm TFW}$ is the approximate kinetic-energy density defined in~\eqref{eq:corrected_physical_kinetic_density}.
\end{lemma}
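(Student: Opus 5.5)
The plan is to obtain \eqref{eq:force_balance} as the first variation of the ground-state energy along a family of unitary dilations. The identity \eqref{eq:TFW force_balance} should then follow by a direct term-by-term computation. Throughout I use $N\rho$ as the physical density, with $\sqrt\rho\in\mH^1$ and $\rho\in\mH^2(\T)$, $\rho>0$.

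\textbf{Step 1: variational characterization.} Since $-v:=-v_{\rm KS}(N\rho)\in\partial T_{\rm L}^\T(N\rho)$, for every admissible density $\tilde\rho$ we have
\[
T_{\rm L}^\T(\tilde\rho)\geq T_{\rm L}^\T(N\rho)-\inner{v,\tilde\rho-N\rho}.
\]
Combined with $T[\tilde\gamma]\geq T^\T_{\rm L}(\rho_{\tilde\gamma})$ and $T[\gamma]=T_{\rm L}^\T(N\rho)$, this says that $\gamma$ minimizes
\[
\tilde\gamma\mapsto T[\tilde\gamma]+\inner{v,\rho_{\tilde\gamma}}
\]
among all $\tilde\gamma\in\mathfrak S_N(\T)$.

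\textbf{Step 2: the dilation family.} Fix a real $\xi\in C^\infty(\T)$, and later pass to $\xi\in\mH^2$ by density. For $|t|$ small, $\Phi_t(x)=x+t\xi(x)$ is a diffeomorphism of $\T$. The map $(U_tf)(x)=\sqrt{\Phi_t'(x)}\,f(\Phi_t(x))$ is unitary. Set $\gamma_t=U_t\gamma U_t^\ast$. Then $\gamma_t\in\mathfrak S_N(\T)$ and its density is $\rho_t=N\Phi_t'\,\rho\circ\Phi_t$. Since $\rho\in\mH^2$, the map $t\mapsto\rho_t$ is differentiable in $\mH^1$ with
\[
\partial_t\rho_t|_{t=0}=N(\xi\rho)'.
\]
Hence $t\mapsto\inner{v,\rho_t}$ has derivative $\inner{v,(N\rho\xi)'}$ at $t=0$. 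This is exactly where the $\mH^2$ regularity enters, since $v$ is only in $\mH^{-1}$.

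Next, write $\psi_{j,t}=U_t\psi_j$ and change variables $y=\Phi_t(x)$. This expresses $T[\gamma_t]$ as an integral involving only $\psi_j,\psi_j'$ and the smooth coefficients $\Phi_t'\circ\Phi_t^{-1}$ and $\Phi_t''\circ\Phi_t^{-1}$. Differentiating at $t=0$ and using $\sum_j|\psi_j|^2=N\rho$, I expect the standard virial-type formula
\[
\frac{\dd}{\dd t}T[\gamma_t]\Big|_{t=0}=-2\int\tau_N\xi'+\frac N2\int\rho''\xi'.
\]
The last term is written as $-\frac N2\int\rho'\xi''$ before integrating by parts. Minimality at $t=0$ gives a vanishing derivative, which is \eqref{eq:force_balance} for smooth $\xi$. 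Both sides are continuous in $\xi\in\mH^2$: $\tau_N\in\mL^1$ pairs with $\xi'\in\mL^\infty$, and $\rho\in\mH^2$. So the identity extends by density.

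\textbf{Step 3: the TFW identity.} I verify \eqref{eq:TFW force_balance} term by term, integrating by parts on $\T$. For the local term,
\[
\int-\pi^2N^2\rho^2(N\rho\xi)'=\tfrac{2\pi^2N^3}{3}\int(\rho^3)'\xi=-2\int\tfrac{\pi^2}{3}N^3\rho^3\xi'.
\]
For the shell term,
\[
\int\rho^{-2}(\rho\xi)'=2\int\rho^{-2}\rho'\xi=-2\int\rho^{-1}\xi',
\]
which matches $-2\theta_NN\pi^2\cS^2\rho^{-1}$. For the von Weizs\"acker term I use the one-dimensional identity
\[
-\rho\Big(\frac{(\sqrt\rho)''}{\sqrt\rho}\Big)'=-\tfrac12\rho'''+\Big(\frac{|\rho'|^2}{2\rho}\Big)'.
\]
This should reproduce the contribution of $\tfrac N3\rho''-\tfrac N{12}|\rho'|^2/\rho$ together with the $-\tfrac N2\rho''$ term. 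All these pairings are well defined because $\rho\in\mH^2$ and $\rho\geq\min\rho>0$.

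\textbf{Main obstacle.} I expect the hardest part to be Step 2. One must justify differentiating $T[\gamma_t]$ when the orbitals are only in $\mH^1$, without ever producing $\psi_j''$. The change of variables is meant to handle this. One must also keep the bookkeeping of the $\frac N2\rho''$ term and its sign straight. I would first do the calculation for a single orbital and then sum using $\sum_j|\psi_j|^2=N\rho$ and $2\sum_j\Re\overline{\psi_j}\psi_j'=N\rho'$.
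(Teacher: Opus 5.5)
Your argument follows the paper's proof. You transport the minimizer along $\Phi_t(x)=x+t\xi(x)$ and turn the subgradient condition into minimality of $t\mapsto T[\gamma_t]+\inner{v,\rho_t}$ at $t=0$. You then compute the kinetic variation after the change of variables $y=\Phi_t(x)$, so no $\psi_j''$ appears, and verify \eqref{eq:TFW force_balance} by hand. Treating smooth $\xi$ first and extending to $\mH^2$ by density is a harmless, slightly cleaner variant of the paper's direct treatment.

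There is, however, a sign error in Step 2 that, taken literally, gives the wrong identity. You transport by $(U_tf)(x)=\sqrt{\Phi_t'(x)}\,f(\Phi_t(x))$, and for this map you correctly find $\partial_t\rho_t|_{t=0}=N(\rho\xi)'$. But the kinetic variation for the same map is
\[
\frac{\dd}{\dd t}T[\gamma_t]\Big|_{t=0}=2\int_\T\tau_N\xi'+\frac N2\int_\T\rho'\xi''=\inner{2\tau_N-\tfrac N2\rho'',\xi'},
\]
not $-2\int\tau_N\xi'+\frac N2\int\rho''\xi'$. As a sanity check, for $\xi(x)=x$ on $\R$ the map $U_t$ compresses by the factor $1+t$, so $T[\gamma_t]=(1+t)^2T[\gamma]$. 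The formula you wrote belongs to the inverse transport $f\mapsto(\Phi_t'\circ\Phi_t^{-1})^{-1/2}\,f\circ\Phi_t^{-1}$. That is the one the paper uses, and for it the density derivative is $-N(\rho\xi)'$. Mixing the two conventions, the vanishing derivative yields $\inner{v,(N\rho\xi)'}=+\inner{2\tau_N-\tfrac N2\rho'',\xi'}$, the opposite of \eqref{eq:force_balance}. With either convention used consistently, the two signs flip together and you recover the stated identity.

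This is not cosmetic. The proof of Theorem~\ref{thm:KS_high_density} subtracts \eqref{eq:force_balance} from the independently verified \eqref{eq:TFW force_balance}, so the two must carry the same sign. With the wrong sign one would get, to leading order, $v_{\rm KS}(N\rho)\approx+\pi^2N^2\rho^2$.

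A similar slip occurs in your shell term in Step 3. Since $\rho^{-2}\rho'=-(\rho^{-1})'$, one has $\int\rho^{-2}(\rho\xi)'=2\int\rho^{-2}\rho'\xi=+2\int\rho^{-1}\xi'$, not $-2\int\rho^{-1}\xi'$. The match with $-2\theta_NN\pi^2\cS(\rho)^2\rho^{-1}$ then comes from the minus sign in front of $\theta_N$ in $v_{\rm TFW}$. Your local-term and von Weizs\"acker computations, including the identity for $\rho\,\bigl((\sqrt\rho)''/\sqrt\rho\bigr)'$, are correct.
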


\begin{proof} Fix a periodic real-valued function $\xi \in \mH^2(\T)$.  For $|t|$ sufficiently small, the map
\[
    \Phi_t(x)\coloneqq x+t\xi(x)
\]
is an orientation-preserving diffeomorphism of $\T$ with Jacobian  $J_t(x)\coloneqq\Phi_t'(x)=1+t\xi'(x)$. Then, we can define 
\[ \gamma_t = \sum_j \ketbra{\psi_{j,t}}{\psi_{j,t}} \quad \text{via the pull-back}\quad \psi_{j,t}(x) \coloneqq \frac{\psi_j(\Phi_t^{-1}(x))}{\sqrt{J_t(\Phi_t^{-1}(x))}}. \]
Note that $\psi_{j,t} \in \mH^1(\T)$ and a direct change of variables gives $\inner{\psi_{j,t},\psi_{k,t}} = \delta_{jk} \norm{\psi_j}^2$ so $\gamma_t$ is an admissible trial state with $\Tr \gamma_t = \Tr \gamma = N$ and density
\begin{align*}
    N \rho_t(x) = N \frac{\rho(\Phi_t^{-1}(x))}{J_t(\Phi_t^{-1}(x))}. 
\end{align*}
Moreover, differentiating at $t=0$ gives
\begin{align}
    \lim_{t\rightarrow 0} \frac{\rho_t- \rho_0}{t} =  - (\rho \xi)',
    \label{eq:density_first_variation}
\end{align}
where the limit is taken in the $\mH^1(\T)$ sense. This follows from a standard Taylor expansion plus dominated convergence argument. Moreover, we emphasize that, at this step, it is crucial that $\rho \in \mH^2(\T)$. 

We now compute the kinetic-energy variation.  From the definition of $\psi_{j,t}$,
\[
    \psi_{j,t}'(\Phi_t(x))
    =
    \frac{\psi_j'(x)}{J_t(x)^{3/2}}
    -
    \frac12
    \frac{\psi_j(x)J_t'(x)}{J_t(x)^{5/2}}.
\]
Hence, after the change of variables $y=\Phi_t(x)$,
\begin{align*}
    \int_\T|\psi_{j,t}'(y)|^2\,\dd y
    &=
    \int_\T\frac{|\psi_j'|^2}{J_t^2}
    -
    \int_\T\frac{J_t'}{J_t^3}
    \Re\!\left(\overline{\psi_j}\psi_j'\right)
    +
    \frac14
    \int_\T\frac{(J_t')^2}{J_t^4}|\psi_j|^2 .
\end{align*}
Since $J_t=1+t\xi'$ and $J_t'=t\xi''$, we obtain
\begin{align}
    \left.
    \frac{\dd}{\dd t}
    \right|_{t=0}
    \Tr(-\partial_x^2\gamma_t)
    &=\sum_{j} \left.
    \frac{\dd}{\dd t}
    \right|_{t=0}
    \int_\T|\psi_{j,t}'|^2
    =
    -\sum_j \int_\T2\xi'|\psi_j'|^2
    + \xi''
    \Re\!\left(\overline{\psi_j}\psi_j'\right)
    \nonumber \\
    &=
    -2\int_\T\xi'\tau_N
    -
    \frac12\int_\T\xi''(N \rho)' = -\inner{2 \tau_N - \frac{N}{2} \rho'', \xi'}. \label{eq:kinetic first variation}
\end{align}
Now, since $-v_{\rm KS}(N\rho) \in \partial T_{\rm L}^\T(N\rho)$, the subgradient inequality gives
\[
    T_{\rm L}^\T(N \rho_t)
    -
    T_{\rm L}^\T(N \rho)
    \geq -
    N\inner{v_{\rm KS}(N\rho) ,\rho_t-\rho}.
\]
Thus, since $\gamma_t$ is an admissible trial state for $T_{\rm L}^\T(N \rho_t)$, while $\gamma$ minimizes $T_{\rm L}^\T(N\rho)$, 
\[
    \Tr(-\partial_x^2\gamma_t)
    -
    \Tr(-\partial_x^2\gamma) \geq - N \inner{v_{\rm KS}(N\rho), \rho_t - \rho}
\]
Now dividing by $t$, taking the limit $t\rightarrow 0$ from below and above and using~\eqref{eq:density_first_variation} and~\eqref{eq:kinetic first variation} gives \eqref{eq:force_balance}.

To prove equation~\eqref{eq:TFW force_balance}, note that straightforward calculation yields
\[ (\rho^2)' \rho = \frac23 (\rho^3)', \quad \frac13 \left(\frac{(\sqrt{\rho})''}{\sqrt{\rho}}\right)' \rho = \left(\frac{\rho''}{6} - \frac16 \frac{(\rho')^2}{\rho}\right)', \quad \text{and}\quad \left(\frac{1}{\rho^2}\right)' \rho = 2\left(\frac{1}{\rho}\right)', \]
with multiplications well-defined in $\mH^{-1}(\T)$ under the assumption $\rho \in \mH^2(\T)$. Hence,
\begin{align*} v_{\rm TFW}(N\rho)' N\rho &= - \frac{2 \pi^2}{3} N^3 (\rho^3)' - N \left(\frac{\rho''}{6} - \frac16 \frac{(\rho')^2}{\rho}\right)' - \theta_N 2 \pi^2 N\cS(\rho)^2 \left(\frac{1}{\rho}\right)'\\
&= -\left(2\tau_N^{\rm TFW} - \frac{N}{2} \rho''\right)',\end{align*}
which completes the proof. 
\end{proof}

We can now complete the proof of Theorem~\ref{thm:KS_high_density}.

\begin{proof}[Proof of Theorem~\ref{thm:KS_high_density}]
Let $\eta \in \mH^1(\T)$ with $\int_\T \eta =0$ and define
\begin{align*}
    \xi(x) \coloneqq \frac{1}{\rho(x)} f(x) ,\qquad f(x) \coloneqq \int_0^x \eta(y) \dd y. 
\end{align*}
Then, 
\[ \xi' = \frac{\eta}{\rho} - \frac{\rho'}{\rho^2} f \quad \text{and}\quad \xi'' = \frac{\eta'}{\rho} - \frac{2\rho'}{\rho^2} \eta +\left(2 \frac{(\rho')^2}{\rho^3} - \frac{\rho''}{\rho^2} \right) f. \]
We can now easily estimate the $\mathcal{Y}_\rho$ norm of $\xi'$ in terms of $\rho$ and the $\mH^1$-norm of $\eta$. To obtain an estimate with as few powers of $m \coloneqq \norm{1/\rho}_\infty$ as possible,  we first note that
\begin{align*}
    \int_\T \rho \left(2 \frac{(\rho')^2}{\rho^3} - \frac{\rho''}{\rho^2} \right)^2 \left(\int_0^x \eta\right)^2 \dd x &\leq \norm{\eta}_2^2 \int_\T \rho \left(2 \frac{(\rho')^2}{\rho^3} - \frac{\rho''}{\rho^2} \right)^2 \\
    &= \norm{\eta}_2^2 \left(\int_\T \frac{(\rho'')^2}{\rho^3} - \frac43 \int_\T \frac{(\rho')^4}{\rho^5}\right)  \\
    &\leq  m^3 \norm{\eta}_2^2 \norm{\rho''}_2^2, 
\end{align*}
where we used integration by parts with the equality $3(\rho')^2 \rho'' = ((\rho')^3)'$ to obtain the second identity. From this inequality, the triangle inequality and H\'older's inequality, 
\begin{align}
    \norm{\xi'}_{\mathcal{Y}_\rho} = \left(\norm{\xi'}_{\mL^\infty}^2 + \int_\T \rho |\xi''|^2\right)^{\frac12} 
    \leq& m \norm{\eta}_\infty + m^2 \norm{\rho'}_\infty \norm{\eta}_2 + m^{\frac12} \norm{\eta'}_2 \nonumber \\
    &+ m^{\frac32} \norm{\eta}_2 \left(2 \norm{\rho'}_\infty + \norm{\rho''}_2\right) \nonumber \\
    \leq& m^2 \left(4 C_{\rm Sob} +2\right) \norm{\rho}_{\mH^2} \norm{\eta}_{\mH^1},  \label{eq:some est}
\end{align}
where we used the estimates $m\geq 1$ and $\norm{\rho}_{\mH^2} \geq \norm{\rho}_{\mL^2} \geq 1$ (which holds because $\int_\T \rho = 1$), to simplify the estimate in the last line. 

To conclude, note that $(\rho \xi)' = \eta$, and therefore, by the force-balance equation in Lemma~\ref{lem:force_balance}, the local kinetic energy estimate in Theorem~\ref{thm:local kinetic-energy density} and~\eqref{eq:some est},
\begin{align*}
    |\inner{v_{\rm KS}(N\rho)-v_{\rm TFW}(N\rho), \eta }| &= \frac{2}{N} |\inner{\tau_N - \tau_N^{\rm TFW}, \xi'}|  \\
    &\leq D_\rho \left(\sqrt{\delta^\T_P(\rho)} + \delta^\T_P(\rho)\right) \norm{\eta}_{\mH^1} ,
\end{align*}
where $D_\rho$ is the constant in Theorem~\ref{thm:KS_high_density}. Hence, taking the supremum over all $\eta$ with $\norm{\eta}_{\mH^1}\leq 1$ and $\int_\T\eta =0$ completes the proof. 
\end{proof}

We now prove Corollary~\ref{cor:KS_Poincare_manifold}.

\begin{proof}[Proof of Corollary~\ref{cor:KS_Poincare_manifold}]
For the periodic case, recall that
\[
    \mathrm{sn}(x | m ) = \sin(\mathrm{am}(x  | m ) ), \quad \mathrm{cn}(x|m) = \cos(\mathrm{am}(x| m)),\]
and
\[ \mathrm{dn}(x \, | \, m):=\sqrt{1-m \sin^2 \left(\mathrm{am}\big(x\,|\,m\big)\right)}.
\]
Then, using the identity 
\[ \mathrm{am}'(x\, | m) = \mathrm{dn}(x \, | \, m)  \quad \text{(which follows from the definition of $\mathrm{am}$)},\]
we find
\begin{align*} 
\mathrm{dn}'(x \, | \, m) = -m\, \mathrm{sn}(x|m)\, \mathrm{cn}(x|m) \quad \text{and}\quad \mathrm{dn}''(x\, | \, m) = m \, \mathrm{dn}(x\, | \, m) \left(\mathrm{sn}^2(x|m) - \mathrm{cn}^2(x|m)\right). \end{align*} 
Thus, since $\rhot{m}(x) = 2K(m)/\pi \,  \mathrm{dn}(2K(m) x \, | \, m)$, we have 
\begin{align*}
    \frac{(\sqrt{\rhot{m}})''}{\sqrt{\rhot{m}}}(x)
    &= K(m)^2 \left(2 \frac{\mathrm{dn}''}{\mathrm{dn}}-\left(\frac{\mathrm{dn}'}{\mathrm{dn}}\right)^2 \right)(2 K(m) x \, | \, m) \\
    &= 4 K(m)^2 \left(\frac{2-m}{4} - \frac{3}{4} \mathrm{dn}^2(2K(m) x\, | \, m) - \frac{m-1}{4 \mathrm{dn}^2(2K(m) x \, | \, m)}\right).
\end{align*}
Moreover,
\[
    \mathcal S(\rhot{m})
    =\frac{4 K(m)^2 \sqrt{1-m}}{\pi^2} \quad \text{and therefore} \quad \pi^2 \frac{\cS(\rhot{m})^2}{(\rhot{m})^2} = 4 K(m)^2 \frac{1-m}{\mathrm{dn}^2(2K(m) x  \, | \ m)}.
\]
Therefore, up to constants, substitution into~\eqref{eq:second_order_potential_main} yields
\begin{align*}
    v_{\rm TFW}(N\rhot{m})
    =&- 4 K(m)^2\biggr(  \left(N^2 - \frac14\right) \mathrm{dn}^2(2K(m) \, x \, |\, m) + \left(\frac{1}{12} + \theta_N\right) \frac{1-m}{\mathrm{dn}^2(2K(m)\, x \,| \, m)}
    \biggr).
\end{align*}
Finally, we can use the identities
\[
    \mathrm{dn}^2(u|m)=1-m\,\mathrm{sn}^2(u\, |\, m),
    \qquad
    \frac{1-m}{\mathrm{dn}^2(u|m)}
    =
    1-m\frac{\mathrm{cn}^2(u | m)}{\mathrm{dn}^2(u\, |\, m)}
\]
to obtain~\eqref{eq:Lame_inverse_potential}. The fact that this is exactly the Kohn-Sham potential follows from Theorem~\ref{thm:KS_high_density} and Proposition~\ref{prop:zero deficit}.

For the line case, using the definition
\[
    \rhor{\mu}(x)=\frac{1}{\pi} \frac{\mu}{\cosh(\mu  x)},
\]
a direct computation gives
\[
    \frac{(\sqrt{\rhor{\mu}})''}{\sqrt{\rhor{\mu}}}(x)
    = -\frac{\mu^2 }{2} + \frac{3\mu^2}{4} \frac{\sinh(\mu x)^2}{\cosh(\mu x)^2} = 
    \frac{\mu^2}{4}
    -\frac{3\mu^2}{4}\frac{1}{\cosh^2(\mu x)}.
\]
Since $3c_{\rm TF}=\pi^2$, it follows that
\[
    v_{\rm TFW}(N\rhor{\mu}) = -3c_{\rm TF}N^2(\rhor{\mu})^2
    -\frac13\frac{(\sqrt{\rhor{\mu}})''}{\sqrt{\rhor{\mu}}}
    =
    -\frac{\mu^2}{12}
    -
    \mu^2\left(N^2-\frac14\right)\frac{1}{\cosh^2(\mu x)}.
\]
In this case, we cannot show that $v_{\rm KS}(N\rhor{\mu}) = v_{\rm TFW}(N\rhor{m})$ via Lemma~\ref{lem:force_balance} as we do not have a priori existence of a potential. However, one can explicitly compute the first $N$ eigenfunctions of $-\Delta_\R + v_{\rm TFW}(N\rhor{\mu})$ and verify that their density is exactly $N\rhor{\mu}$. This proves~\eqref{eq:PT_inverse_potential}.
\end{proof}

\addtocontents{toc}{\protect\setcounter{tocdepth}{-1}}
\section*{Acknowledgements}

T.~C.~Corso acknowledges funding by the \emph{Deutsche Forschungsgemeinschaft} (DFG, German Research Foundation) - Project number 572811220. 

\section*{Declaration of generative AI use}
The authors acknowledge the use of generative artificial intelligence (AI) tools during the preparation of this manuscript, particularly for language editing, exploring and organizing mathematical arguments, and preliminary verification of bibliographic information. All mathematical proofs were carefully written and proofread by the authors, who take full responsibility for the content of the paper.

\addtocontents{toc}{\protect\setcounter{tocdepth}{2}}
\appendix

\section{Proof of Remark~\ref{rem:finite-rank}}
We now prove the finite rank optimality in Remark~\ref{rem:finite-rank}.

\begin{proof}[Proof of Remark~\ref{rem:finite-rank}]
Let $0\neq \gamma \in \mathfrak S(\R)$ with $\rank \gamma \leq N$, and write 
\[ \gamma = \sum_{j=1}^N \lambda_j \ketbra{\psi_j}{\psi_j}, \qquad 0\leq \lambda_j \leq 1 ,\qquad \inner{\psi_j,\psi_k} = \delta_{jk}.  \]
Let us first assume that each $\psi_j$ has compact support. Then, let $u\in C^\infty_c(\R;\R)$ with $\norm{u} = 1$ and, for each $\varepsilon>0$, choose $x_{j,\varepsilon}\in \R$ so large that the functions
\[u_j(x) \coloneqq \varepsilon^{\frac12} u(\varepsilon(x-x_{j,\varepsilon})) \]
have mutually disjoint support and have support disjoint from each $\psi_j$. Then, the operator
\[ \gamma_\varepsilon \coloneqq \sum_{j=1}^N \left|\sqrt{\lambda_j} \psi_j + \sqrt{1-\lambda_j} u_j \right\rangle \left\langle \sqrt{\lambda_j} \psi_j + \sqrt{1-\lambda_j} u_j \right| \]
belongs to $\mathfrak S_N(\R)$ and has rank $N$. Moreover, the disjoint support condition implies that
\[ \rho_{\gamma_\varepsilon}\geq \rho_\gamma, \quad T[\gamma_\varepsilon] =  T[\gamma] + \varepsilon^2 (N-\Tr \gamma) \norm{u'}_{\mL^2}^2, \quad \text{and}\quad G(\rho_{\gamma_\varepsilon}) = G(\rho_\gamma) + \varepsilon^2 (N-\Tr \gamma) \norm{u'}_{\mL^2}^2,\]
and therefore
\[ \frac{T[\gamma_\varepsilon] + 3 G(\rho_{\gamma_\varepsilon})}{\int \rho_{\gamma_\varepsilon}^3} \leq \frac{T[\gamma] + 3G(\rho_\gamma)}{\int \rho_\gamma^3}  +4\varepsilon^2 (N-\Tr \gamma) \frac{\norm{u'}_{\mL^2}^2}{\int \rho^3}.\]
Thus, since $\Tr \gamma_\varepsilon = N$, we can use Theorem~\ref{thm:LT_line} and take the limit $\varepsilon\downarrow 0$ to conclude that
\begin{align} c_N \leq \liminf_{\varepsilon \downarrow 0} \frac{T[\gamma_\varepsilon] + 3G(\rho_{\gamma_\varepsilon})}{\int \rho_{\gamma_\varepsilon}^3} \leq \frac{T[\gamma] + 3G(\rho_\gamma)}{\int \rho_\gamma^3}.\label{eq:compact support ineq} \end{align}

For general $\gamma$, we pick an approximating sequence $\psi_{j,n} \in C^\infty_c(\R)$ with $\psi_{j,n} \rightarrow \psi_j$ in $\mH^1(\R)$ and define
\[ \gamma_n \coloneqq \frac{\sum_{j} \lambda_j \ketbra{\psi_{j,n}}{\psi_{j,n}}}{\norm{\sum_{j} \lambda_j \ketbra{\psi_{j,n}}{\psi_{j,n}}}}.\]
The normalization guarantees that $0\leq \gamma_n \leq 1$ so that~\eqref{eq:compact support ineq} holds for $\gamma_n$. On the other hand, the orbital convergence implies that
\[ \lim_{n\rightarrow \infty} \frac{T[\gamma_n] + 3G(\rho_{\gamma_n})}{\int \rho_{\gamma_n}^3} =\norm{\gamma}^2 \frac{T[\gamma] + 3G(\rho_\gamma)}{\int \rho^3} \leq \frac{T[\gamma] + 3G(\rho_\gamma)}{\int \rho^3}. \] 
The result then follows from~\eqref{eq:compact support ineq} for each $\gamma_n$.
\end{proof}

\section*{Data availability}
No datasets were generated or analyzed during the current study.

\section*{Competing interests}
The authors have no competing interests to declare that are relevant to the content of this article.


\printbibliography
\end{document}